\journal{Journal of \LaTeX\ Templates}
\newtheorem{theorem}{Theorem}
\newtheorem{lemma}{Lemma}
\newtheorem{definition}{Definition}
\providecommand*{\cupdot}{%
	\mathbin{%
		\mathpalette\@cupdot{}%
	}%
}
\newcommand*{\@cupdot}[2]{%
	\ooalign{%
		$\m@th#1\cup$\cr
		\hidewidth$\m@th#1\cdot$\hidewidth
	}%
}
\crefname{theorem1}{Theorem}{}
\crefname{corollary}{Corollary}{}
\crefname{lemma1}{Lemma}{}
\crefname{claim}{Claim}{}
\crefname{definition}{Definition}{}
\crefname{observation1}{Observation}{}
\crefname{section}{Section}{}
\crefname{figure}{Figure}{}
\crefname{problems}{Problem}{}
\newenvironment{breakablealgorithm}
{
	\begin{center}
		\refstepcounter{algorithm}
		\hrule height.8pt depth0pt \kern2pt
		\renewcommand{\caption}[2][\relax]{
			{\raggedright\textbf{\ALG@name~\thealgorithm} ##2\par}%
			\ifx\relax##1\relax 
			\addcontentsline{loa}{algorithm}{\protect\numberline{\thealgorithm}##2}%
			\else 
			\addcontentsline{loa}{algorithm}{\protect\numberline{\thealgorithm}##1}%
			\fi
			\kern2pt\hrule\kern2pt
		}
	}{
		\kern2pt\hrule\relax
	\end{center}
}
\newcommand*{\B}[1]{\ifmmode\bm{#1}\else\textbf{#1}\fi}
\newcommand{\pg}{$\mathsf{Manhattan~network}$}
\newcommand{\planar}{$\mathsf{planar~Manhattan~network}$}
\newcommand{\pgt}{\textit{Manhattan network}}
\newcommand{\mmn}{\textsf{MMN}}
\newmdenv[backgroundcolor=orange!10,
topline=false,
bottomline=false,
rightline=false,
skipabove=\topsep,
skipbelow=\topsep
]{siderules}
\definecolor{Sapphire}{RGB}{15,82,186}
\DeclarePairedDelimiterX{\norm}[1]{\lVert}{\rVert}{#1}
\begin{document}

\begin{frontmatter}


\title{Linear Size Planar Manhattan Network for Convex Point Sets}





%
%
%
%



\author[mymainaddress]{Satyabrata Jana\corref{mycorrespondingauthor}}
\cortext[mycorrespondingauthor]{Corresponding author}
\ead{satyamtma@gmail.com}

\author[mysecondaryaddress]{Anil Maheshwari}
\ead{anil@scs.carleton.ca}

\author[mymainaddress]{Sasanka Roy}
\ead{sasanka.ro@gmail.com}

\address[mymainaddress]{Indian Statistical Institute, Kolkata, India}
\address[mysecondaryaddress]{School of Computer Science, Carleton University, Ottawa, Canada}

\begin{abstract}
	

 Let $G = (V, E)$ be an edge weighted \emph{geometric graph} such that every edge is horizontal or vertical.
 The weight of	an edge  $uv \in E$ is its length. Let $ W_G (u,v)$ denote the length of a shortest path between a pair of vertices $u$ and $v$ in $G$.
The graph $G$ is said to be a \pg~for a given point set $ P $ in the plane
if $P \subseteq V$ and $\forall p,q \in P$, $ W_G (p,q)=\norm{pq }_1$. In addition to $ P$,  graph $G$ may also include a set $T$ of \emph{Steiner points} in its vertex set $V$. In the \pg~problem, the objective is to	construct a \pg~of small size for a set of $ n $ points. This problem was first considered by Gudmundsson et al.\cite{gudmundsson2007small}. They give a construction of a \pg~of size $\Theta(n \log n)$ for general point set in the plane. We say a \pg~is planar if it can be embedded in the plane without any edge crossings. In this paper, we construct a linear size \planar~for convex point set in linear time using $\mathcal{ O}(n)$ Steiner points. We also show that,	even for convex point set, the construction in Gudmundsson et al. \cite{gudmundsson2007small} needs $\Omega (n \log n)$ Steiner points and the network may not be planar.
\end{abstract}

\begin{keyword}
Convex point set \sep  $ L_1 $ norm \sep Manhattan Network \sep Histogram \sep Planar Graph \sep Steiner points \sep Plane Graph
\end{keyword}

\end{frontmatter}


\section{Introduction}
\noindent In computational geometry, constructing a minimum length \pg ~is a well-studied area \cite{gudmundsson1999approximating}. A graph $G=(V,E)$ is said to be a \pg~ for a given point set $ P $ in the plane if $P \subseteq V$ and $\forall p,q \in P$, $ W_G (p,q)=\norm{pq }_1$, where $ W_G (u,v)$ denotes the length of a shortest path between a pair of vertices $u$ and $v$ in $G$. The graph $G$ may also include a set $T$ of Steiner points in its vertex set $V$. A Minimum \pg~(\mmn) problem on $P$ is to construct a  \pg~of minimum possible length. Below in \cref{fig:m1} and \cref{fig:m2}, we show examples of a \pg~and a Minimum \pg~on the same set of points.

\begin{figure}[ht!]
	\centering
	\subfigure[]
	{
		\includegraphics[scale=0.4]{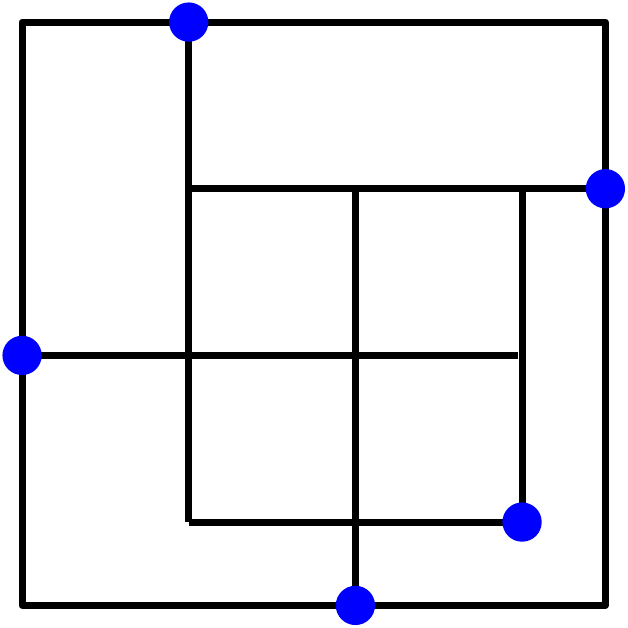}
		\label{fig:m1}
	}
	\hspace{18mm}
	\subfigure[]
	{
		\includegraphics[scale=0.4]{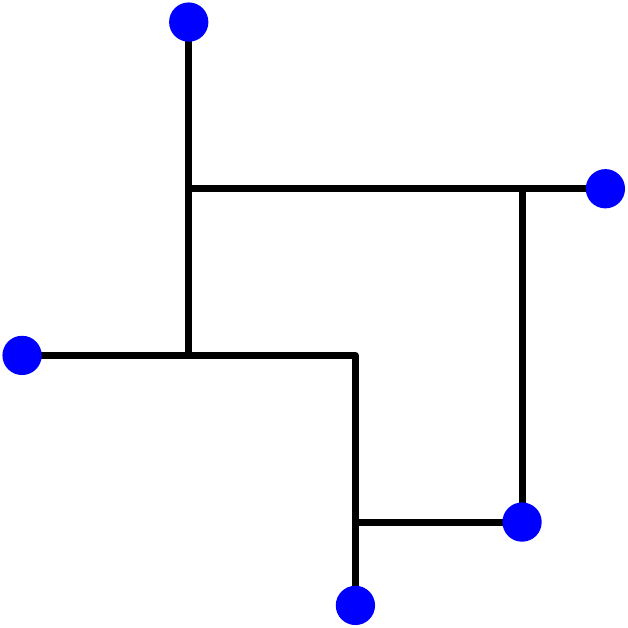}
		\label{fig:m2}
	}
	
	\caption{(a) A Manhattan network, and (b) A minimum Manhattan network.} 
	\label{fig-m}
\end{figure}

\mmn~problem has a wide number of applications in city planning, network layouts, distributed algorithms \cite{narasimhan2007geometric}, VLSI circuit design \cite{gudmundsson1999approximating}, and computational biology \cite{lam2003picking}. The MMN problem was first introduced in 1999 by Gudmundsson et al. \cite{gudmundsson1999approximating}.    Several approximation  algorithms (with factors 4 \cite{gudmundsson2007small}, 2   \cite{kato2002improved}, and 1.5 \cite{seibert20051}) with time complexity $\mathcal{O}(n ^3)$ have been proposed in the last few years. Also, there are $\mathcal{O}(n \log n)$ time  approximation  algorithms with factors 8 \cite{gudmundsson2007small}, 3 \cite{benkert2004minimum}, and 2 \cite{guo2008greedy}. Recently Chin et al. \cite{chin2011minimum} proved that the decision version of the \mmn~problem is strongly NP-complete. Recently,  Knauer et al. \cite{knauer2011fixed} showed that this problem is fixed parameter tractable.

In 2007, Gudmundsson et al. \cite{gudmundsson2007small} considered a variant of the \mmn~problem where the goal is to minimize  the number of vertices(Steiner) and edges.  In $\mathcal{O}(n \log n)$ time, they construct a \pg~with $\mathcal{O}(n \log n)$ vertices and edges using divide and conquer strategy. They also proved that there are point sets in $\mathbb{R}^2$ where every \pg~on these points will need $\Omega(n \log n)$ vertices and edges.



 A set of points is said to be a convex point set  if all of the points are vertices of their convex hull. A {\em plane} \pg~is a \pg~without non-crossing edges. Gudmundsson et al. \cite{gudmundsson2007small} showed that there exists a convex point set for which a {\em plane} \pg~requires $\Omega(n^2)$ vertices and edges. Now we explain the construction of the plane \pg~given by  Gudmundsson et al. \cite{gudmundsson2007small}. To keep it simple, we would use the same notations as they use. Let $P$ be a set  of points in the plane as follows:

$$P=\bigcup_{i=1}^{n-1}\{(\frac{i}{n},0),(\frac{i}{n},1),(0,\frac{i}{n}),(1,\frac{i}{n})\}$$


If $G$ is a plane \pg~of $P$ then there must be a shortest $L_1$ path between every pair of points $(\frac{i}{n},0),(\frac{i}{n},1)$ and $(0,\frac{i}{n}), (1,\frac{i}{n})$. These paths need to be orthogonal straight line segments because in the first case the $x$-coordinates are the same and in the second case the $y$-coordinates are the same. This would force us to add Steiner points at all the $\Theta(n^2)$ intersection points. For an illustration, see \cref{fig-pp1}.

\begin{figure}[ht!]
	
	\subfigure[]
	{
		\includegraphics[scale=0.31]{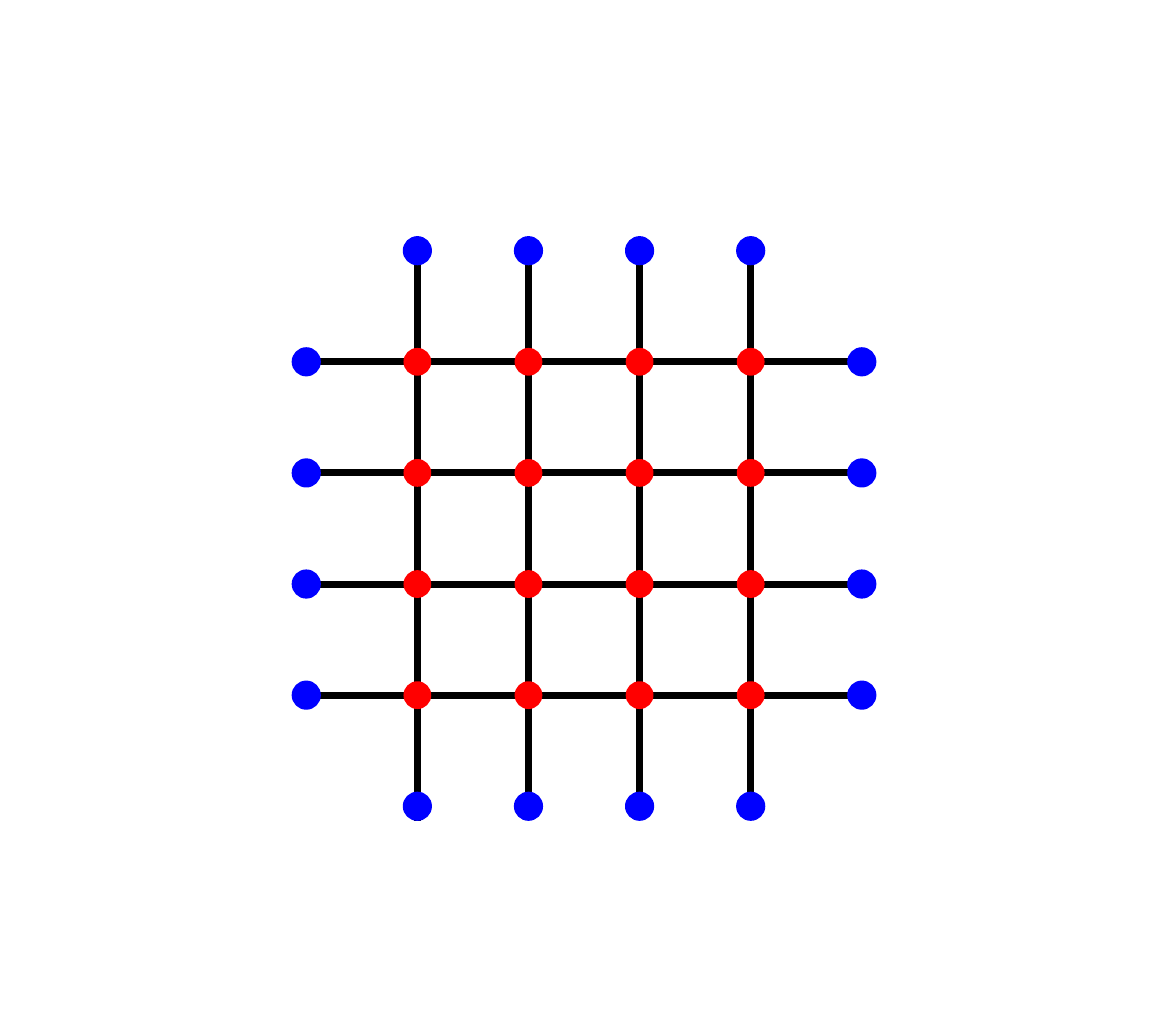}
		\label{fig-pp1}
	}	
	\subfigure[]
	{
		\includegraphics[scale=0.31]{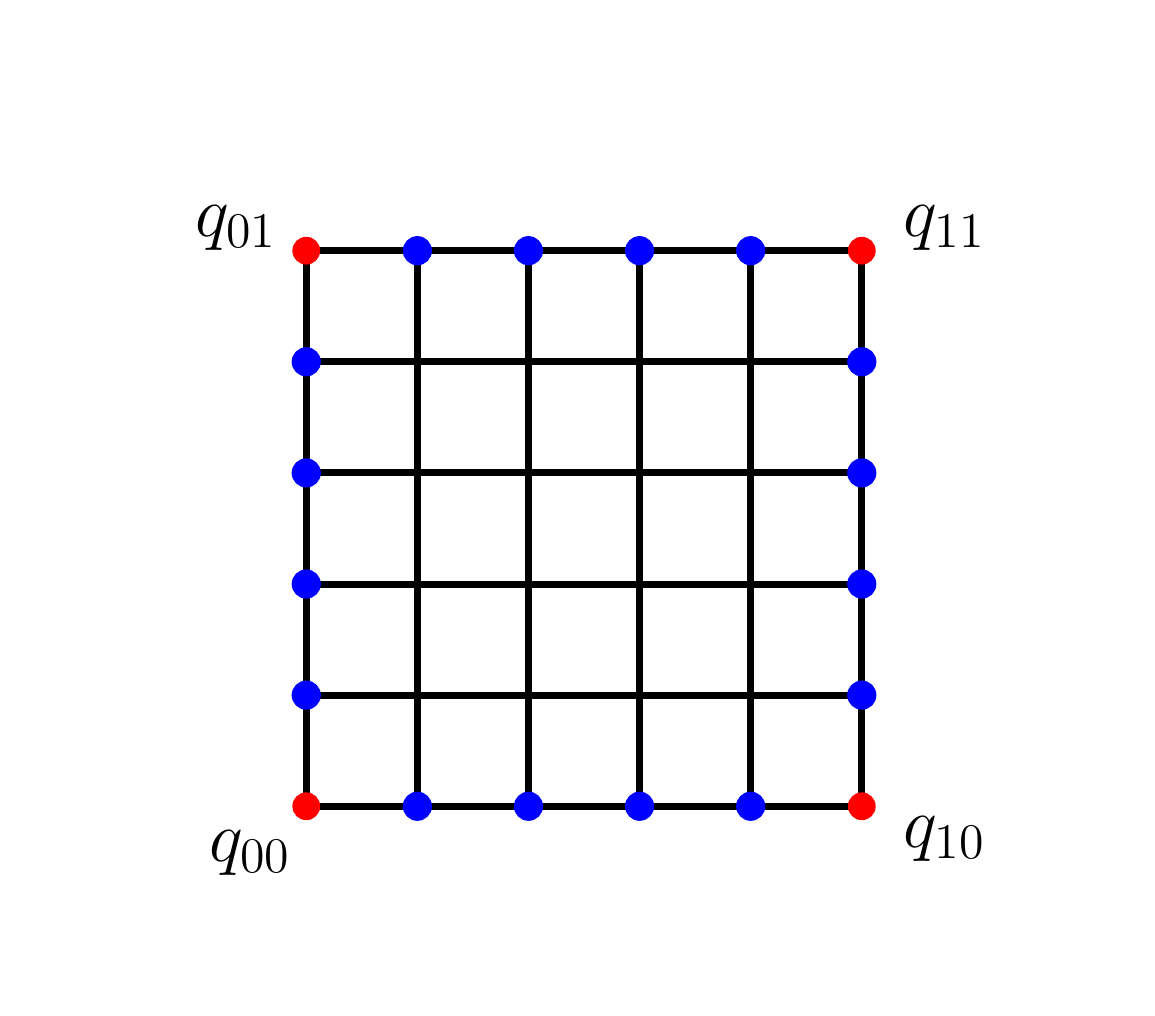}
		\label{fig-pp2}
	}
	\subfigure[]
	{
		\includegraphics[scale=0.31]{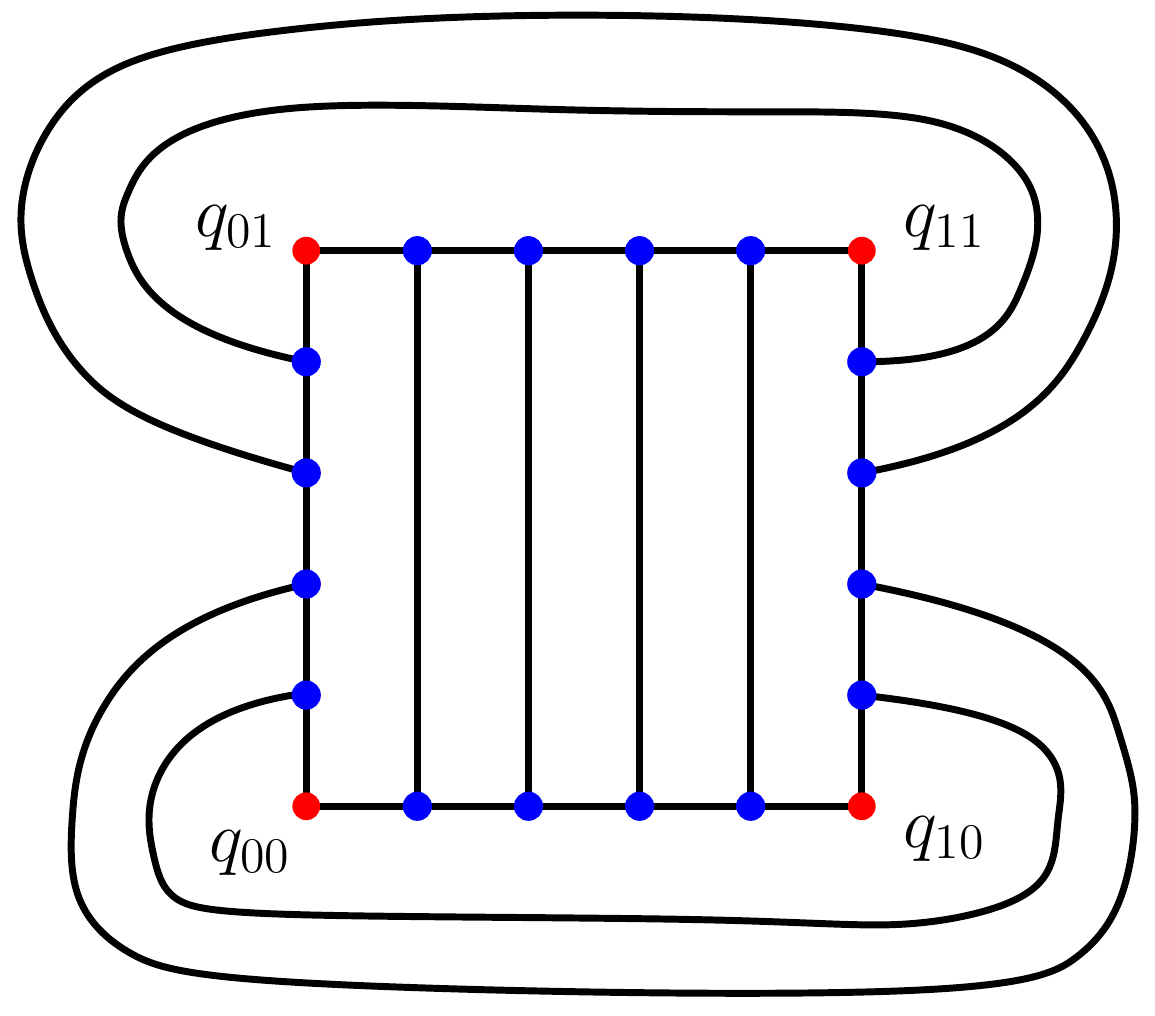}
		\label{fig-pp3}
	}

	\caption{(a) Lower bound construction of plane \pg~of $P$ (b) Planar  \pg~$G^*$ of $P$ and (c) Planar embedding of $G^*$. Blue circles represent the points in $P$ and red circles represent
		Steiner points.} 
	\label{fig-pp}
\end{figure}

A natural question that arises is what if we want the network to be planar (and not necessarily plane).  We say a Manhattan network is planar if it can be embedded in the plane without any edge crossings. For the above example, we can construct a \planar~$ G=(V=P\cup T, E) $ of $ \mathcal{O}(n) $ size as follows: Note that, $P$ lies on the boundary of a square $Q=[(0,0),(0,1)]\times[(1,0),(1,1)]$ (see \cref{fig-pp2}). We add four Steiner points $q_{00}=(0,0)$, $q_{01}=(0,1)$,  $q_{10}=(1,0)$, $q_{11}=(1,1)$, and we  define $T=\{q_{00},q_{01},q_{10},q_{11}\}$. For $ i= 1, 2, \dots , n-1 $, we add the edges between every pair of consecutive points (including these four Steiner points) on the boundary of $Q$. We also add the edges between every pair of points $(\frac{i}{n},0),(\frac{i}{n},1)$ and $(0,\frac{i}{n}), (1,\frac{i}{n})$.
To show that $G$ is a Manhattan network, we prove that $\forall p,q \in P$, $ W_G (p,q)=\norm{pq }_1$. Following is the description of all these paths in $G$. The paths between every pair of points $(\frac{i}{n},0),(\frac{i}{n},1)$ and $(0,\frac{i}{n}), (1,\frac{i}{n})$ is a straight line segment (horizontal and vertical). The paths between every pair of points $(\frac{i}{n},0),(0,\frac{j}{n})$ go through $q_{00}$. Likewise, the paths between every pair of points $(0,\frac{i}{n}),(\frac{j}{n},1)$ go through $q_{01}$, the paths between every pair of points $(0,\frac{i}{n}),(\frac{j}{n},1) $ go through $q_{10}$, the paths between every pair of points $(\frac{i}{n},1),(1,\frac{j}{n}) $ go through $q_{11}$. Between every pair of points $(\frac{i}{n},0),(\frac{j}{n},1) $ there exists a path through $(\frac{i}{n},1)$. Similarly, between every pair of points $(0,\frac{i}{n}),(1,\frac{j}{n}) $ there exists a path through $(1, \frac{i}{n})$.
To show that $G$ is planar, we provide its planar embedding. For the planar embedding of $G$, we keep the edges between every pair of points $(\frac{i}{n},0)$ and $(\frac{i}{n},1)$ inside the interior face of $Q$ and draw the edges between $(0,\frac{i}{n})$ and $(1,\frac{i}{n})$ in the exterior face of $Q$. For an illustration, see \cref{fig-pp3}.

A closely related problem is to construct geometric spanner from a given point set. For a real number $t\geq 1$, a geometric graph $G=(S,E)$ is a $t$-spanner of $S$ if for any two points $p$ and $q$ in $S$, $W_{G}(p,q) \leq t|pq|$. The {\em{stretch factor} } of $G$ is the smallest real number $t$ such that $G$ is a $t$-spanner of $S$. A large number of algorithms have been proposed for constructing $t$-spanners for any given point set \cite{narasimhan2007geometric}. Keil et al. \cite{Keil1989} showed that the Delaunay triangulation of $S$ is a 2.42-spanner of $ S $. For convex point sets, Cui et al. \cite{Cui2011}  proved that the Delaunay triangulation has a stretch factor of at most 2.33. Xia \cite{Xia2013}  provides a 1.998-spanner for general point sets. Steiner points have also been used for constructing spanners. For example, Arikati et al. \cite{arikati1996planar} use Steiner points to answer exact shortest path queries between any two vertices of a geometric graph.  Authors \cite{arikati1996planar} consider the problem of finding an obstacle-avoiding $L_{1}$ path between a pair of query points in the plane. They find a $ (1+\epsilon) $ spanner with space complexity $ \mathcal{O}(n^2/\sqrt{r}) $, preprocessing time $\mathcal{O}(n^2/\sqrt{r}) $ and $\mathcal{O}(\log n + \sqrt{r} )$ query time, where $ \epsilon $ is an arbitrarily small positive constant and $r$ is an arbitrary integer, such that $1 < r < n$. 
Recently, Amani et al. \cite{amani2016plane} show how to compute a plane 1.88-spanner in $L_2$ norm for convex point sets in $\mathcal{O}(n)$ time without using Steiner points.   For a general point set of size $ n $, Gudmundsson et al. \cite{gudmundsson2007small} construct a $\sqrt2$-spanner (may not be planar) in $L_2$ norm and its size is $\mathcal{O}(n \log n)$. But as a corollary of our construction in this paper, for a convex point set, we obtain a planar $\sqrt2 $ spanner in $L_2$ norm using $\mathcal{O}(n)$ Steiner points. The \mmn~problem for a point set is same as the problem of finding a 1-spanner in $L_1$-metric \cite{chin2011minimum}. Given a rectilinear polygon with $n$ vertices, in linear time, Schuierer \cite{schuierer1996optimal} constructs a data structure  that can report the shortest path (in  $ L_1$ -metric) for any pair of query points in that polygon  in $\mathcal{O}(\log+k)$ time where $k$ is the number of segments in the shortest path. De Berg \cite{de1991rectilinear} shows that given two arbitrary points inside
a polygon, the $L_1$-distance between them can be reported in $\mathcal{ O}(\log n)$ time. In this paper, we consider the following problem.

\begin{tcolorbox}[colback=red!5!white]
	\noindent {\bf{\color{red!50!black} \pgt~problem}}\\
	{\bf Input:} A set $S$ of $n$ points in convex position.\\
	{\bf Goal:} To construct a linear size \planar.
\end{tcolorbox}


\newpage\subsection{Our Contributions}
\noindent

$\bullet$ In linear time, we construct a \planar~$G$  for a convex point set $ S $ of size $ n $.  $G $ uses  $ \mathcal{O}(n) $ Steiner points as vertices.

$\bullet$ We show that the construction in Gudmundsson et al. \cite{gudmundsson2007small} needs $\Omega (n \log n)$ points even for a convex point set and may not result in a planar graph.

\subsection{Organization}
In \cref{sec-pre}, we sketch the $\mathcal{ O}(n \log n)$ construction of Gudmundsson et al. \cite{gudmundsson2007small}. We prove that, even for convex point set, their construction  needs $\Omega (n \log n)$ points. We also show that their construction is not planar by considering a convex point set of 16 points for which their \pg~has a minor homeomorphic to $ K_{3,3}$.  In \cref{sec-cons}, we provide our construction  of $\mathcal{O}(n)$ size \planar~$ G $ for a convex point set $S$.

\section{$\mathsf{Manhattan~Network}$ for General Point Sets} \label{sec-pre}

For general point sets, Gudmundsson et al. \cite{gudmundsson2007small} proved the following theorem.

\begin{theorem}\cite{gudmundsson2007small}\label{theo_gud1}
	Let $ P $ be a set of $ n $ points. A \pgt~of $P$ consisting of $\Theta(n \log n)$ vertices and edges can be computed
	in  $ \mathcal{ O}(n \log n) $ time.
	
\end{theorem}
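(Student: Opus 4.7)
The plan is to prove the theorem by a divide-and-conquer construction along vertical split lines, which is the natural way to exploit the axis-aligned nature of the $L_1$ metric. First I would presort $P$ once by $x$-coordinate and once by $y$-coordinate in $\mathcal{O}(n \log n)$ time, and then thread both orderings through the recursion so that each subproblem receives its input already sorted and the splitting step takes linear time per level.

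At a recursive call on a subset $Q \subseteq P$ of size $m$, the algorithm splits $Q$ into $Q_L$ and $Q_R$ by the median $x$-coordinate, calls itself on each half to obtain Manhattan networks $G_L$ and $G_R$, and then stitches them with a simple highway gadget on the vertical split line $\ell$. Specifically, I would add (i) for every $p \in Q$ a horizontal segment from $p$ to its orthogonal projection $p'$ on $\ell$, and (ii) a single vertical segment on $\ell$ spanning all projections, with a Steiner vertex placed at every $p'$ so that the projections are true graph vertices. The correctness obligation splits into two cases: for a pair $p,q$ lying entirely in one half, an $L_1$-shortest path already exists in $G_L$ or $G_R$ by the inductive hypothesis; for a crossing pair $p \in Q_L$, $q \in Q_R$, the walk $p \to p' \to q' \to q$ has total length $(x_\ell - x_p) + |y_p - y_q| + (x_q - x_\ell) = \norm{p-q}_1$, which is exactly a Manhattan shortest path.

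For size and time, the merge gadget introduces $\mathcal{O}(m)$ new vertices and edges and can be built in $\mathcal{O}(m)$ time once the $y$-sorted order of $Q$ is available, giving the recurrences $S(m) = 2S(m/2) + \mathcal{O}(m)$ and $T(m) = 2T(m/2) + \mathcal{O}(m)$ and therefore the claimed $\mathcal{O}(n \log n)$ bounds; the matching $\Omega(n \log n)$ lower bound on the output size of the construction itself is immediate since each of the $\Theta(\log n)$ recursion levels contributes a fresh projection per point in the subproblem, summing to $\Theta(n \log n)$. The main obstacle I would expect is the bookkeeping in the merge step: one has to position the Steiner projections so that each lies on exactly one highway segment, and so that horizontal spurs introduced at different recursion levels do not accidentally coincide or destroy the shortest-path property proved locally. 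Handling degeneracies such as two points sharing a $y$-coordinate, points lying on $\ell$, or points coinciding after projection is done via a standard lexicographic tie-breaking, after which the induction closes cleanly and yields the stated bounds.
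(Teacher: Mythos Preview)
Your proposal is correct and follows essentially the same divide-and-conquer construction that the paper sketches (which is the construction of Gudmundsson et al.): split at the median $x$-coordinate, project every point horizontally onto the vertical split line, connect the projections along that line, and recurse on the two halves. Your write-up is in fact more complete than the paper's sketch, since you spell out the correctness argument for crossing pairs, the $\mathcal{O}(m)$-per-level recurrence, and the $\Omega(n\log n)$ side of the $\Theta$ bound, none of which the paper makes explicit.
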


\begin{figure}[h!]
	\centering
	\includegraphics[scale=0.5]{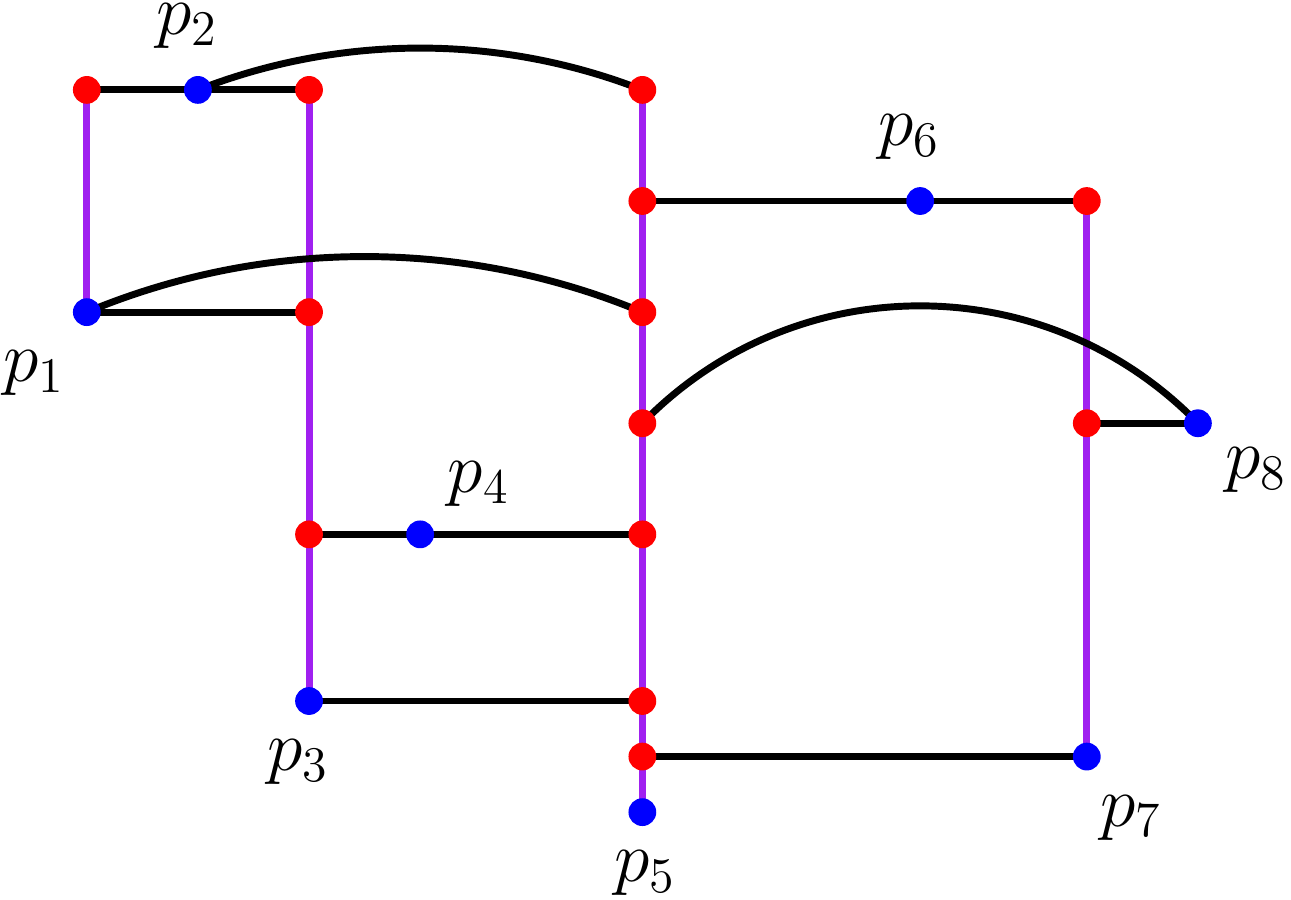}
	\caption{Construction of the \pgt~for $ S $. Points in $ S $ are in blue color and Steiner points are in red color. } 
	\label{fig-mn}
\end{figure}

Their construction is as follows: Sort the points in $P$ according to their $ x $-coordinate. Let $ m $ be the median $ x $-coordinate in $ P$. Then draw a vertical line $ L_m$ through $ (m,0) $. For each point $ p $ of $ S $, take an orthogonal projection on the line $ L_m $. Add Steiner points at each projection and join $p $ with its corresponding projection point. Then recursively do the same, on the $\frac{n}{2}$ points that have less $ x $-coordinate than $ p $ and $\frac{n}{2}$ points that have greater $ x $-coordinate than $ p $. Add a Steiner point at each projection.  \cref{fig-mn} illustrates the algorithm of Gudmundsson et al. \cite{gudmundsson2007small}.

Now we show that even for convex point set, this construction will need $\Omega (n \log n)$ Steiner points. In \cref{{fig-gen}}, for a set of sixteen points in convex position, we show that their network is not planar as it has a minor homeomorphic to $ K_{3,3}$ and the network uses 38 Steiner points.

\begin{figure}[ht!]
	\centering
	\subfigure[]
	{
		\includegraphics[scale=0.26]{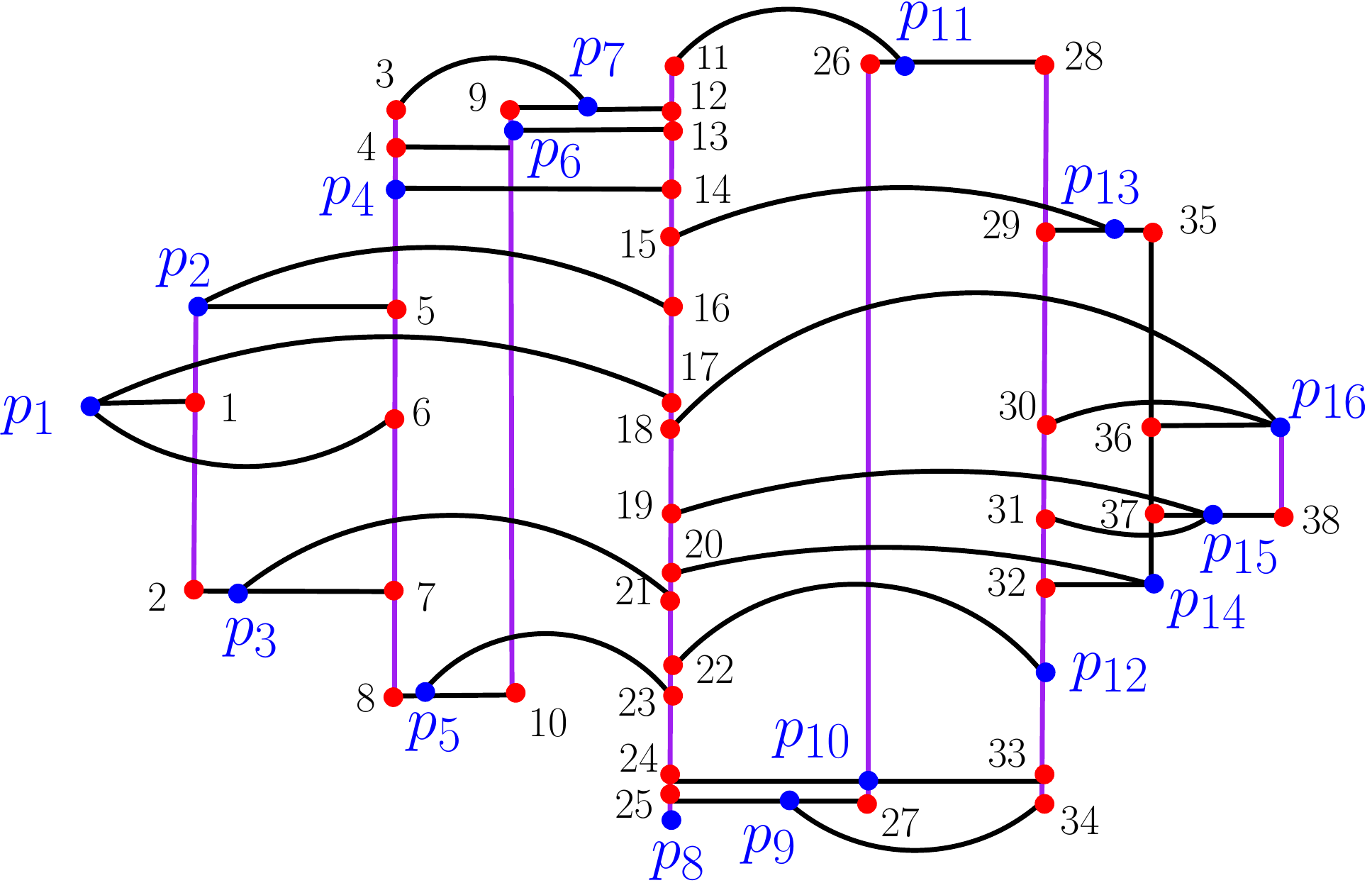}
		\label{fig-gen1}
	}
	\hspace{18mm}
	\subfigure[]
	{
		\includegraphics[scale=0.26]{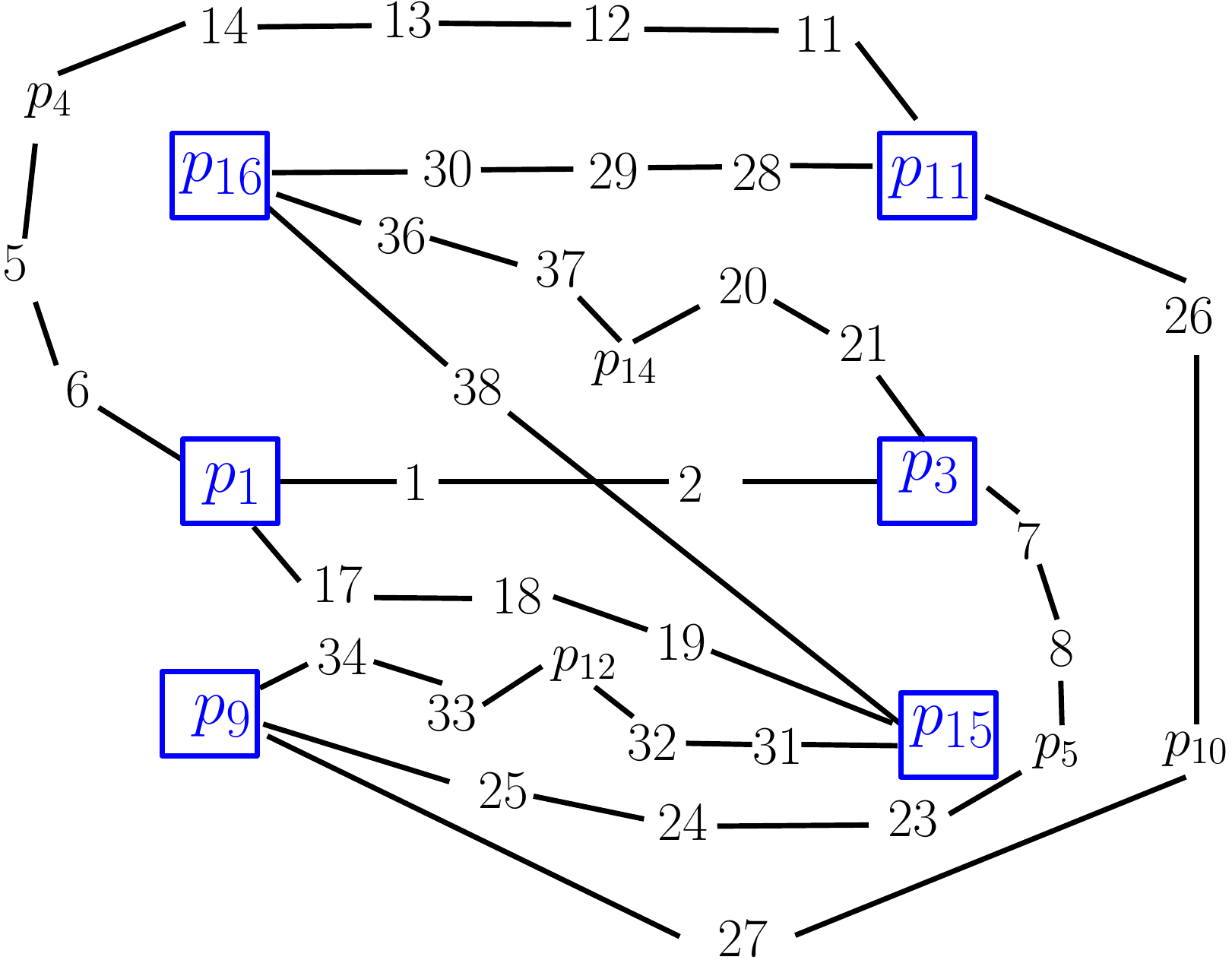}
		\label{fig-gen2}
	}
	
	\caption{(a) \pg~$G_A$ of a convex point set $A=\{p_1, p_2, \dots, p_{16}\}$ (blue color). Points colored in red  are Steiner points, and (b) $G'_A$, subgraph of $G_A$, that is homeomorphic to $K_{3,3}$.} 
	\label{fig-gen}
\end{figure}

\section{$\mathsf{Planar~Manhattan~Network}$ for a Convex Point Set}\label{sec-cons}

In this section, we construct a linear size \planar~$G $ for a convex point set $S$. $G$ uses $\mathcal{ O}(n)$ Steiner points and can be constructed in linear time. 
We organize this section as follows: After introducing some definitions and notations in \cref{sec-prel}, we construct a histogram partition $\mathcal{H}(\mathcal{OCP}(S))$ of an ortho-convex polygon $ \mathcal{OCP}(S) $  of the convex point set $ S $ in \cref{sec-ortho}.
In \cref{sec-graph} we construct our desired graph $ G=(V,E) $ where $ S \subseteq V $.	 In \cref{sec-path} we prove that $ G $ is a \pg~for $ S$. In \cref{sec-planar} we show that $ G $  is planar. In \cref{sec-con} we draw conclusions and state some interesting open problems.

\subsection{Preliminaries }\label{sec-prel}

A \emph{polygonal  chain},  with $ n $ vertices in the plane, is defined as an ordered set of vertices
$(v_1, v_2,\dots, v_n)$, such that any two consecutive vertices $v_i, v_{i+1}$ are connected by the line segment
$ \overline{v_i v_{i+1}} $, for $1\leq i<n$. It is said to be \emph{closed} when it divides the plane into two disjoint regions.  A \emph{polygon} is a bounded region which is enclosed by a closed polygonal chain in $ \mathbb{R}^2$. A line segment is \emph{orthogonal} if it is parallel either to the $x$-axis or $y$-axis.

\begin{definition}{\textbf{(Orthogonal polygon)}}
	A polygon is said to be an orthogonal polygon if all of its sides are orthogonal.
\end{definition}

\begin{definition}{\textbf{(Ortho-convex polygon)}}\cite{datta1990some}
	An orthogonal polygon $\mathcal{P}$ is said to be ortho-convex if every
	horizontal or vertical line segment connecting a pair of points in $\mathcal{P}$ lies totally within $\mathcal{P}$.
\end{definition}

\begin{definition}{\textbf{(Shortest $ L_1 $ path)}}
	A path between two points $ p $ and $ q $ is said to be a shortest $ L_1 $ path between them if the path consists of orthogonal line segments with total length $ \norm{pq }_1 $.
\end{definition}

\begin{lemma} \cite{chepoi2008rounding}
	For all pair of points in an ortho-convex polygon $\mathcal{P}$, there exist a shortest $ L_1 $ path between them in $\mathcal{P}$.
	
\end{lemma}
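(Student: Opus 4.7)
The plan is to reduce the lemma to producing an orthogonal staircase from $p$ to $q$ that is weakly monotone in both coordinates and lies in $\mathcal{P}$: the $L_1$-length of any such staircase is automatically $\norm{pq}_1$, so its existence is equivalent to the existence of a shortest $L_1$-path in $\mathcal{P}$. After reflecting coordinates if necessary I may assume $p_x \leq q_x$ and $p_y \leq q_y$, and write $a = (q_x, p_y)$ and $b = (p_x, q_y)$ for the two other corners of the axis-aligned bounding rectangle of $\{p,q\}$.

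If $a \in \mathcal{P}$, then applying the ortho-convexity axiom to the pairs $(p,a)$ (same $y$) and $(a,q)$ (same $x$) places both axis-parallel segments $\overline{pa}$ and $\overline{aq}$ inside $\mathcal{P}$, and their concatenation is the desired staircase. The sub-case $b \in \mathcal{P}$ is symmetric, so the only remaining case is $a, b \notin \mathcal{P}$. In that case I set $r_0 := \max\{x : (x, p_y) \in \mathcal{P}\}$ and $b_0 := \min\{y : (q_x, y) \in \mathcal{P}\}$; since $p, q \in \mathcal{P}$ and $a \notin \mathcal{P}$, the inequalities $p_x \leq r_0 < q_x$ and $p_y < b_0 \leq q_y$ follow, so $(r_0, p_y)$ and $(q_x, b_0)$ are two distinct points on $\partial \mathcal{P}$ sitting inside the bounding rectangle.

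The structural key is that these two boundary points lie on a common monotone staircase arc of $\partial \mathcal{P}$, namely the portion of the \emph{SE chain} of $\partial \mathcal{P}$ running from $(r_0, p_y)$ up-and-right to $(q_x, b_0)$ that separates the cut-off exterior region near the corner $a$ from $\mathcal{P}$. This invokes the standard structural fact that the boundary of an ortho-convex polygon decomposes into four monotone staircase chains (SE, SW, NW, NE), each weakly monotone in both coordinates; this fact follows from the unimodality of the envelope functions $b(x), t(x), l(y), r(y)$, itself immediate from the interval property of axis-parallel slices built into the definition of ortho-convex polygon. Consequently the arc has horizontal extent $q_x - r_0$ and vertical extent $b_0 - p_y$, and hence $L_1$-length exactly $(q_x - r_0) + (b_0 - p_y)$. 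The required path is the concatenation of the horizontal segment $p \to (r_0, p_y)$, this boundary arc, and the vertical segment $(q_x, b_0) \to q$: the first and last lie in $\mathcal{P}$ by ortho-convexity applied to their endpoints, the middle lies on $\partial \mathcal{P} \subseteq \mathcal{P}$, and the four length contributions telescope to $(q_x - p_x) + (q_y - p_y) = \norm{pq}_1$. The principal obstacle in writing this carefully is a clean statement of the four-chain decomposition of $\partial \mathcal{P}$ together with the verification that both $(r_0, p_y)$ and $(q_x, b_0)$ must sit on the \emph{same} SE chain (ruling out the degenerate possibility that one lies on a different chain separated by the rightmost or bottommost extremal segment); beyond that the proof is a short ortho-convexity check.
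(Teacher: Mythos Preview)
The paper does not give its own proof of this lemma; it is quoted verbatim from \cite{chepoi2008rounding} and used as a black box, so there is no in-paper argument to compare against.

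Your route is sound and is essentially the standard one. The only point you leave open --- that both $(r_0,p_y)$ and $(q_x,b_0)$ lie on the \emph{same} SE staircase of $\partial\mathcal P$ --- can be closed directly from the unimodality you already invoke. From $a\notin\mathcal P$ and $q\in\mathcal P$ you have $r(p_y)=r_0<q_x\le r(q_y)$ with $p_y<q_y$; if $p_y$ were on the non-increasing side of the unimodal function $r$ then so would $q_y$ be, forcing $r(p_y)\ge r(q_y)$, a contradiction. Hence $p_y$ lies weakly below the $y$-coordinate of the rightmost extreme, i.e.\ $(r_0,p_y)$ sits on the SE chain. Symmetrically, $b(p_x)\le p_y<b_0=b(q_x)$ together with $p_x<q_x$ forces $q_x$ to lie weakly right of the $x$-coordinate of the bottommost extreme, placing $(q_x,b_0)$ on the SE chain as well. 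Monotonicity of that chain then gives the length $(q_x-r_0)+(b_0-p_y)$ and the telescoping to $\norm{pq}_1$ goes through exactly as you wrote.
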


\subsection{$ \mathcal{OCP}(S) $ and $\mathcal{H}(\mathcal{OCP}(S))$ }\label{sec-ortho}

Let $ S=\{p_{1}, p_{2}, \dots, p_{n}\} $ be a convex point set of size $ n $ in $ \mathbb{R}^2 $. For any point $ p\in S $, let $x(p)$ and $ y(p)$ be its  $ x $ and $ y $-coordinate, respectively. We assume that the points in $ S $ are ordered with respect to an anticlockwise orientation along their convex hull. Without loss of generality let this ordering be $ p_{1}, p_{2}, \dots, p_{n} $  and also we assume that $ p_{1} $ is the top most point in $ S$, i.e., point having the largest $ y $-coordinate  in $ S $(for multiple points having largest $ y $-coordinate, we take the one that has smallest $x$-coordinate).   We denote the right most point of $ S $ as $r$. Analogously, let $l$, $t$, and $b$ denote the left most, the top most and the bottom most point of $ S $, respectively. So $ t=p_{1} $.
We will consider the point set for the case that $x(p_1) < x(b)$. For the case of $x(p_1) \geqslant x(b)$, both the construction and the proof are symmetric (by taking the mirror image of the point set  with respect to the line $y=y(p_1)+1$).

 \vspace{2mm}
A polygonal chain is said to be a $ xy$-monotone if any orthogonal line segment intersects the chain in a connected set. Now we will construct an ortho-convex polygon  $ \mathcal{OCP}(S) $, where points in $S$ lie on the boundary of  $ \mathcal{OCP}(S)$. $ \mathcal{OCP}(S) $ consists of four $ xy$-monotone chains. Let us denote these chains as $ C_{rt},~C_{tl}, ~C_{lb}, ~\text{and}~C_{br} $. $ C_{rt}  $ defines a $ xy$-monotone chain with the endpoints at $ r ~\text{and}~t $. Analogously, $C_{tl}, ~C_{lb}, ~\text{and}~C_{br}   $ are defined. While constructing the chain $C_{rt}$, we do the following: For any pair of consecutive points $p, q$, if $x(p) > x(q)$ then we draw two line segments $\overline{pp'}, \overline{qp'}$, where $p'=(x(p), y(q))$, else we extend the chain upto the next point. In \cref{algo_1.1} and  \cref{algo_1.2} ,  we describe the construction of $ C_{rt} $ and $ C_{br} $ respectively. Construction for the  all other monotone chains follows the same set of rules.  See  \cref{fig_chain1} for an illustration.

\vspace{2mm}

\begin{breakablealgorithm}
	\caption{Construction of the chain $ C_{rt} $}\label{algo_1.1}
	\hspace*{-2cm} \textbf{Input:} A set of $ k $ points $ p_{i}(=r), p_{i+1}, \dots, p_{i+k-1}(=t)  $ such that $   x(p_{j+1}) \leqslant  x(p_{j}) , y(p_{j+1}) \geqslant y(p_{j})  $ for $ i \leqslant j < (i+k-1) $\\
	\hspace*{-8.35cm} \textbf{Output:} The chain $ C_{rt} $
	\begin{algorithmic}[1]
		\For{$j=i~ \text{to}~ (i+k-2) $}
		\If{$x(p_j)=x(p_{j+1})$ or $y(p_j)=y(p_{j+1})$ }
		\State Join the  line segments $\overline{p_{j}~p_{j+1}}$ 
		\Else
		\State Create a Steiner point $ p_{j,j+1} = (x(p_{j}), y(p_{j+1}))$
		\State Join the  line segments $\overline{p_{j}~p_{j,j+1}}$ and $\overline{p_{j,j+1}~p_{j+1}}$
		\EndIf
		\EndFor
%
%

	\end{algorithmic}
\end{breakablealgorithm}

\newpage

\begin{figure}[ht!]
	\centering
	\subfigure[]
	{
		\includegraphics[scale=0.36]{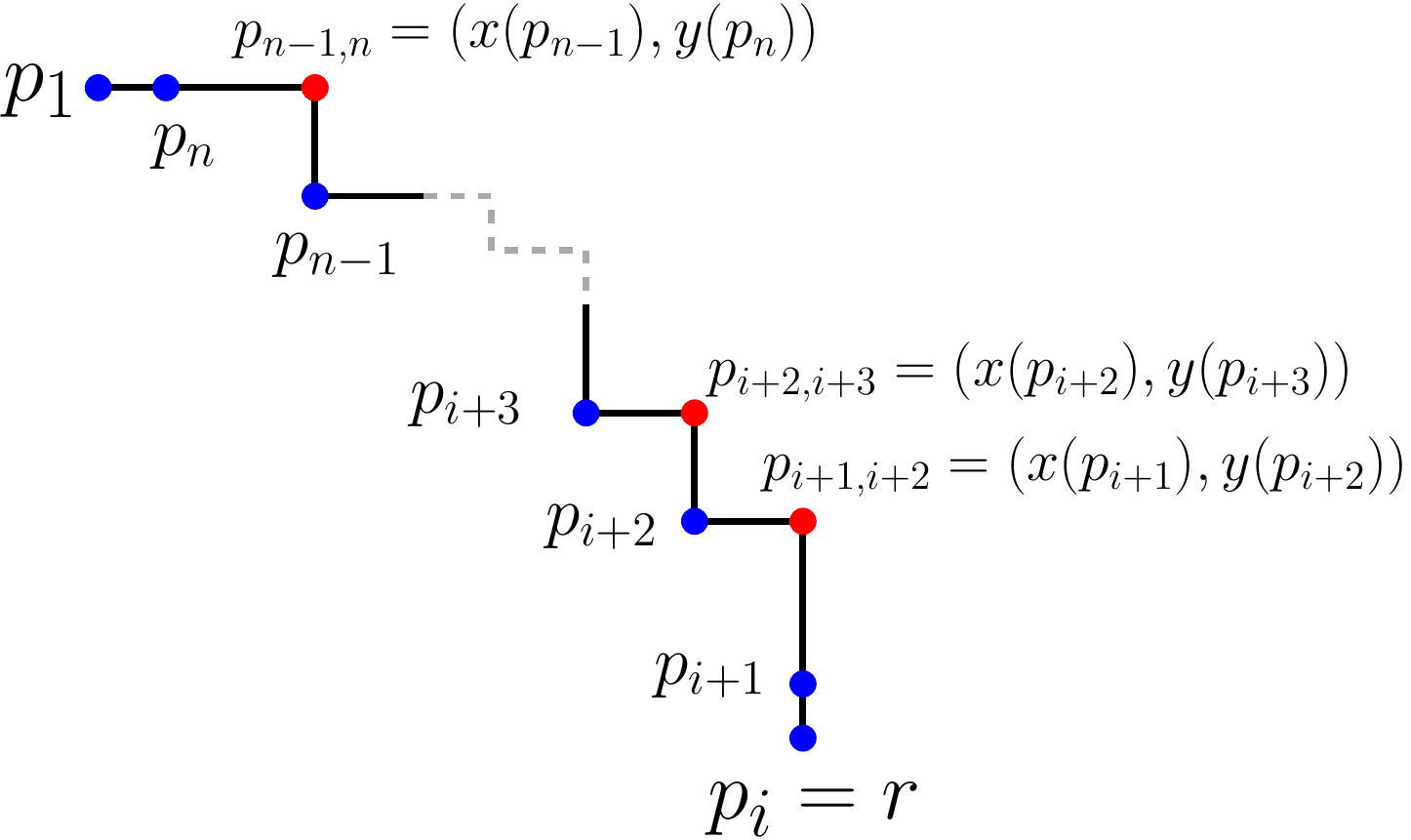}
		\label{fig-11}
	}
	\subfigure[ ]
	{
		\includegraphics[scale=0.36]{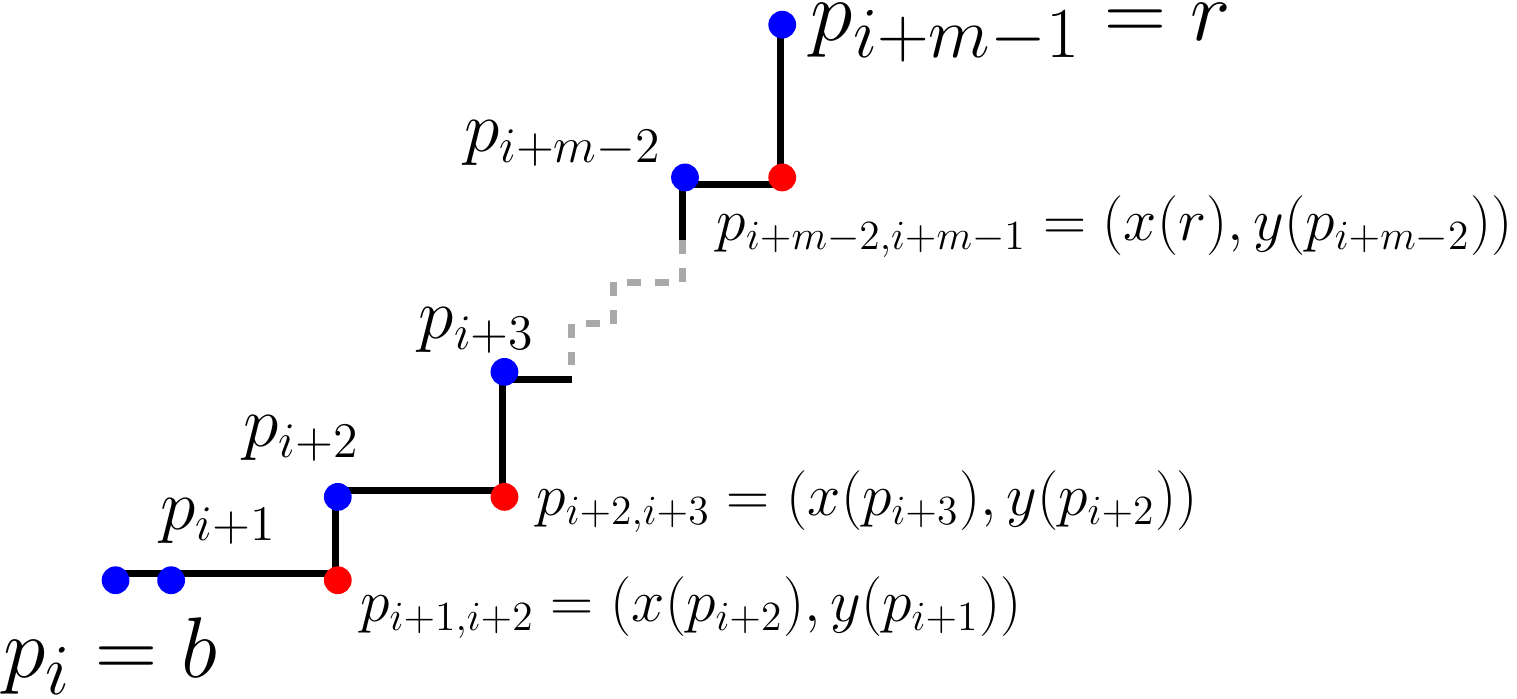}
		\label{fig-22}
	}

	\caption[Optional caption for list of figures]{Construction of chains (a) $ C_{rt}$ and (b) $ C_{br} $ from a given convex point set (blue color) }
	\label{fig_chain1}
\end{figure}

\begin{breakablealgorithm}
	\caption{Construction of the chain $ C_{br} $}\label{algo_1.2}
	\hspace*{-2cm} \textbf{Input:} A set of $ m $ points $ p_{i}(=b), p_{i+1}, \dots, p_{i+m-1}(=r)  $ such that $ x(p_{j+1}) \geqslant x(p_{j}), y(p_{j+1}) \geqslant y(p_{j})  $ for $ i \leq j < (i+m-1) $\\
	\hspace*{-8.5cm} \textbf{Output:} The chain $ C_{br} $
	\begin{algorithmic}[1]
		
			\For{$j=i~ \text{to}~ (i+m-2) $}
		\If{$x(p_j)=x(p_{j+1})$ or $y(p_j)=y(p_{j+1})$ }
		\State Join the  line segments $\overline{p_{j}~p_{j+1}}$ 
		\Else
		\State Create a Steiner point $ p_{j,j+1} = (x(p_{j+1}), y(p_{j}))$
		\State Join the  line segments$\overline{p_{j}~p_{j,j+1}}$ and $\overline{p_{j,j+1}~p_{j+1}}$
		\EndIf
		\EndFor

		
	\end{algorithmic}
\end{breakablealgorithm}

In \cref{fig_chain}, we illustrate an example of a convex point set $ S $ of size 15 and the ortho-convex polygon $ \mathcal{OCP}(S) $ is shown in \cref{fig-44}.

\begin{figure}[ht!]
	\centering
		\subfigure[]
	{
		\includegraphics[scale=0.4]{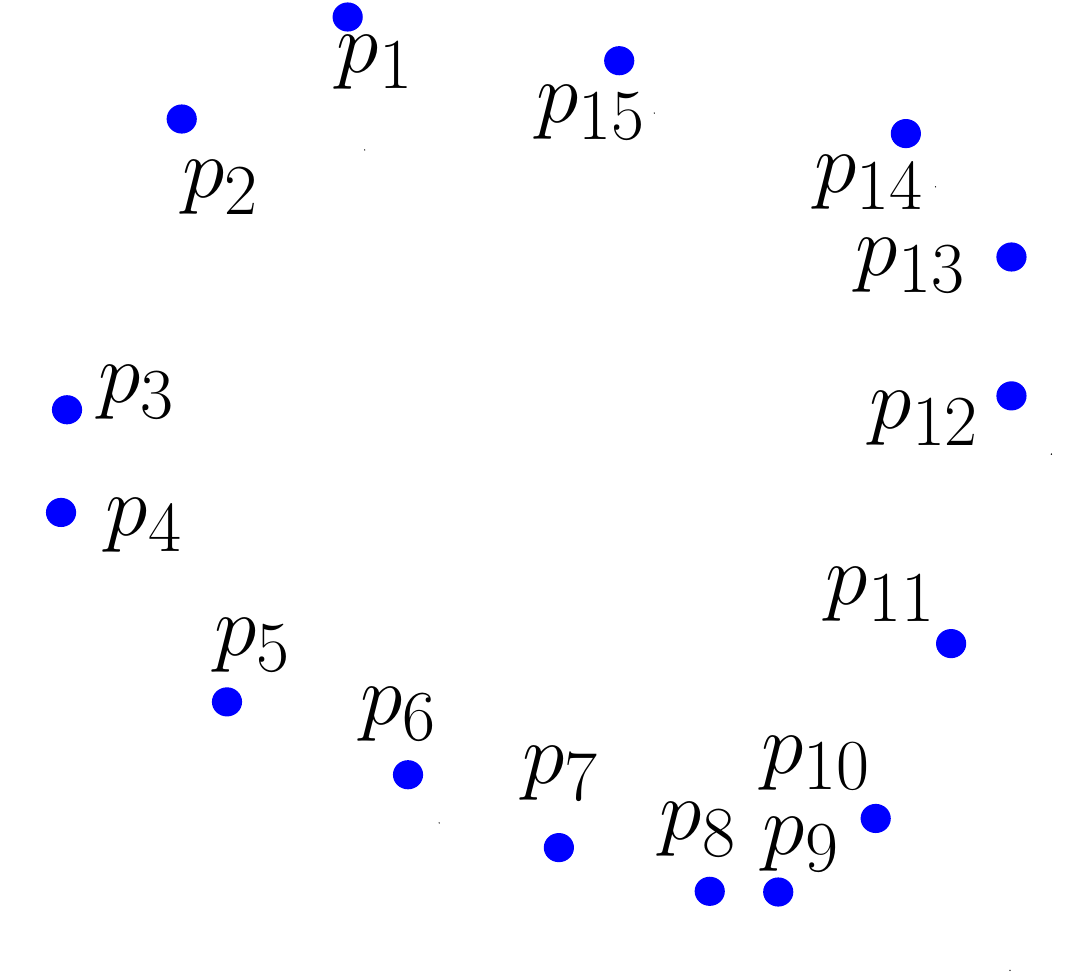}
		\label{fig-33}
	}
	\subfigure[ ]
	{
		\includegraphics[scale=0.4]{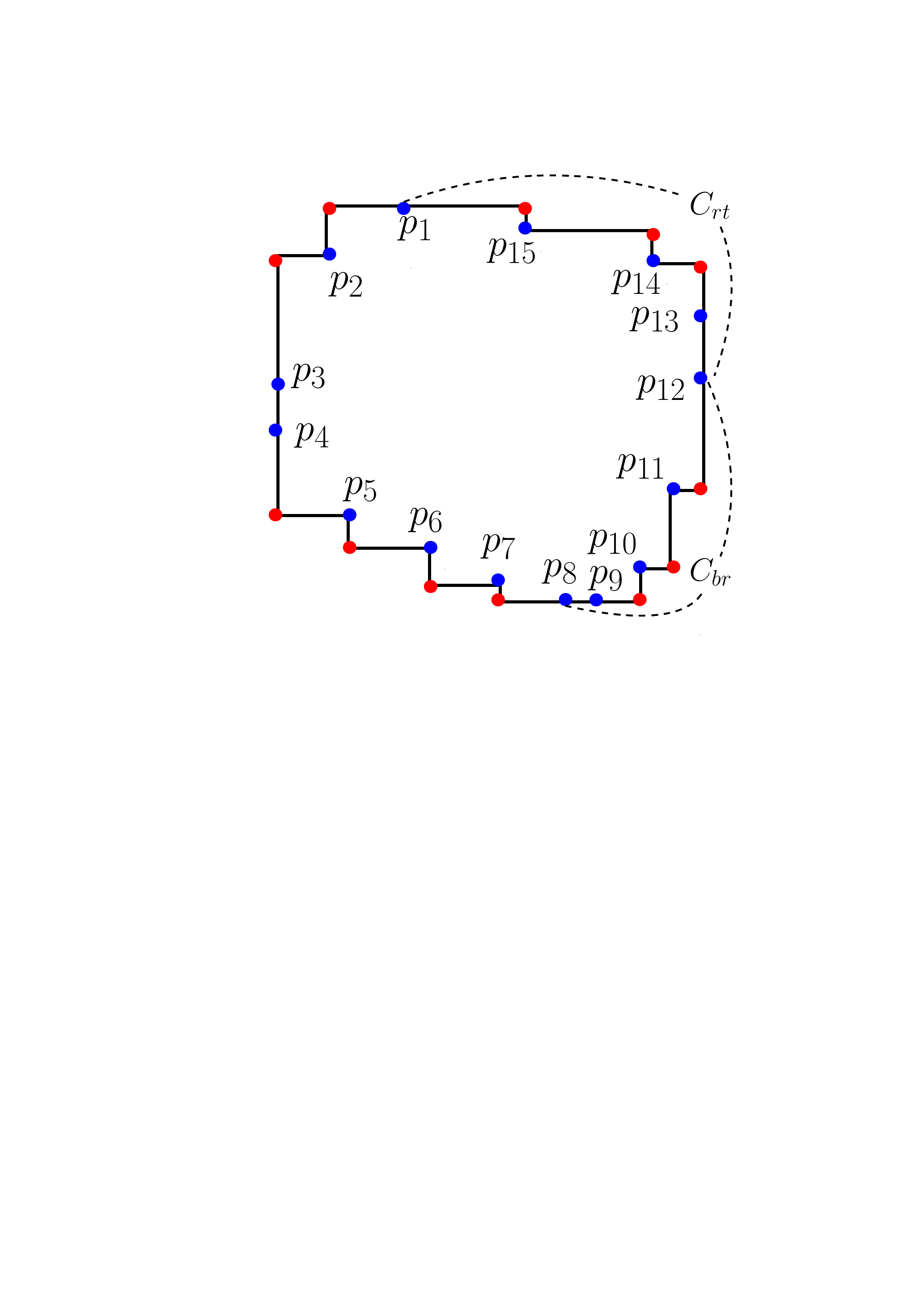}
		\label{fig-44}
	}
	
	\caption[Optional caption for list of figures]{(a) Example of a set $ S $ of 12 points in convex position, (b) $ \mathcal{OCP}(S) $ of $ S $. }
	\label{fig_chain}
\end{figure}

\begin{definition}{\textbf{(Histogram)}}
	A histogram $ H $ is an orthogonal polygon consisting of a boundary edge $ e $, called as its base, such that for any point $ p \in H $, there exists a point $ q\in e  $ such that the line segment $\overline{pq}$ is orthogonal and it lies completely in $ H $. 
	
\end{definition}

If the base is horizontal (respectively, vertical) we say it is a $horizontal$ (respectively, $vertical$) histogram. If its interior is above the base it is called an $upper$ histogram. Similarly, we can define the $lower$, $left$, and $right$ histograms. 
 Now we construct a histogram partition $\mathcal{H}(\mathcal{OCP}(S))$ of $ \mathcal{OCP}(S)$.

Let $ L =\overline{pq}$ be a vertical line segment such that both the points $ p $ and $ q$ are on the boundary of $\mathcal{OCP}(S)$. We define  $H_{L}^{r}$ and  $H_{L}^{l}$ to denote a right-vertical and left-vertical histogram, respectively, with base $L=\overline{pq}$. Similarly, for a horizontal line segment $L'=\overline{p'q'}$, where both the points $ p' $ and $ q' $ are on the boundary of $\mathcal{OCP}(S)$,  we define $H_{L'}^{u}$ and   $H_{L'}^{b}$  to denote an upper-horizontal and lower-horizontal histograms, respectively, with base $L'=\overline{p'q'}$. Let $\mathtt{proj}_L(p)$ be the orthogonal projection of the point $ p $ on the line containing the segment $ L $. For a set $ A $ of orthogonal line segments and a point set $ S $, we say $ A $ can see $ S $ if $\forall p \in S $ there is at least one line segment $ L \in A $ such that $ \mathtt{proj}_L(p) \in L $. For a vertical (respectively, horizontal) line segment $L$, we define $x(L)$  (respectively, $y(L)$) to be the $x$-coordinate (respectively, $y$-coordinate) of $ L $.\par 
We obtain a histogram partition $\mathcal{H}(\mathcal{OCP}(S))$  of $ \mathcal{OCP}(S) $ by recursively drawing vertical and horizontal lines as follows (see \cref{fig-hist}):

\begin{description}
	\item[Step 1] Let $ q_1~(\in C_{lb})$ be the  intersection point of the boundary of $ \mathcal{OCP}(S) $ with the vertical line containing $ p_1 $. First, we draw a vertical line segment $L_{1}=\overline{p_{1}q_1} $. We define two sets $S(H_{L_{1}}^{l})$ and $ S(H_{L_{1}}^{r}) $ such that $S(H_{L_{1}}^{l})= \{q \in S \colon y(t) \geq y(q) \geq y(q_1)$ and $x(q) \leq x(q_1)\}$, $S(H_{L_{1}}^{r})= \{q \in S \colon y(t) \geq y(q) \geq y(q_1)$ and $x(q) \geq x(q_1)\}$. In this step, we construct two vertical histograms $ H_{L_{1}}^{l} $ and  $ H_{L_{1}}^{r} $. If $S(H_{L_{1}}^{l}) \cup S(H_{L_{1}}^{r})=S$, i.e., $ L_1 $ can see $ S $ we stop, else we proceed to Step 2.
	
	\item[Step 2:] 
	Let $ q_2~( \notin C_{lb})$ be  the intersection point of the boundary of $ \mathcal{OCP}(S) $ with the horizontal line containing $ q_1 $. Then we draw a horizontal line segment $L_{2}=\overline{q_{1}q_2} $. Here we define the set $S(H_{L_{2}}^{b})= \{z \in S \colon x(q_1) \leq x(z) \leq x(q_2)$ and $ y(z) \leq y(q_2)\}$. In this step, we construct the lower histogram $ H_{L_{2}}^{b} $ with base $ L_2 $. If $S(H_{L_{1}}^{l}) \cup S(H_{L_{1}}^{r}) \cup S(H_{L_{2}}^{b})=S$, i.e., $ \{L_1, L_2\} $ can see $ S $ we stop, else we proceed to the next step.
	
	\item[Step 3:] 
	Let $ q_3~( \notin C_{rt})$ be  the intersection point of the boundary of $ \mathcal{OCP}(S) $ with the vertical line containing $ q_2 $. Then we draw a vertical line segment $L_{3}=\overline{q_{2}q_3} $. Here we define the set $S(H_{L_{3}}^{r})= \{w \in S \colon y(q_2) \geq y(w) \geq y(q_3)$ and $x(q) \geq x(q_3)\}$. In this step, we construct the right histogram $ H_{L_{3}}^{r} $ with base $ L_3 $. If $S(H_{L_{1}}^{l}) \cup S(H_{L_{1}}^{r}) \cup S(H_{L_{2}}^{b}) \cup S(H_{L_{3}}^{r})=S$, i.e., $ \{L_1, L_2, L_3\} $ can see $ S $ we stop, else we proceed in the similar manner.

	We assume that  this process terminates after $ k$ steps, and we obtain a set $\mathcal{L}$ of orthogonal line segments $ \{L_1, L_2, \dots L_k\} $ for some $ k \in \mathbb{N}$ such that $ \{L_1, L_2, \dots, L_k\} $ can see $ S $. In this process, we add $ k $ Steiner points $ \{q_i \colon 1\leq i \leq k\} $. Each $ q_i $ belongs to the boundary of  $ \mathcal{OCP}(S) $.  
	
\end{description}

The process terminates in one of the four following  configurations which are based on the position of the points $ b $ and $ r $ (see \cref{fig_hist}).

\begin{description}
	\item[Type-1 ] $ L_k $ is vertical and  $\mathtt{proj}_{L_{k-1}}(b) \in L_{k-1} $, i.e., $ L_{k-1} $ sees $ b$. 
	\item[Type-2 ] $ L_k $ is vertical and $\mathtt{proj}_{L_{k-1}}(b) \notin L_{k-1} $. 
	\item[Type-3 ] $ L_k $ is horizontal and $ \mathtt{proj}_{L_{k-1}}(r) \in L_{k-1} $, i.e., $ L_{k-1} $ sees $ r$. 
	\item[Type-4 ] $ L_k $ is horizontal and $ \mathtt{proj}_{L_{k-1}}(r) \notin L_{k-1} $.
	
\end{description}

From now onwards, we assume that  $ L_1, L_2, \dots L_k $ are the segments inserted in $ \mathcal{OCP}(S) $ while constructing $\mathcal{H}(\mathcal{OCP}(S))$. Let $ \mathcal{L}= \cup_{i=1}^n L_i $. So for  any point $ p\in S $, there is at least one line segment $ L \in  \mathcal{L} $ such that  $\mathtt{proj}_L(p) \in L $ and the segment $\overline{p~\mathtt{proj}_L(p)} $ completely lies in $ \mathcal{OCP}(S) $. 

\begin{figure}[ht]
	\centering
	\subfigure[Type 1]
	{		
		\includegraphics[scale=0.35]{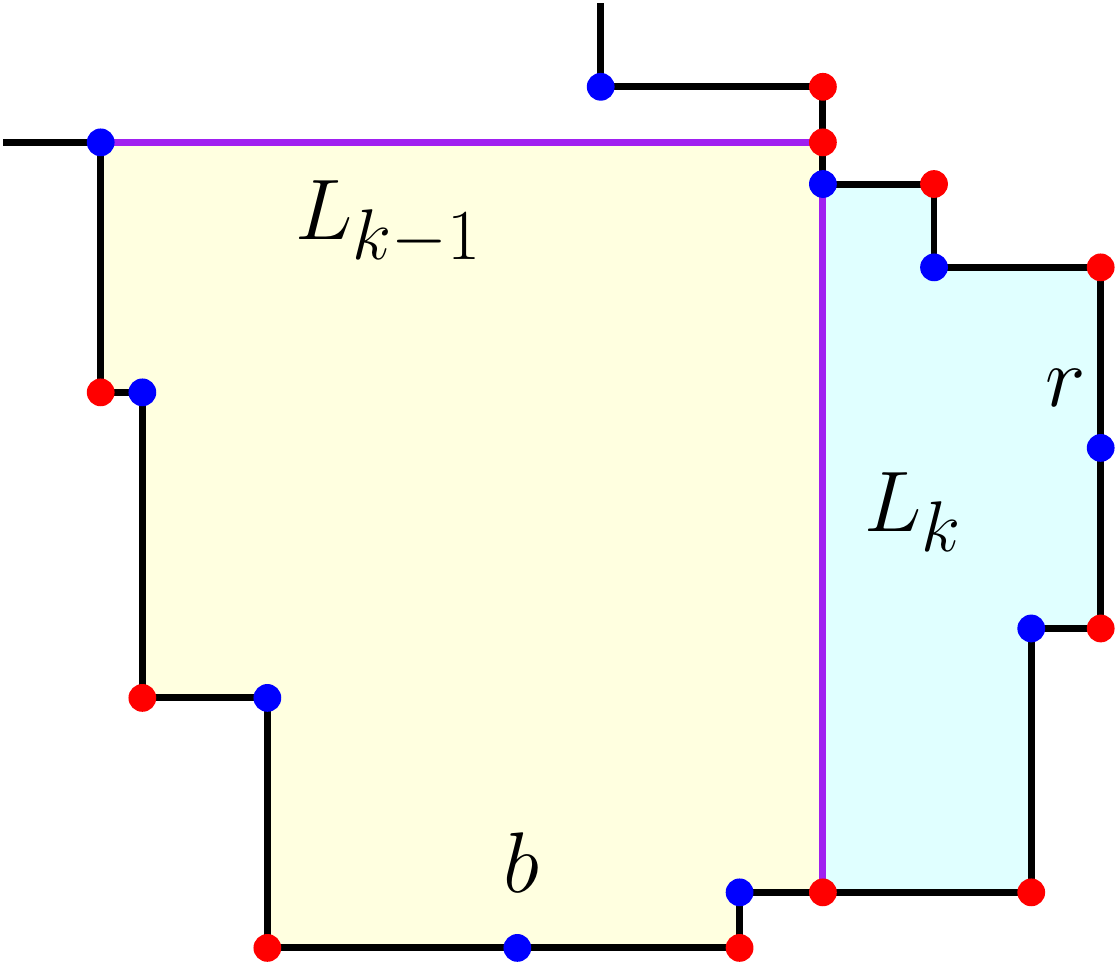}
		\label{}
	}
	\subfigure[ Type 2]
	{
		\includegraphics[scale=0.35]{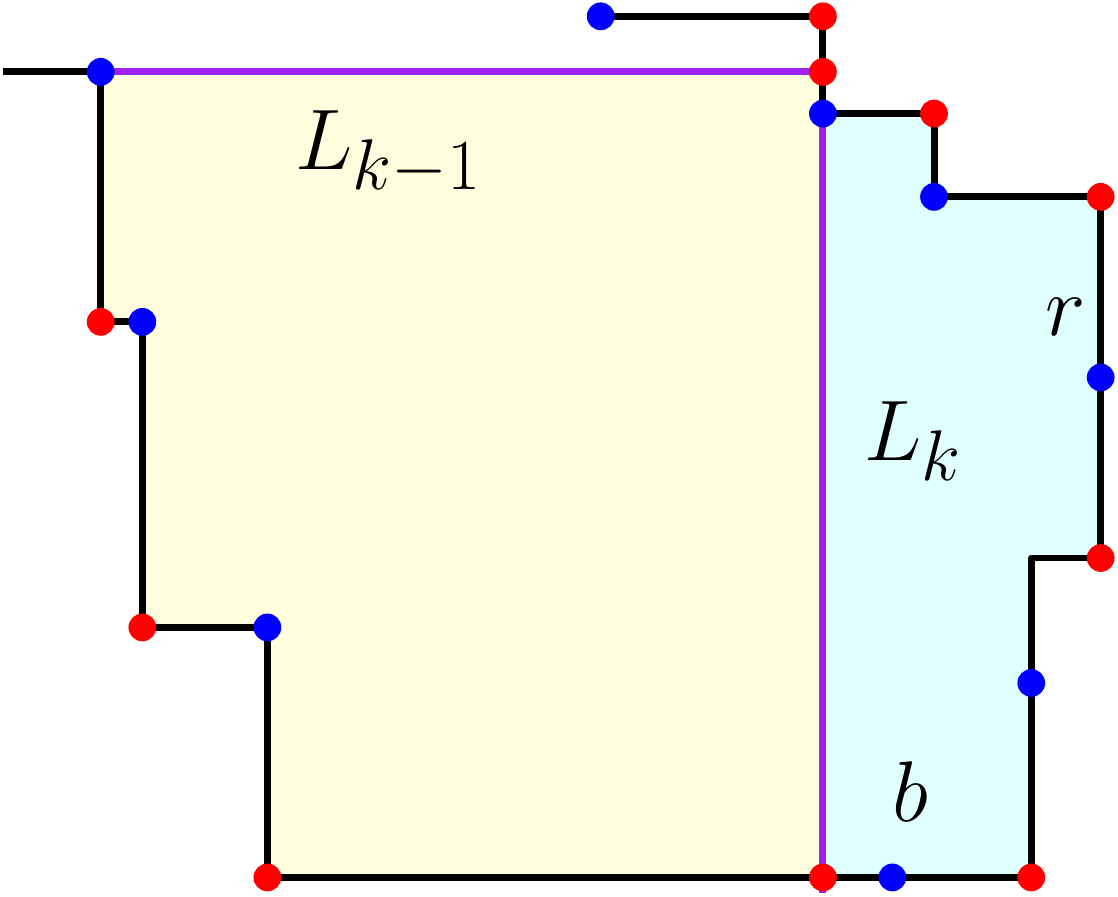}
		\label{}
	}
	
	\subfigure[ Type 3]
	{
		\includegraphics[scale=0.35]{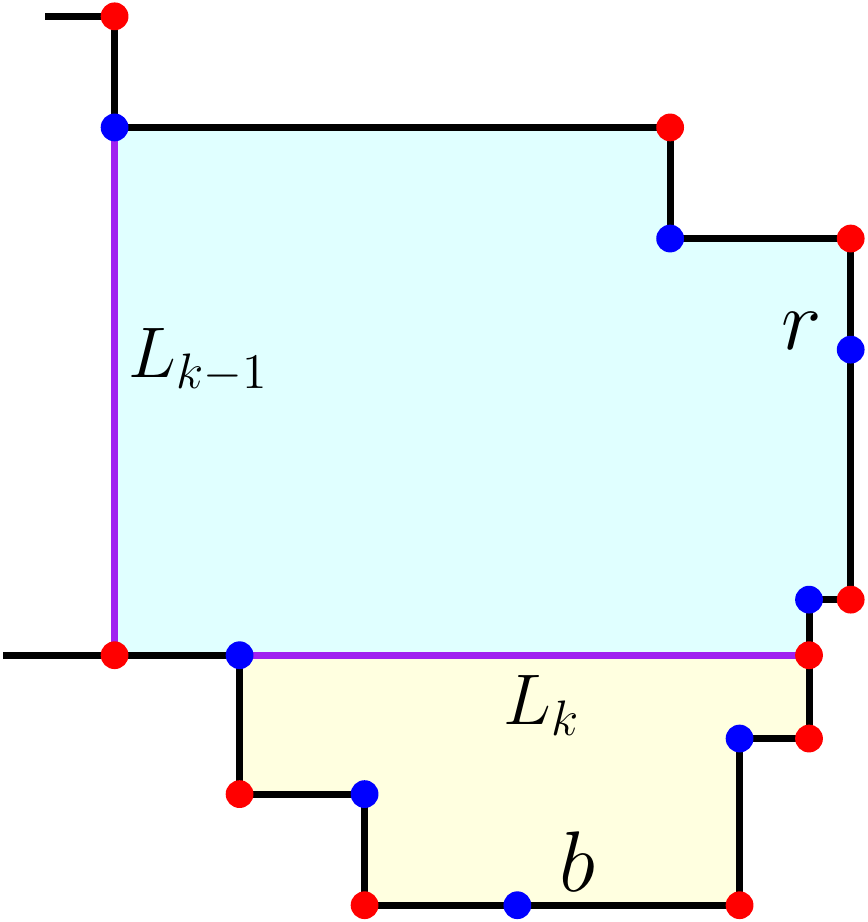}
		\label{}
	}
	\hspace{15mm}
	\subfigure[Type 4]
	{
		\includegraphics[scale=0.35]{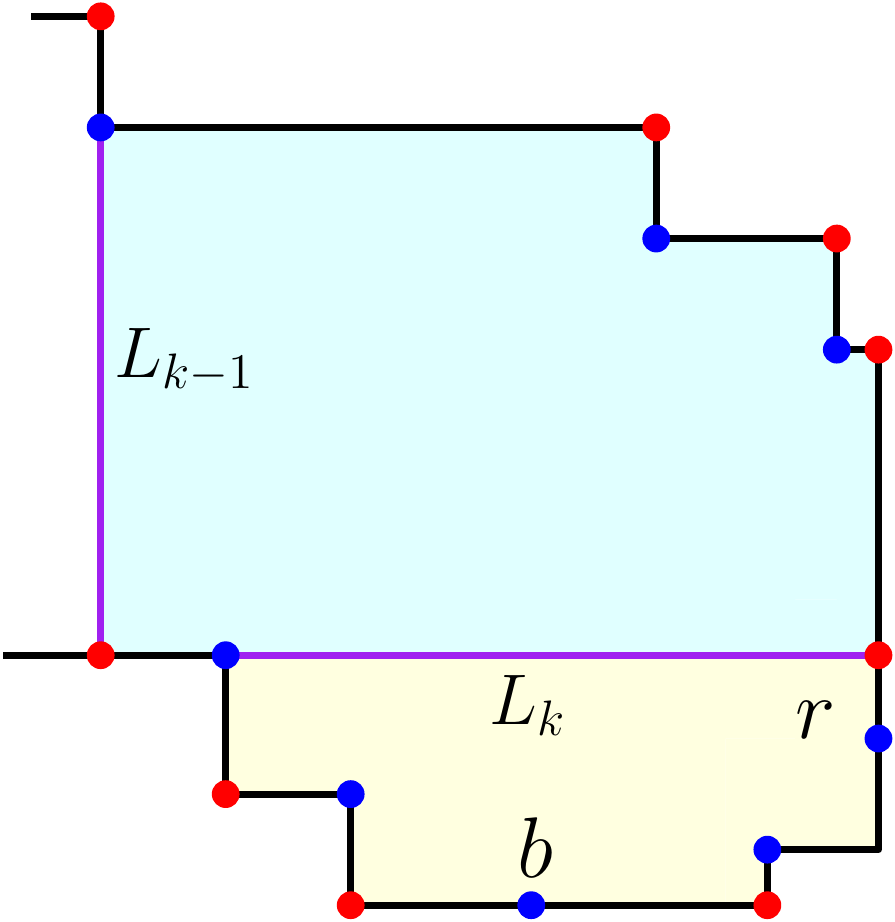}
		\label{}
	}
	
	\caption[[Optional caption for list of figures]{Types of the histogram containing $ b $ and $ r $ in $ \mathcal{OCP}(S) $.}
	\label{fig_hist}
\end{figure}

\begin{lemma}\label{lem-L}
	$ \mathcal{H}(\mathcal{OCP}(S)) $ can be constructed in linear time. 
\end{lemma}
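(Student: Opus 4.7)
The plan is to establish linear-time construction of $\mathcal{H}(\mathcal{OCP}(S))$ in two stages: first bound the complexity of $\mathcal{OCP}(S)$ itself and build it in $O(n)$ time, and then bound the number of cut segments $L_i$ by $O(n)$ and show that each cut can be located in amortized constant time by a monotone sweep of the boundary of $\mathcal{OCP}(S)$.

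First, I would argue that $\mathcal{OCP}(S)$ has $O(n)$ boundary complexity and can be constructed in $O(n)$ time. Since the points of $S$ are already given in anticlockwise order along the convex hull, identifying the four extreme points $t,r,b,l$ requires one linear scan. The four monotone chains $C_{rt},C_{tl},C_{lb},C_{br}$ are then produced by \cref{algo_1.1}, \cref{algo_1.2}, and their two symmetric variants. Each of these algorithms walks once through a consecutive subsequence of $S$, and for each pair of consecutive input points it performs $O(1)$ work, inserting at most one Steiner corner. Hence the boundary of $\mathcal{OCP}(S)$ consists of at most $2n$ orthogonal edges and is built in $O(n)$ total time.

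Second, I would bound $k$, the number of cut segments $L_i$ produced by the Step~1/Step~2/$\dots$ process. The key observation is that the sequence of Steiner points $q_1,q_2,\dots,q_k$ on the boundary of $\mathcal{OCP}(S)$ advances monotonically around that boundary: by construction, $q_i$ is the first boundary hit of the ray shot from $q_{i-1}$ in the direction perpendicular to $L_{i-1}$ away from the already-processed region, and ortho-convexity of $\mathcal{OCP}(S)$ together with the alternation of vertical/horizontal cuts guarantees that the arc of the boundary traversed so far is only extended, never revisited. Consequently each edge of $\partial \mathcal{OCP}(S)$ contains at most one $q_i$, so $k=O(n)$. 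The four termination configurations (Types 1--4) correspond simply to which chain the sweep reaches when $b$ or $r$ becomes visible, and each is detected in $O(1)$ time per step.

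Third, I would argue that each of the $k$ steps costs amortized $O(1)$. I maintain a pointer into the boundary of $\mathcal{OCP}(S)$ along the chain opposite to the one containing the previous $q_i$; to locate $q_{i+1}$ it suffices to advance this pointer along orthogonal edges until the vertical (respectively horizontal) line through $q_i$ is crossed, and to insert a Steiner point on that crossed edge. Because the pointer only advances, the total pointer work across all $k$ steps is $O(n)$. Testing visibility $S(H^\cdot_{L_i})\cup\dots=S$ is done implicitly: we stop as soon as the sweep reaches $b$ or $r$, which is recognized in $O(1)$ per step. Combining the $O(n)$ construction of $\mathcal{OCP}(S)$ with the $O(n)$ sweep yields the linear-time bound.

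The main obstacle is the monotonicity claim in the second step: one must verify carefully that for \emph{every} starting configuration and every parity of cut orientation, the new boundary hit $q_{i+1}$ lies strictly ahead of $q_i$ along the boundary traversal, and that this holds all the way until the terminating cut $L_k$. A clean way to discharge this is to use ortho-convexity to argue that the region already covered by $H^{\cdot}_{L_1}\cup\dots\cup H^{\cdot}_{L_i}$ is itself an ortho-convex subpolygon of $\mathcal{OCP}(S)$, so the next cut strictly enlarges it along the boundary.
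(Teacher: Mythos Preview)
Your proposal is correct and follows essentially the same approach as the paper: both argue that the cut endpoints $q_i$ advance monotonically along the boundary of $\mathcal{OCP}(S)$, so a single boundary walk suffices and the total work is $O(n)$. Your write-up is in fact considerably more detailed than the paper's, which dispatches the time bound in one sentence (``the segments in $\mathcal{L}$ can be computed by walking around the boundary of $\mathcal{OCP}(S)$ in linear time''). The one element the paper includes that you largely skip is an explicit correctness argument that $\bigcup_{i=1}^{k} L_i(S)=S$, i.e.\ that the process really yields a histogram partition seeing every point of $S$; the paper handles this by an easy induction using ortho-convexity and the $xy$-monotonicity of the four boundary chains, and you may want to add a sentence to that effect rather than relying on termination at $b$ or $r$ alone.
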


\begin{proof} 
	Let $L_i (S) = \{p \in S \colon L_i~\text{can see}~p\}$. First we show that $ \mathcal{H}(\mathcal{OCP}(S)) $ is    a histogram partition in $\mathcal{OCP}(S)$, i.e., $\cup^{k}_{i=1}L_i (S)= S$. $L_1$ sees all points $q \in S$ having the property that $ y(q_1) \leq y(q) \leq y(t)$ as $\mathcal{OCP}(S)$ is an ortho-convex polygon and these points are part of $ xy $-monotone chains $\{C_{rt}, C_{tl}, C_{lb}, C_{br}\}$. So $L_1(S)$ consists of all the points in $S$ that lie above $L_2$. Moreover, all the points above $L_2$ are part of the histogram defined by the base $L_1$. Now our concern is only about the points of $ S $ that are below $L_2$.
	Now $L_2$ can see the points $q \in (S \setminus L_1(S))$ having the property that $ x(q_1) \leq x(q) \leq x(q_2)$.  These points are part of the histogram with the base $ L_2 $. Now we can apply the same argument inductively. This leads to the claim that $\cup^{k}_{i=1}L_i (S)= S$, i.e., $ \mathcal{L}=\{L_1, \dots L_k\} $ can see $ S $. 
	Observe that the segments in  $ \mathcal{L}$ can be computed by walking around the boundary of $ \mathcal{OCP}(S)$ in linear time. Hence, $ \mathcal{H}(\mathcal{OCP}(S)) $ can be constructed in linear time.
\end{proof}

\begin{figure}[h!]
	\centering
	\includegraphics[scale=0.26]{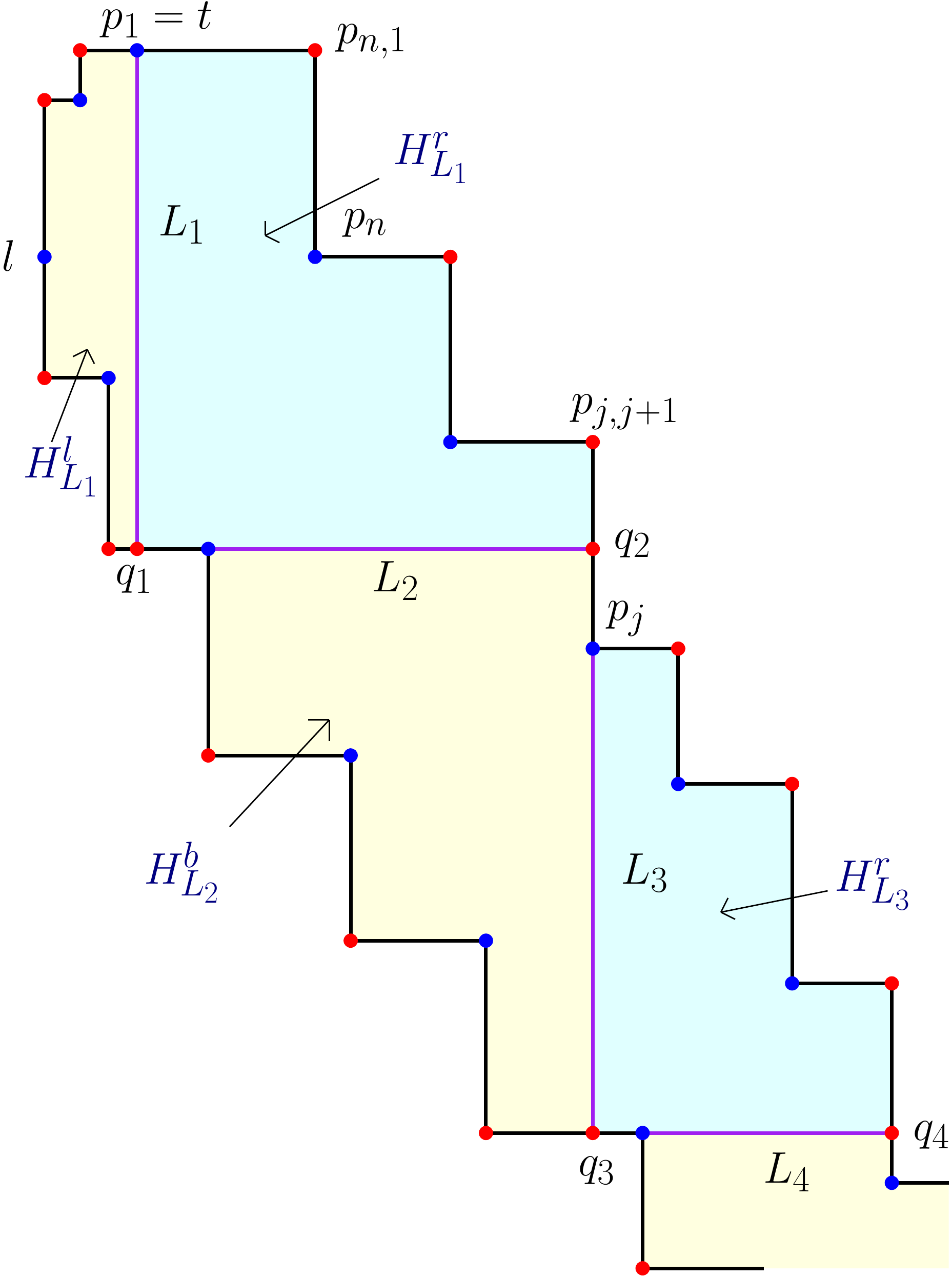}
	\caption{$ \mathcal{H}(\mathcal{OCP}(S))$ of a convex point set $S$  } 
	\label{fig-hist}
\end{figure}

\subsection{Construction of Planar Manhattan Network }\label{sec-graph}

Now we describe our construction of planar Manhattan network $G=(V,E) $ for a convex point set $ S $.  For an illustration of the steps of \cref{algo_2}, see \cref{fig_step}. Recall that $\mathtt{proj}_L(p)$ denotes the orthogonal projection of the point $ p $ on the line containing the segment $ L $ and  $q(H)$ denotes the histogram containing $q \in S$ in $\mathcal{H}(\mathcal{OCP}(S))$. Let $e_1, e_2, e_3$ be the bases of $l(H), b(H), $ and $r(H)$, respectively, where $l(H)$ (respectively $b(H)$ and $r(H)$) denotes the histogram containing $l$ (respectively $b$ and $r$) of $S$. First, we draw the segments $e'_1= \overline{l~\mathtt{proj}_{e_1}(l)}, e'_2= \overline{b~\mathtt{proj}_{e_2}(b)},$ and $e'_3= \overline{r~\mathtt{proj}_{e_3}(r)}$ in $\mathcal{OCP}(S)$. Let $\mathcal{L'}= \mathcal{L} \cup \{ e'_{1}, e'_{2},e'_{3} \}$. Next, for each $q \in S$, if $\mathtt{proj}_L(q) \in L$ where $L \in \mathcal{L'} \cap q(H) $, we draw the line segment $\overline{q~\mathtt{proj}_L(q)}$ in $\mathcal{OCP}(S)$. Then if both $H^r_{L_k} $ and $ e_2 $ exist, we draw the segments 
$\mathtt{proj}_{e_2}(z)$, for each point $ z \in S \cap H^r_{L_k} $. Also if both $H^b_{L_k} $ and $ e_3 $ exist, we draw the segments $\mathtt{proj}_{e_3}(w)$, for each point $ w \in S \cap H^b_{L_k} $. In this process, all the line segments we join, we add them into edges of $T$. Also all the extra points we created to make an orthogonal projection, we add them into the set $ T $ of Steiner vertices. Our algorithm ends with removing some  specific line segments, that is stated in the Steps 28-32 in \cref{algo_2}.  We illustrate this algorithm in \cref{fig-graph}.

\begin{breakablealgorithm}
	\caption{ Construction of $G= (V=S \cup T, E$)}\label{algo_2}
	\hspace*{-15mm}	 \textbf{Input:} $\mathcal{H}(\mathcal{OCP}(S))$ of a convex point set $ S =\{p_{1}(=t), p_{2}, \dots, p_{n}\}$.\\{{\color{blue} Let  $ \{L_1, L_2, \dots L_k\} $ be the segments and $ \{q_i \colon 1 \leq i \leq k\} $  be the set of \\ points inserted in $ \mathcal{OCP}(S) $ during the construction of $\mathcal{H}(\mathcal{OCP}(S))$. }}\\
	
	\hspace*{-21mm} \textbf{Output:} A \planar~$ G=(V=S\cup T,E) $ of $ S $.
	\begin{algorithmic}[1]
		
		\State $S \gets  \{~p_{i} \colon 1 \leq i \leq n\};$
		\State $T \gets  \{~p_{i, i+1} \colon~~ 1 \leq i \leq n\} \cup \{~q_{i} \colon~~ 1 \leq i \leq k\};$ 
		\State $E \gets \{\overline{p_{i}p_{i, i+1}} \colon 1 \leq i \leq (n-1)\} \cup \{\overline{p_{i+1}p_{i, i+1}}\colon 1 \leq i \leq (n-1)\} \cup \overline{p_{n}p_{n,1}} \cup \overline{p_{1}p_{n,1}};$ \Comment{see \cref{fig-s3}}
		\State Draw the line segments (if they do not exist) $e'_1= \overline{l~\mathtt{proj}_{e_1}(l)}, e'_2= \overline{b~\mathtt{proj}_{e_2}(b)},$ and $e'_3= \overline{r~\mathtt{proj}_{e_3}(r)}$ \Comment{ {\color{blue}$e_1, e_2,$ and $ e_{3} $  are the bases of the histograms $l(H),b(H),$ and $r(H)$, respectively.}}
		\State $ T=T \cup \{\mathtt{proj}_{e_1}(l), \mathtt{proj}_{e_2}(b), \mathtt{proj}_{e_3}(r)\} $
		
		\State $ \mathcal{L'}=\{L_1, L_2, \dots, L_{k}, e'_{1}, e'_{2},e'_{3} \} $ 
		
		\For{\textit{each point $ q\in S $}}
		\For{\textit{each line $ L \in \mathcal{L'} \cap q(H) $}}
		\If{$\mathtt{proj}_{L}(q) \in L$}
		\State $T=T \cup \mathtt{proj}_{L}(q);$ 
		\State $E= E \cup \overline {q~\mathtt{proj}_{L}(q)}$\Comment{{\color{blue} {see}} \cref{fig-s8}}
		\EndIf
		\EndFor
		\EndFor

		\If{Both $H^r_{L_k} $ and $ e_2 $ exist}
		\For{\textit{each point $ z \in S \cap H^r_{L_k} $}}
		\State $T=T \cup \mathtt{proj}_{e_2}(z);$
		\State $E= E \cup \overline {z~\mathtt{proj}_{e_2}(z)}$\Comment{{\color{blue} {see}} \cref{fig-s15}}
		\EndFor
		\EndIf
		
		\If{Both $H^b_{L_k} $ and $ e_3 $ exist}
		\For{\textit{each point $ w \in S \cap H^b_{L_k} $}}
		\State $T=T \cup \mathtt{proj}_{e_3}(w);  $
		\State $E= E \cup \overline {w~\mathtt{proj}_{e_3}(w)}$\Comment{{\color{blue} {see}} \cref{fig-s19} }
		\EndFor
		\EndIf

		\For{\textit{each horizontal line segment $L \in  \mathcal{L'}$}}
		\State Let $ L $ contains $ k_1 $ vertices $ a_{1}, a_2, \dots, a_{k_1} $, where $ x(a_{i})< x(a_{i+1})$ for $ 1 \leq i < k_1 $
		\For{\texttt{$1\leq i \leq (k_1-1)$}}
		\State $E= E \cup \overline{a_{i}a_{i+1}}$\Comment{{\color{blue} {see}} \cref{fig-s14} }
		\EndFor		
		\EndFor

		\For{\textit{each vertical line segment $L \in  \mathcal{L'}$}}
		\State Let $ L $ contains $ k_2 $ vertices $ b_{1}, b_{2}, \dots, b_{k_2} $, where $ y(b_{i})< y(b_{i+1}) $ for $ 1 \leq i < k_2 $
		\For{\textit{$1\leq i \leq (k_2-1)$}}
		\State $E= E \cup \overline{b_{i}b_{i+1}} $\Comment{{\color{blue} {see}} \cref{fig-s16}}
		\EndFor		
		\EndFor
		
		\State	Delete the following three edges if they exist. 
		
		\State{$(i)$ The edge  $ (u_1, \mathtt{proj}_{e_1}(l)) $ on the line $ e'_1 $ provided that $ u_1 \neq l $.\Comment{{\color{blue} {see}} \cref{fig-s191}}}
		
		\State	$(ii)$  For the {\bf Types 1 or 4}, the edge  $ (u_2, \mathtt{proj}_{e_2}(b)) $ on the line $ e'_2 $  provided that $ u_2 \neq b $.\Comment{{\color{blue} {see}} \cref{fig-s193}}
		
		\State	$(iii)$  For the {\bf Types 2 or 3}, the edge $ (u_3, \mathtt{proj}_{e_3}(r)) $ on the line $ e'_3 $   provided that $ u_3 \neq r $. \Comment{{\color{blue} {see}} \cref{fig-s192}}
		
		\State	For the {\bf Types 1 or 3}, delete all the vertices $ v $ on the line $ L_k $ where $ v \notin \{ \mathtt{proj}_{e_3}(r), \mathtt{proj}_{e_2}(b)\}  $ and $ v $ is not a point on the boundary of $\mathcal{OCP}(S)$.
		
		\State \textbf{return} $G=(S \cup T, E)$

	\end{algorithmic}
\end{breakablealgorithm}

 Notice that for each point in $S $, \cref{algo_2} adds at most three Steiner vertices in $ G $. Specifically, $ |V(G)| \leq 4n $ and $ |E(G)| \leq 5n $. So  both the number of vertices and edges in $ G $ are $ \mathcal{O}(n)$.   Now we prove the following lemma.

\begin{figure}[h!]
	
	\subfigure[Step 3]
	{
		\includegraphics[scale=0.35]{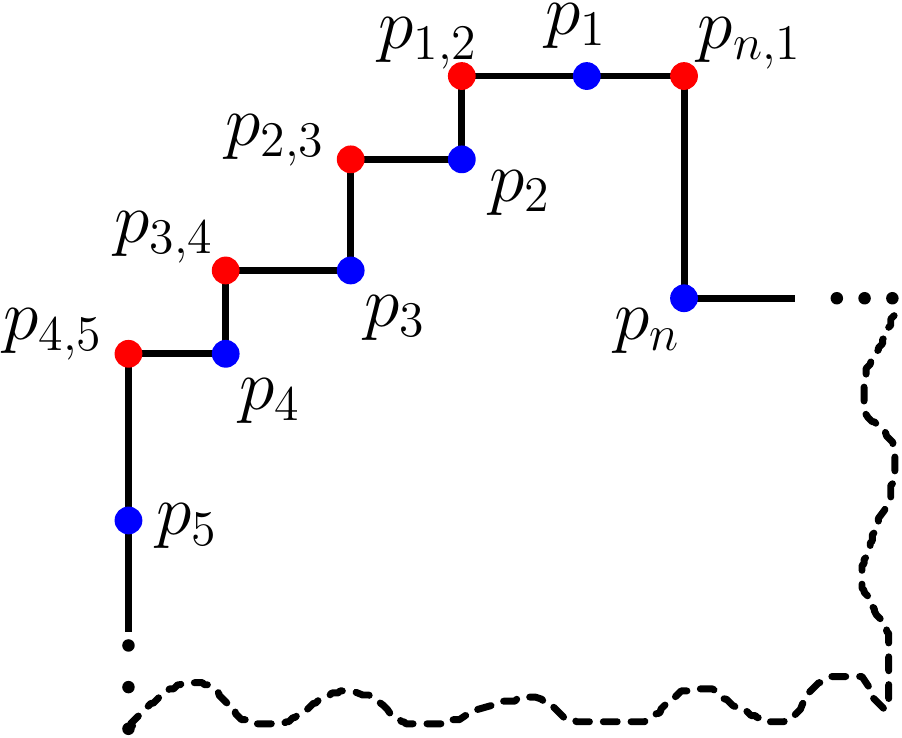}
		\label{fig-s3}
	}	
	\subfigure[Step 7-11]
	{
		\includegraphics[scale=0.4]{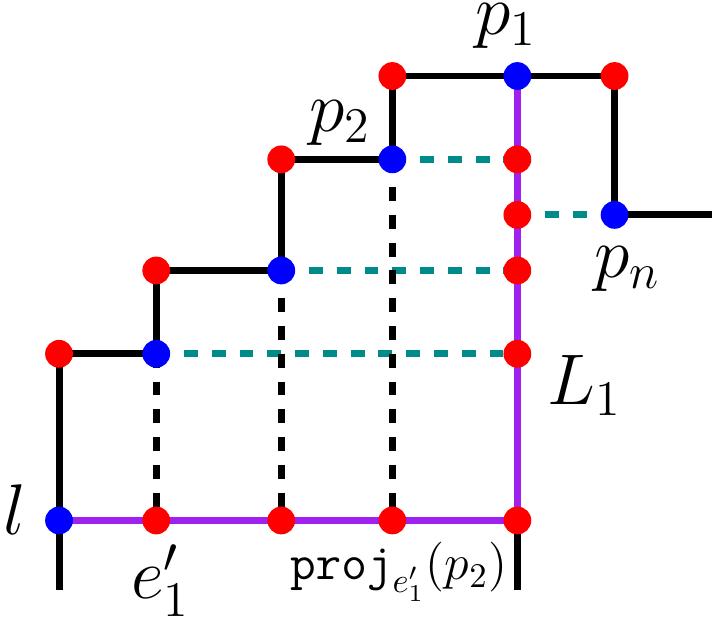}
		\label{fig-s8}
	}
	\subfigure[Step 12-15]
	{
		\includegraphics[scale=0.35]{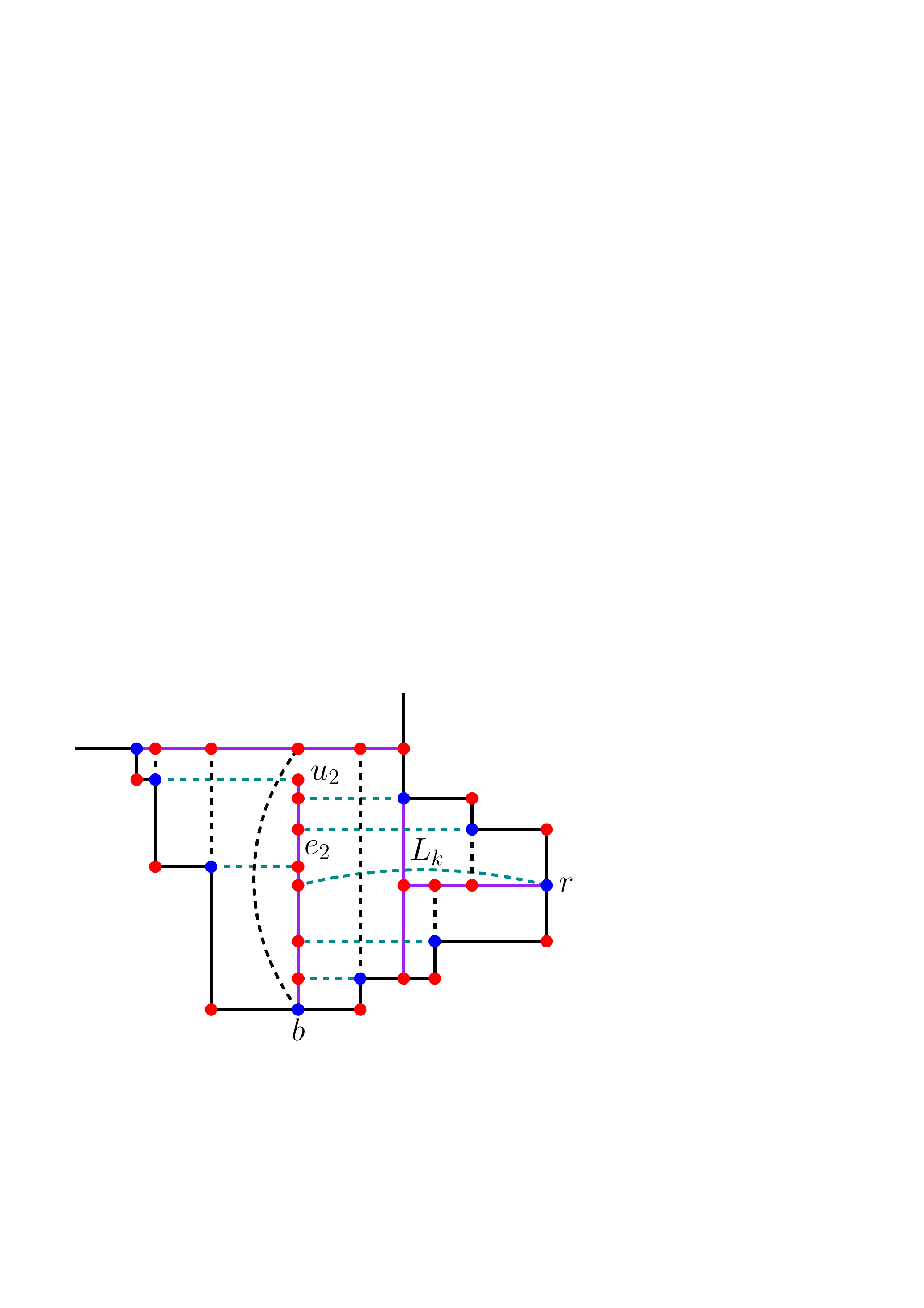}
		\label{fig-s15}
	}
	\subfigure[Step 16-19]
	{
		\includegraphics[scale=0.35]{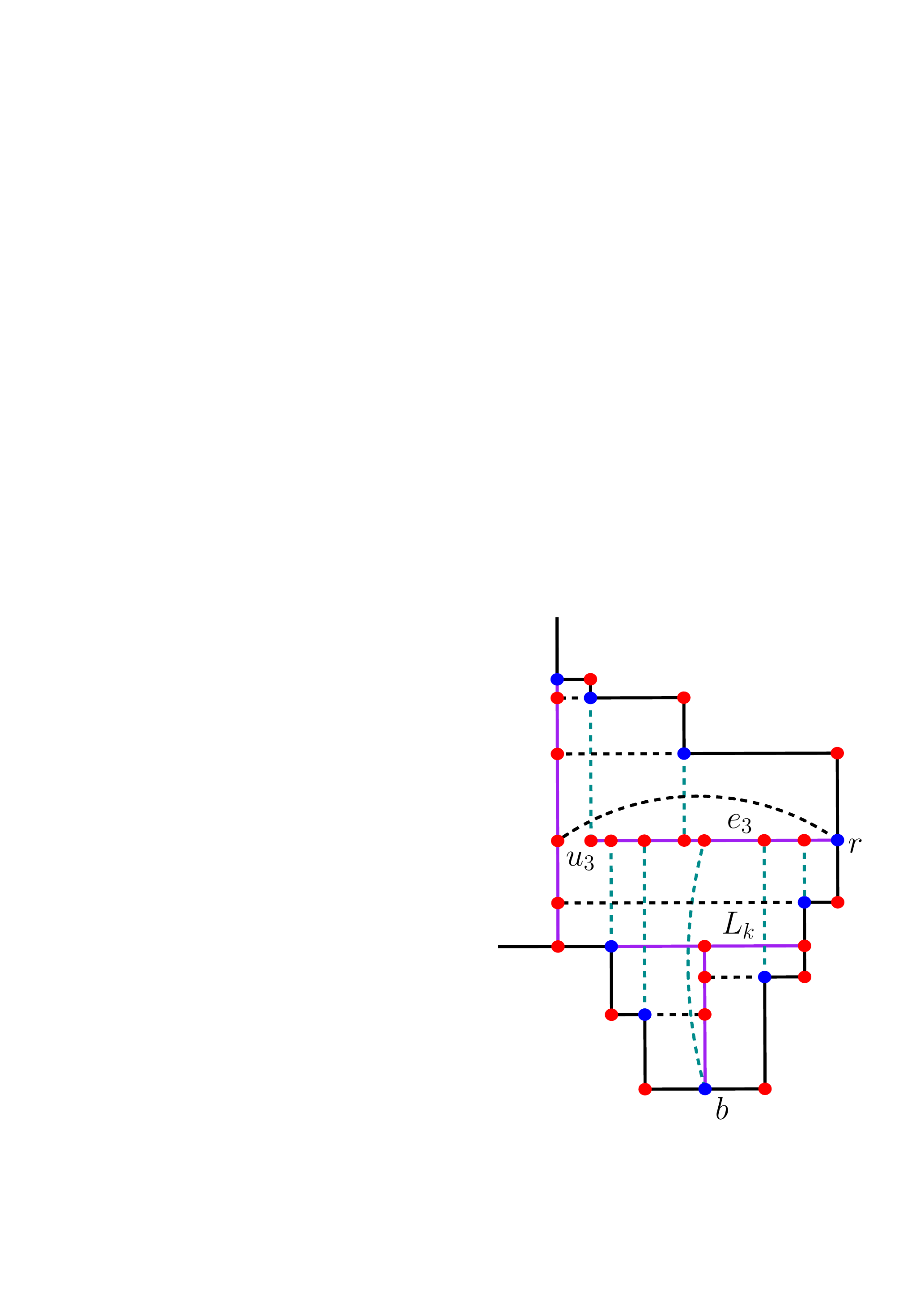}
		\label{fig-s19}
	}
	\subfigure[Step 20-23]
	{
		\includegraphics[scale=0.33]{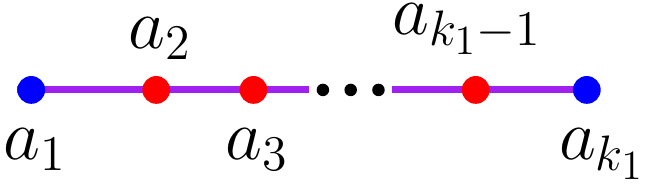}
		\label{fig-s14}
	}
	\subfigure[Step 29]
	{
		\includegraphics[scale=0.36]{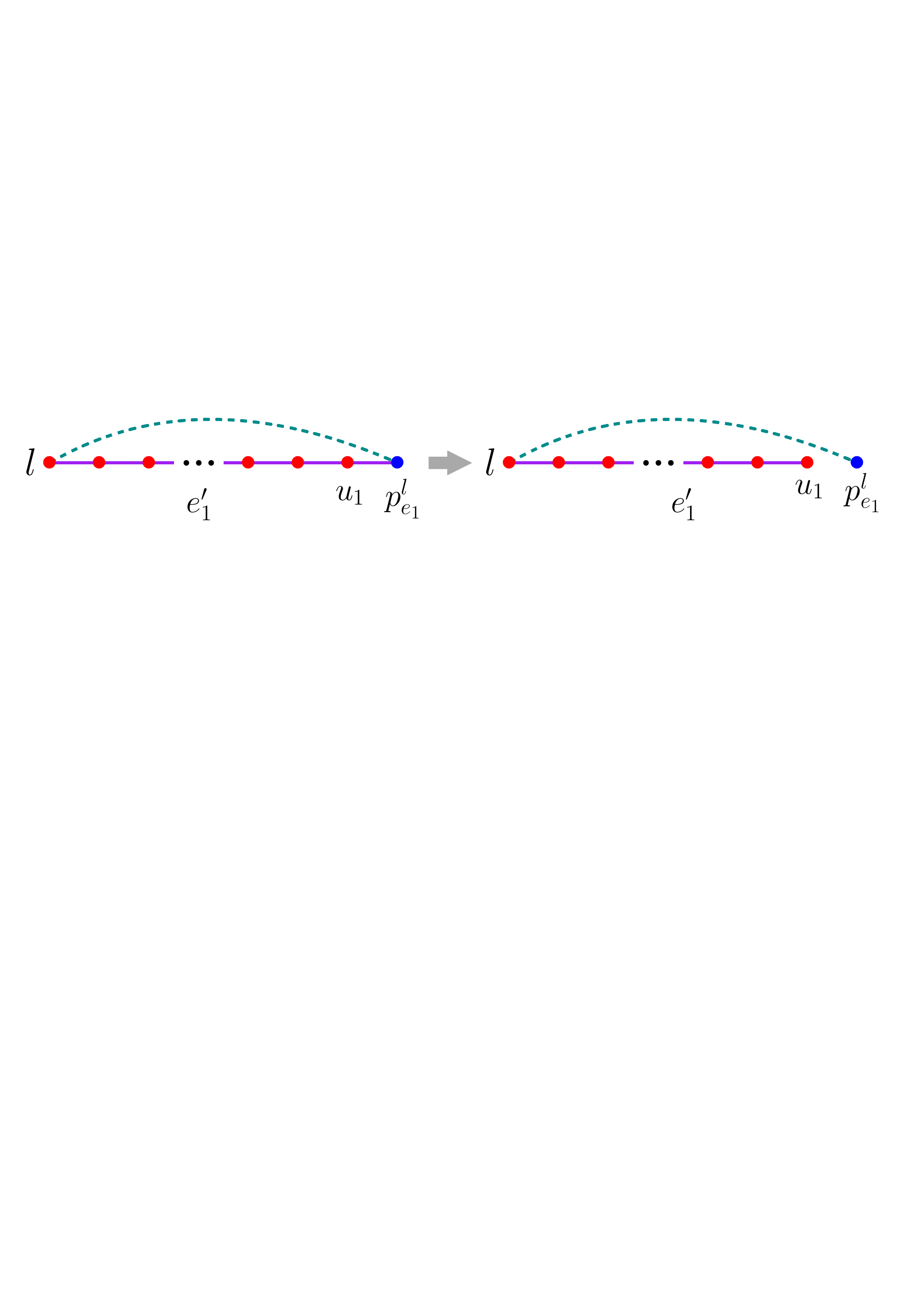}
		\label{fig-s191}
	}   
	\subfigure[Step 24-27]
	{
		\includegraphics[scale=0.4]{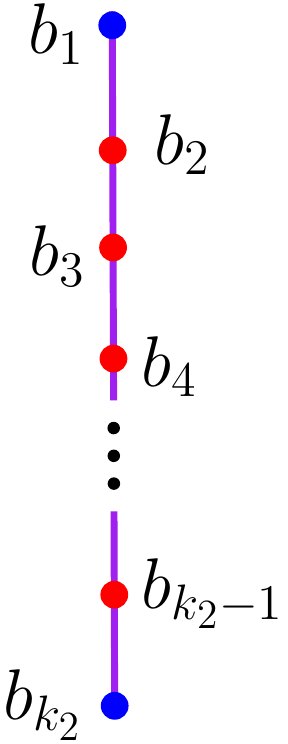}
		\label{fig-s16}
	}    
	\subfigure[Step 30]
	{
		\includegraphics[scale=0.4]{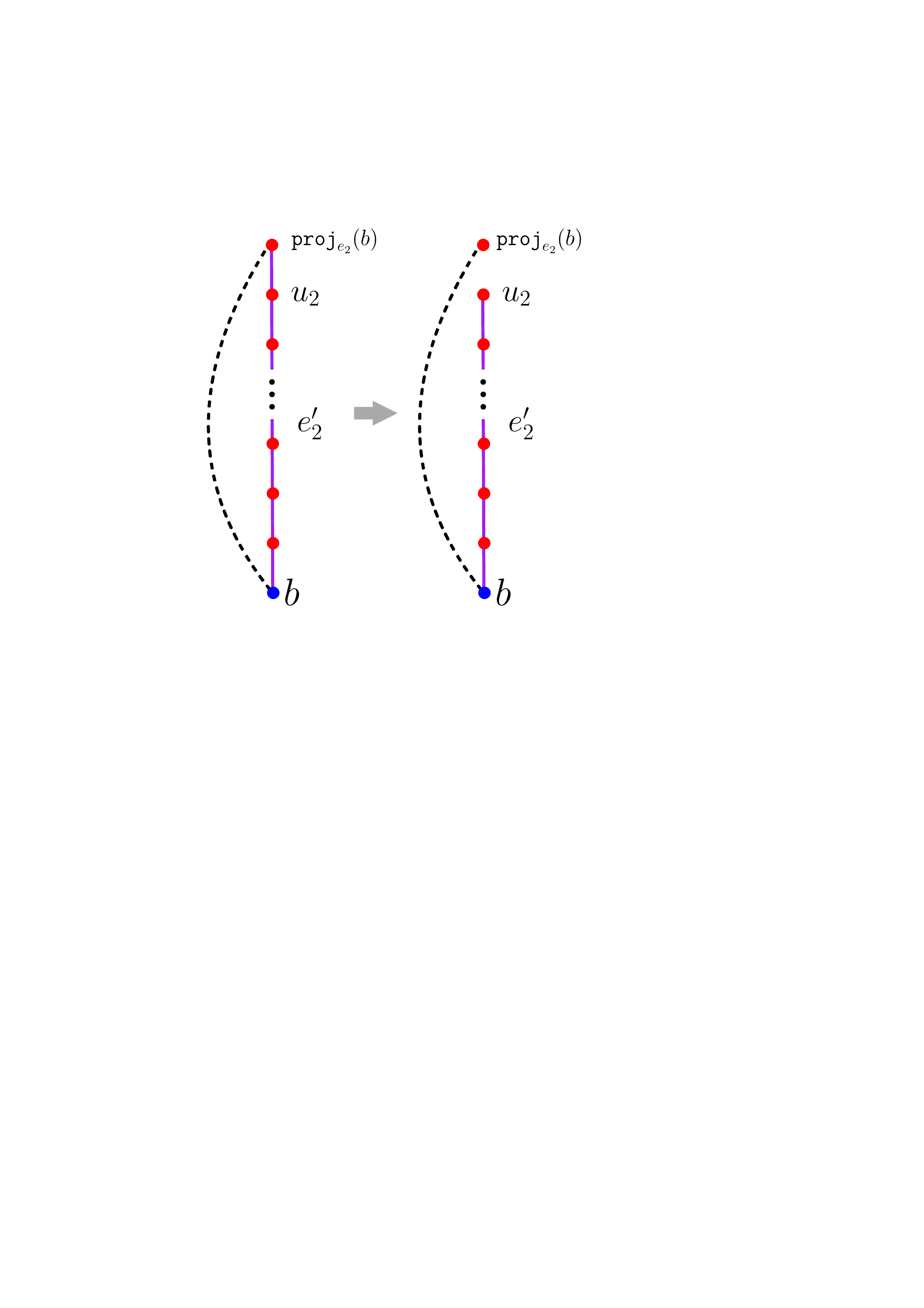}
		\label{fig-s193}
	}
	\subfigure[Step 31]
	{
		\includegraphics[scale=0.4]{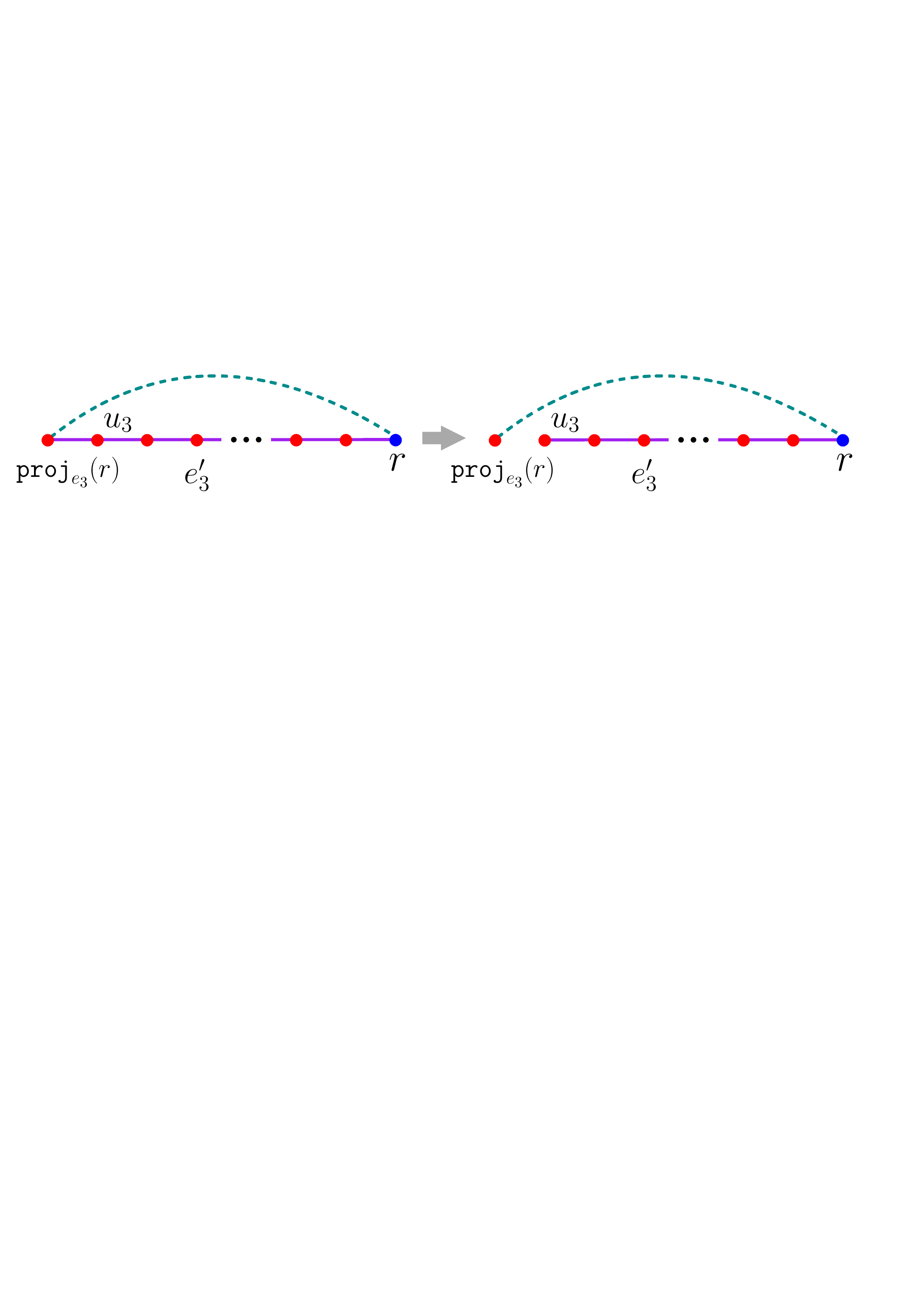}
		\label{fig-s192}
	}

	\caption[[Optional caption for list of figures]{Illustration of the Steps in \cref{algo_2}. We maintain following convention of colors. We use purple color while drawing the line segment of	 the set $\mathcal{L'} $. We use dashed black and dashed cyan line to denote vertical and horizontal projections, respectively of the points $ S $ to lines of $\mathcal{L'} $. Blue and red color points identify points from $ S $ and Steiner points, respectively.}
	\label{fig_step}
\end{figure}

\begin{lemma}\label{lem-S}
	For the point set $ S $, $G $  can be constructed in $ \mathcal{O}(n) $ time. 
\end{lemma}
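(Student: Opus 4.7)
The plan is to walk through Algorithm 2 block-by-block, charging constant work per point of $S$ and per segment of $\mathcal{L'}$, and to invoke \cref{lem-L} for the preprocessing.

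First, \cref{lem-L} gives $\mathcal{H}(\mathcal{OCP}(S))$ in $\mathcal{O}(n)$ time, together with the $\mathcal{O}(n)$ segments $L_1,\ldots,L_k$ and the associated Steiner points $q_i, p_{i,i+1}$. I would augment that construction so that each point $p\in S$ is labelled with a pointer to the unique histogram $p(H)$ containing it; this label is naturally available during the boundary sweep of $\mathcal{OCP}(S)$ that produces the $L_i$'s. Steps 1--6 of the algorithm (initialising $S$, $T$, $E$, and inserting $e'_1, e'_2, e'_3$ along with the projections of $l, b, r$) then run in $\mathcal{O}(n)$ time once the histograms of $l$, $b$, $r$ are known.

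For Steps 7--11, the key observation is that for each $q\in S$, the set $\mathcal{L'}\cap q(H)$ has constant size: it consists of the base of $q(H)$ together with at most one of $e'_1, e'_2, e'_3$ (the latter only for the three distinguished histograms $l(H), b(H), r(H)$). Hence each iteration of the outer loop performs $\mathcal{O}(1)$ projection tests and edge insertions, so the whole block costs $\mathcal{O}(n)$. Steps 12--19 iterate over at most $|S|\le n$ points, performing $\mathcal{O}(1)$ work each, so they are also $\mathcal{O}(n)$.

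The main obstacle is Steps 20--27, which require enumerating the vertices on each segment $L\in\mathcal{L'}$ in sorted order along $L$; a naive comparison sort per segment would cost $\mathcal{O}(n\log n)$. I would instead maintain each segment's vertex list in sorted order incrementally. The Steiner points from $\mathcal{H}$ are produced in monotone order along their host segments by the boundary walk of \cref{lem-L}, and the additional projections inserted in Steps 7--19 can be bucketed by their target segment; since $S$ is processed in convex-position (anticlockwise) order, the projections of $S$ onto any fixed axis-parallel $L$ appear in monotone order, so each insertion reduces to an $\mathcal{O}(1)$ append at the correct end of the list. Because $|V(G)|\le 4n$ after the algorithm, the total number of (vertex, segment) incidences is $\mathcal{O}(n)$, and Steps 20--27 finish in $\mathcal{O}(n)$ time. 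The deletions in Steps 28--32 are a constant number of edge removals plus a single pass along $L_k$, both $\mathcal{O}(n)$. Summing the blocks yields the claimed $\mathcal{O}(n)$ bound.
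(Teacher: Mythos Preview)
Your proof is correct and structurally the same as the paper's: invoke \cref{lem-L} for the histogram partition, then step through Algorithm~2 charging $\mathcal{O}(1)$ work per point of $S$ and per segment of $\mathcal{L}'$. The paper handles Steps~7--11 slightly differently, arguing that once $p_i$ projects onto $L_m$, the next point $p_{i+1}$ can only project onto $\{L_{m-1},L_m,L_{m+1}\}$, so a single boundary walk over $\mathcal{OCP}(S)$ suffices; your bound on $|\mathcal{L}'\cap q(H)|$ achieves the same end (though note a histogram may also border the adjacent $L_{i+1}$, not just its base and an $e'_j$---the paper accordingly says ``at most two'' projections per point). For Steps~20--27 you are more explicit than the paper, which simply asserts linearity without discussing sorted order; one small caveat is that projections onto $L_1$ arrive from two $xy$-monotone arcs (on either side of $L_1$), so the sorted vertex list along $L_1$ is obtained by merging two monotone streams rather than by a single append at one end.
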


\begin{proof}
	
	The construction of $ G $ from $ S $ consists of three Steps. In Step 1, we construct $ \mathcal{OCP(S)} $ from $ S $. As for each point $ p \in S $, we add exactly one Steiner point and draw two edges, $ \mathcal{OCP}(S) $ consists of $ 2n $ points including $ S $. So, Step 1 takes  $ \mathcal{O}(n) $  time. In Step 2, we construct a histogram partition $\mathcal{H}(\mathcal{OCP}(S))$ of $\mathcal{OCP}(S)$. By \cref{lem-L}, it needs $ \mathcal{O}(n) $ time. In the final Step, we apply \cref{algo_2} in $\mathcal{H}(\mathcal{OCP}(S))$ to construct our desired graph $G=(V,E)=(S \cup T, E)$.  Now we show \cref{algo_2} runs in $ \mathcal{O}(n) $ time. In this algorithm, Steps 1-4 take linear time. In Steps 7-11, for each point $ q\in S $, we perform orthogonal projections at most two times, i.e., we add at most two Steiner vertices and two edges. The points of $S$ are given in sorted order along their convex hull. Also, we have an ordered set of $ k $ line segments $ L_{1}, L_{2}, \dots, L_{k} $ with the ordering based on the construction of $\mathcal{H}(\mathcal{OCP}(S))$. Now, for any pair of points $ p_i $ and $ p_{i+1} $, where $ 1 \leq i \leq n $ if the point $ p_i $ has an orthogonal projection on $ L_m $ for some $ m $ then $ p_{i+1} $ can not have an orthogonal projection onto any line segment in $\mathcal{L} \setminus \{ L_{m-1}, L_{m}, L_{m+1} \} $. So it takes   $ \mathcal{O}(n+k)$ time to perform all the projections in Steps 7-11 by walking around the boundary of $ \mathcal{OCP}(S)$ once.  The Steps 12-15 occur only when both $H^r_{L_k} $ and $ e_2 $ exist. Now we have to do one more projection for each point of $ S \cap  H^r_{L_k}$ to $ e_2 $. So Steps 12-15 take linear time. Similarly, Steps 16-19 take linear time.  In Steps 20-25, we add edges to $ E $ by looking at each line segment of $ \{ L_{1}, L_{2}, \dots, L_{k}, e_1, e_2, e_3 \} $. As the number of projections is linear so the number of edges we add in Steps 20-25 is also linear. In Step 26, we delete some edges from $\{e_1, e_2, e_3, L_k\}$. So the total time complexity is $ \mathcal{O}(n+k)$.  As $ k \leq n $,  \cref{algo_2} produces $G$ in  $ \mathcal{O}(n) $ time. Hence the proof.  
\end{proof}

\subsection{$G$ is a $\mathsf{Manhattan~Network}$ }\label{sec-path}

To show that $G$ is a \pg, we have to prove that $G$ contains a shortest $L_1$ path between every pair of points in $ S $. Recall that $p(H)$ denotes the histogram containing $p \in S$ in $\mathcal{H}(\mathcal{OCP}(S))$ and $\mathcal{L}=\{ L_1, L_2, \dots L_k\} $ denotes the set of $ k $ segments inserted in $ \mathcal{OCP}(S) $ while constructing $\mathcal{H}(\mathcal{OCP}(S))$. First we prove the following lemma.

\begin{lemma}\label{lem-cut}
	For any two points $w$ and $z$ in $S$, if $w(H)\neq z(H)$ then there always exist lines $L$ and $L'$ such that $(i)~\mathtt{proj}_{L}(w) \in L,~\mathtt{proj}_{L'}(z) \in L'$ and $(ii)$ if we draw  a line $L^*$ that contains line $L$ (respectively, $L'$ ) then $w$ and $z$ belong to opposite sides of $L^*$.
\end{lemma}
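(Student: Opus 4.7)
The plan is to leverage the zigzag structure of the histogram partition $\mathcal{H}(\mathcal{OCP}(S))$. Let $H_w = w(H)$ with base $L_w$ and $H_z = z(H)$ with base $L_z$ in $\mathcal{L}$. The fundamental starting point is that, by the definition of a histogram, every point in a histogram projects orthogonally onto its base and the projection lies within the base segment. Hence $\mathtt{proj}_{L_w}(w)\in L_w$ and $\mathtt{proj}_{L_z}(z)\in L_z$ hold unconditionally; the entire content of the lemma is thus in arranging the separation condition (ii).

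First I would dispatch the degenerate case in which $L_w = L_z$. This occurs only when $\{H_w,H_z\}=\{H^l_{L_1},H^r_{L_1}\}$, since $L_1$ is the unique segment that serves as base of two histograms in the construction. Here $w$ and $z$ lie in opposite histograms across the vertical line extending $L_1$, so setting $L=L'=L_1$ immediately satisfies both (i) and (ii).

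The substantive case is $L_w\neq L_z$. I would exploit the construction order: assume without loss of generality that $L_w$ is drawn before $L_z$. Each successive step in the zigzag peels off a histogram using a segment perpendicular to the previous one, and the newly created histogram sits on the ``exterior'' side of its base while all previously constructed histograms sit on the opposite side. Applied to $L_z$ (which is constructed strictly after $L_w$), this shows $H_z$ lies on one side of $\ell(L_z)$ while the whole previously processed region (which contains $H_w$ and hence $w$) lies on the other. Thus $L'=L_z$ works for $z$. For $w$, I would first test whether $\ell(L_w)$ already separates $w$ from $z$; if so, take $L=L_w$. Otherwise, $H_z$ lies on the same side of $\ell(L_w)$ as $H_w$, and the histograms $H_w,H_z$ are separated instead by the perpendicular zigzag segment that bounds $H_w$ on the side facing $H_z$ (in many subcases this is $L_z$ itself). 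Since each histogram in the zigzag is ``sandwiched'' between its base and the next perpendicular segment that cuts it off, the coordinate of $w$ in the direction perpendicular to $L_w$ is forced to lie within the extent of that perpendicular segment, so $w$ projects into it, and the extended line of that segment separates $w$ from $z$ by construction.

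The main obstacle I foresee is the verification of this ``projection-into-extent'' claim in the case where $L_w$ does not itself separate, uniformly across all four terminating configurations (Types~1--4) and across adjacent as well as non-adjacent pairs of histograms. The delicate point is that the histograms can extend beyond the endpoints $q_i$ of the zigzag segments along the boundary of $\mathcal{OCP}(S)$, so one must argue that any further extension of $H_w$ beyond a segment $L_i$ is itself cut off by a later perpendicular segment of $\mathcal{L}$ whose extent still dominates $w$'s coordinate. This is where the convexity of $S$ and the ortho-convexity of $\mathcal{OCP}(S)$ (ensuring $xy$-monotonicity of the chains $C_{rt},C_{tl},C_{lb},C_{br}$) enter decisively, and where the proof will require the most careful case-checking.
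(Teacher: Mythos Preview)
Your approach is structurally different from the paper's, and the paper's argument is considerably shorter. The paper does not work with the bases $L_w,L_z$ of the containing histograms at all. Instead it normalizes by assuming $x(w)<x(z)$ (not by construction order of the bases) and locates each point directly against the increasing sequence $x(L_1)<x(L_3)<\cdots$ of $x$-coordinates of the vertical zigzag segments. For $w$ it takes the largest odd $i$ with $x(L_i)\le x(w)$ and sets $L=L_{i+2}$ if $w$ projects into $L_{i+2}$, otherwise $L=L_{i+1}$; for $z$ it takes the largest odd $j$ with $x(L_j)\le x(z)$ and sets $L'=L_j$ if $z$ projects into $L_j$, otherwise $L'=L_{j+1}$. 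That is the entire proof; no case split on Types~1--4, no discussion of which histogram contains which point.

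The obstacle you flag is genuine, and it is precisely where your route diverges from the paper's in a way that costs you. By tying $L$ to the \emph{base} of $w$'s histogram you are forced into the question of whether $w$ also projects into the next perpendicular segment bounding $H_w$, and as you correctly anticipate, this depends on how far $H_w$ extends along the boundary of $\mathcal{OCP}(S)$ and on the terminating configuration. The paper sidesteps all of this: by selecting segments according to where $x(w)$ and $x(z)$ fall among the $x(L_i)$ rather than according to histogram membership, the containing histogram becomes irrelevant and no ``projection-into-extent'' verification is needed. Your construction-order normalization is also what creates the asymmetry between the easy choice of $L'$ and the hard choice of $L$; the paper's coordinate normalization $x(w)<x(z)$ treats both points uniformly. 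If you wish to salvage your structural argument, the missing lemma is that whenever $L_{m+2}$ exists, the histogram based at $L_m$ cannot extend past the line $\ell(L_{m+2})$; this is true and follows from the stopping rule of the zigzag, but it is extra work the paper's coordinate approach simply does not need.
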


\begin{proof}
	Let $w$ and $z$ be two points in $S$ such that $w(H)\neq z(H)$. Without loss of generality we assume that $x(w) < x(z)$. By our construction of $\mathcal{H}(\mathcal{OCP}(S))$, $x(L_1)<x(L_3)< \dots$ and $y(L_2)>y(L_4)> \dots$. If $x(w) \leq x(L_1)$ then $L=L_1$. Let $x(w) \geq x(L_1)$ and $i$ be the largest integer such that $x(L_{i})\leq  x(w)$. If $L_{i+2}$ exists and $\mathtt{proj}_{L_{i+2}}(w)  \in L_{i+2}$ then $L= L_{i+2}$, else $L= L_{i+1}$. Similarly let $j$ be the largest integer such that $x(L_{j})\leq  x(z)$. If $\mathtt{proj}_{L_{j}}(z) \in L_{j}$ then $L'= L_{j}$, else $L'= L_{j+1}$. 
\end{proof}

Now we prove the following lemma.

\begin{lemma}\label{lem-path}
	For each pair of points $  p_i$ and $ p_j $ of $ S $ where $ 1 \leq i, j 
	\leq n $, there exists a shortest $ L_1 $ path in $ G $ between them.
\end{lemma}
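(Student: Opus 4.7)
The plan is to prove the lemma by explicit construction of a shortest $L_1$ path in $G$ for every pair $p_i,p_j\in S$, splitting the analysis on whether the two points lie in the same histogram of $\mathcal{H}(\mathcal{OCP}(S))$.

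First, suppose $p_i(H)=p_j(H)$. If $p_i$ and $p_j$ lie on a common $xy$-monotone chain of the boundary of $\mathcal{OCP}(S)$ (one of $C_{rt},C_{tl},C_{lb},C_{br}$), then the portion of that chain joining them is placed in $G$ at Step~3 of \cref{algo_2}, and being monotone in both coordinates already has $L_1$-length $\|p_i-p_j\|_1$. Otherwise $p_i$ and $p_j$ lie on two different chains inside $H$; by the way the histogram partition is built, $H$ must then contain one of the extrema $\{l,b,r\}$, so one of the auxiliary segments $e'_1,e'_2,e'_3$ crosses $H$. Projecting both points onto this auxiliary segment yields a $\Pi$-shaped path that uses the projection edges supplied by Steps~7--11 and the connected spine of the auxiliary segment supplied by Steps~20--27; since the auxiliary segment lies between $p_i$ and $p_j$ along the axis perpendicular to its direction, a short coordinate calculation shows that this path has $L_1$-length exactly $\|p_i-p_j\|_1$.

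Second, suppose $p_i(H)\neq p_j(H)$. I would invoke \cref{lem-cut} to obtain lines $L,L'\in\mathcal{L'}$ with $\mathtt{proj}_L(p_i)\in L$, $\mathtt{proj}_{L'}(p_j)\in L'$, and with $p_i,p_j$ strictly separated by the supporting lines of both $L$ and $L'$. I would then exhibit a path
\[
p_i\;\longrightarrow\;\mathtt{proj}_L(p_i)\;\rightsquigarrow\;\mathtt{proj}_{L'}(p_j)\;\longrightarrow\;p_j
\]
in $G$, the outer edges coming from Steps~7--11 and the middle walk confined to the connected sub-network $\bigcup\mathcal{L'}$ formed by Steps~20--27. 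The separation property forces every $L_1$-geodesic to cross both $L$ and $L'$, so the contributions $d(p_i,L)$ and $d(p_j,L')$ cannot be saved; it then suffices to traverse the $\mathcal{L'}$-segments in the order in which they appear in the recursive construction of $\mathcal{H}(\mathcal{OCP}(S))$ between $L$ and $L'$, maintaining coordinate monotonicity of the walk so that the total length equals $\|p_i-p_j\|_1$.

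To finish, I would verify that the deletions in Steps~28--32 do not destroy any of the paths above: each deleted edge is either an end-piece of an auxiliary segment or a redundant portion of $L_k$, and the corresponding path can always be re-routed through the boundary of $\mathcal{OCP}(S)$ at the same $L_1$-cost. The main obstacle I anticipate is the case analysis for the four termination types of $\mathcal{H}(\mathcal{OCP}(S))$: depending on whether $L_k$ is horizontal or vertical and on how $b$ and $r$ project onto $L_{k-1}$, the structure of the terminal histogram changes, so the routing through $L_k$, $e'_2$, and $e'_3$ has to be tailored for each type, together with a few degenerate configurations (for instance, two extrema coinciding inside the same histogram) that require separate attention.
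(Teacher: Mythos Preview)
Your proposal is correct and follows the same three-way case split as the paper (same $xy$-monotone chain; same histogram but different chains; different histograms via \cref{lem-cut}). The one genuine difference is in the different-histogram case: the paper routes from $\mathtt{proj}_L(p_i)$ to $\mathtt{proj}_{L'}(p_j)$ along the \emph{boundary} of $\mathcal{OCP}(S)$ --- down $L$ to its foot $q_m\in C_{lb}$, then along $C_{lb}$ to the analogous point on $L'$ --- whereas you stay on the interior staircase $\bigcup_i L_i$. Both walks are $xy$-monotone and hence realise length $\|p_i-p_j\|_1$; your route is conceptually cleaner, while the paper's has the minor robustness advantage that the boundary edges from Step~3 are unconditionally present (your route needs the edge $\overline{p_i\,\mathtt{proj}_L(p_i)}$ to have been inserted in Steps~7--11, i.e.\ $L\in\mathcal{L'}\cap p_i(H)$, which is true because the separating segment produced by \cref{lem-cut} is always one of the two $\mathcal{L}$-segments bounding $p_i(H)$, but this deserves a sentence). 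Your observation in Case~B.1 that a histogram meeting two distinct chains must contain one of $\{l,b,r\}$ is correct and in fact sharper than the paper, whose subcase~(4) is vacuous; the residual difficulty you anticipate --- the terminal histogram in Types~2 and~4 where $b(H)=r(H)$ and three chains can coexist --- is exactly where the paper's own case analysis becomes delicate, and both $e'_2$ and $e'_3$ are needed there rather than a single auxiliary segment.
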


\begin{proof}
	$\mathcal{OCP}(S) $ consists of four $ xy$-monotone chains $ C_{rt}, C_{tl}, C_{lb}, \text{and}~ C_{br} $. Let $
	p_{i} $ and $ p_{j}$ be two arbitrary points of $ S $ where $ 1 \leq i, j \leq n $.  Let $ \pi_G (a, b)=\langle a, \ldots, v_{i},\ldots,b \rangle$ denotes a shortest $L_1$ path between a pair of vertices $ a $ and $ b $ in $ G $. Let $P_1$ and $P_2$ be two paths from $a$ to $b$ and $b$ to $c$, respectively. By $P_1 \leadsto P_2$ we mean the path from $a$ to $c$ that is   obtained by concatenating the paths $P_1$ and $P _2$.  The proof of this theorem can be divided into Case A and Case B. 			
	\begin{description}
		\item[Case A: Both $ \bm{p_{i}} $ and $ \bm{p_{j}}$ belong to the same $ xy$-monotone chain:] Each $xy$-monotone chain of the ortho-convex polygon $\mathcal{OCP}(S) $ is a Manhattan network for the points it contains.

		
		\item[Case B: $ \bm{p_{i}} $ and $ \bm{p_{j}}$  belong to different chains:]  
		
		We divide this case into two subcases B.1. and B.2.

		\item[Case B.1. $ \bm{p_{i}(H)}= \bm{p_{j}(H)}$, i.e., $ \bm{p_{i}} $, $ \bm{p_{j}}$ belong to the same histogram]
		{~

			\begin{description}

				\item[\textbf{(1)}  $ \bm{{p_{i}}, {p_{j}} \in l(H)}$:]
				
				If $ p_i \in  C_{tl}  $, $ p_j \in C_{lb} $ then $\pi_G (p_i,p_j)= \langle p_i,\mathtt{proj}_{e'_1}(p_i)   \rangle  \\ \leadsto \pi_G (\mathtt{proj}_{e'_1}(p_i) ,\mathtt{proj}_{e'_1}(p_j) )  \leadsto \langle \mathtt{proj}_{e'_1}(p_j) , p_j \rangle $.

				\item[\textbf{(2)} $ \bm{{p_{i}}, {p_{j}} \in {r(H)}}$:] For Types 1, 2, or 3, if $ p_i \in  C_{rt}  $~and $ p_j \in C_{br} \cup C_{lb} $, then  $\pi_G (p_i,p_j)=\langle p_i,\mathtt{proj}_{e'_3}(p_i)  \rangle  \leadsto \pi_G (\mathtt{proj}_{e'_3}(p_i) ,\mathtt{proj}_{e'_3}(p_j))   \leadsto \langle \mathtt{proj}_{e'_3}(p_j), p_j \rangle $. 
				For Type 3,  if $ p_i \in  C_{lb}  $ and $ p_j \in C_{br} $ then $\pi_G (p_i,p_j)= \langle p_i,\mathtt{proj}_{L_k}(b) \rangle  \leadsto \pi_G (\mathtt{proj}_{L_k}(b),p_j) $.
				For Type 4, if $ p_i \in  C_{lb}  $ and $ p_j \in C_{br} $ then $\pi_G (p_i,p_j)= \langle p_i,\mathtt{proj}_{e'_2}(p_i) \rangle  \leadsto \pi_G (\mathtt{proj}_{e'_2}(p_i),\mathtt{proj}_{e'_2}(p_j)) \\ \leadsto \langle \mathtt{proj}_{e'_2}(p_j), p_j \rangle $.

				\item[\textbf{(3)} $ \bm{{p_{i}}, {p_{j}} \in {b(H)}}$:] For Types 1 or 3, if $ p_i \in  C_{lb}  $~and $ p_j \in C_{br} \cup C_{rt} $, then $\pi_G (p_i,p_j)=\langle p_i,\mathtt{proj}_{e'_2}(p_i) \rangle  \leadsto \pi_G (\mathtt{proj}_{e'_2}(p_i),\mathtt{proj}_{e'_2}(p_j))  \leadsto \langle \mathtt{proj}_{e'_2}(p_j), p_j \rangle $. 
				For Type 1, (i) if $ p_i \in  C_{rt}  $ and $ p_j \in C_{br} $ then $\pi_G (p_i,p_j)= \langle p_i,\mathtt{proj}_{L_k}(r) \rangle  \leadsto \pi_G (\mathtt{proj}_{L_k}(r),p_j) $ or $\pi_G (p_j,p_i)= \langle p_j,\mathtt{proj}_{L_{k-1}}(p_j) \rangle 
				\leadsto \pi_G (\mathtt{proj}_{L_{k-1}}(p_j),p_i) $.
				(ii) if $ p_i \in  C_{lb}  $ and $ p_j \in C_{rt} $ then the shortest $ L_{1} $ path between $ \bm{p_{i}} $ and $ \bm{p_{j}}$ in $ G $ is  $\pi_G (p_i,p_j)=\langle p_i,\mathtt{proj}_{e'_2}(p_i) \rangle  \leadsto \pi_G (\mathtt{proj}_{e'_2}(p_i),\mathtt{proj}_{e'_2}(p_j))  \leadsto \langle \mathtt{proj}_{e'_2}(p_j), p_j \rangle $ or $\pi_G (p_j,p_i)= \langle p_i,\mathtt{proj}_{L_{k-1}}(p_i) \rangle  \leadsto \pi_G (\mathtt{proj}_{L_{k-1}}(p_i),p_j) $.
				For Types 2 or 4 as $ b(H)=r(H) $, it is similar as subcase (2) of B.1.

				\item[\textbf{(4)} $ \bm{{p_{i}}, {p_{j}} \notin \{l(H), b(H), r(H)\}}$:] Let these histograms  contain two elements say $ L $ and $ L' $ of $ \mathcal{L} $. In this case $\pi_G (p_i,p_j)= \langle p_i,\mathtt{proj}_{L}(p_i) \rangle \leadsto \pi_G (\mathtt{proj}_{L}(p_i),\mathtt{proj}_{L}(p_j))
				\leadsto \langle \mathtt{proj}_{L}(p_j),p_j \rangle $ or  $\pi_G (p_i,p_j)= \\ \langle p_i,\mathtt{proj}_{L'}(p_i) \rangle \leadsto \pi_G (\mathtt{proj}_{L'}(p_i),
				\mathtt{proj}_{L'}(p_j))  \leadsto \langle \mathtt{proj}_{L'}(p_j),p_j \rangle $.
				
			\end{description}

		}

		\item[Case B.2. $ \bm{p_{i}(H)} \neq \bm{p_{j}(H)}$:]
		
		First, we find line segments $ L, L' \in \mathcal{L} $ such that $(i)$ $ L $ can see $ p_i $, $ L'$ can see $ p_j $, and $(ii)$  if we draw  a line $L^*$ that contains line $L$ (respectively, $L'$ ) then $p_i$ and $p_j$ belong to opposite sides of $L^*$. By \cref{lem-cut} both $ L$ and $L' $ exist in $ \mathcal{L} $ but it may happen that $ L=L' $ e.g., for the points $ p_2 $ and $ p_n, $ $ p_2(H) \neq p_n(H) $ with $ L=L'$. By the construction of $ G $, both $ \mathtt{proj}_{L}(p_i) $ and $ \mathtt{proj}_{L'}(p_j) $ belong to $ T \subset V$.
		We complete this case by proving following lemma. 
		\begin{lemma}
			Let $ w $ and $ z $ be two points in $ S $ such that $ {w(H)} \neq {z(H)}$. Also let $ L $ and $ L' $ be two segments such that $(i)~\mathtt{proj}_{L}(w) \in L,~\mathtt{proj}_{L'}(z) \in L'$ and $(ii)$ if we draw  a line $L^*$ that contains line $L$ (respectively, $L'$ ) then $w$ and $z$ belong to opposite sides of $L^*$.  Then there exist a shortest $ L_1 $ path between $ \mathtt{proj}_{L}(w)$ and $ \mathtt{proj}_{L'}(z) $ in $ G $.
		\end{lemma}

			\noindent{ \emph{Proof.}} Without loss of generality, we assume that $ x(w)<x(z) $. If $ L=L' $ then $\pi_G (\mathtt{proj}_{L}(w),\mathtt{proj}_{L'}(z))$ is along the line $ L $.
			For example, if we take $ w=p_2 $ and $ z=p_n $ then $ L=L'=L_1 $. So we are left with the case when 
			$ L\neq L' .$  For example, in \cref{fig-graph}, considering $ l $ as $ w $ and $ r $ as $ z $ we find $ L=L_1$ and $ L'=L_k $.  Rest of the proof can be divided into two cases. Recall that $ \{L_1, L_2, \dots L_k\} $ are the segments inserted in $ \mathcal{OCP}(S) $ while constructing $\mathcal{H}(\mathcal{OCP}(S))$. The point set $ \{q_i \colon 1 \leq i \leq k\} $ comes from the construction of $\mathcal{H}(\mathcal{OCP}(S))$. Assuming $ l=q_0 $, $ L_i $ is the segment with end points $ q_{i-1} $ and $ q_{i} $, where $ 1 \leq i \leq k $.
			\\
			\underline{\textbf{Case 1. $ \bm{L}$ is vertical$\bm{\colon}$}} 	  	
			\noindent Let $ L=L_m $ for some $ m, 1 \leq m \leq k $. So $ L_{m}= \overline{q_{m-1}q_{m}} $. By the construction of $\mathcal{H}(\mathcal{OCP}(S))$, $ q_m $ is not only a point on the boundary of $ \mathcal{OCP}(S) $ but also there exists a point say $ p_j $ in $ S $ such that $ q_m \in \overline{p_{j,j-1 } p_{j}} $. Now we divide this case into following two subcases.
			
			\begin{description}
				\item[Case 1.1 $ \bm{L'}$ is vertical:] By similar argument as $ L $, there exists a point $ p_{j'} $ such that $ y(p_{j'}) \geq y(z)$ and $ p_{j'} \in L' $. For this case, a shortest $ L_1 $ path between $ \mathtt{proj}_{L}(w)$ and $ \mathtt{proj}_{L'}(z) $ in $ G $ is $ \pi_G (\mathtt{proj}_{L}(w),q_{m}) \leadsto \pi_G (q_{m}, p_j)  \leadsto \pi_G ( p_j, p_{j'}) \leadsto \pi_G (p_{j'}, \mathtt{proj}_{L'}(z))  $. By repeatedly applying this argument we can find $ \pi_G ( p_j, p_{j'}) $. For an illustration, see \cref{fig-vv}.
				\noindent
				\begin{figure}[ht!]
					
					\subfigure[]
					{
						\includegraphics[scale=0.550]{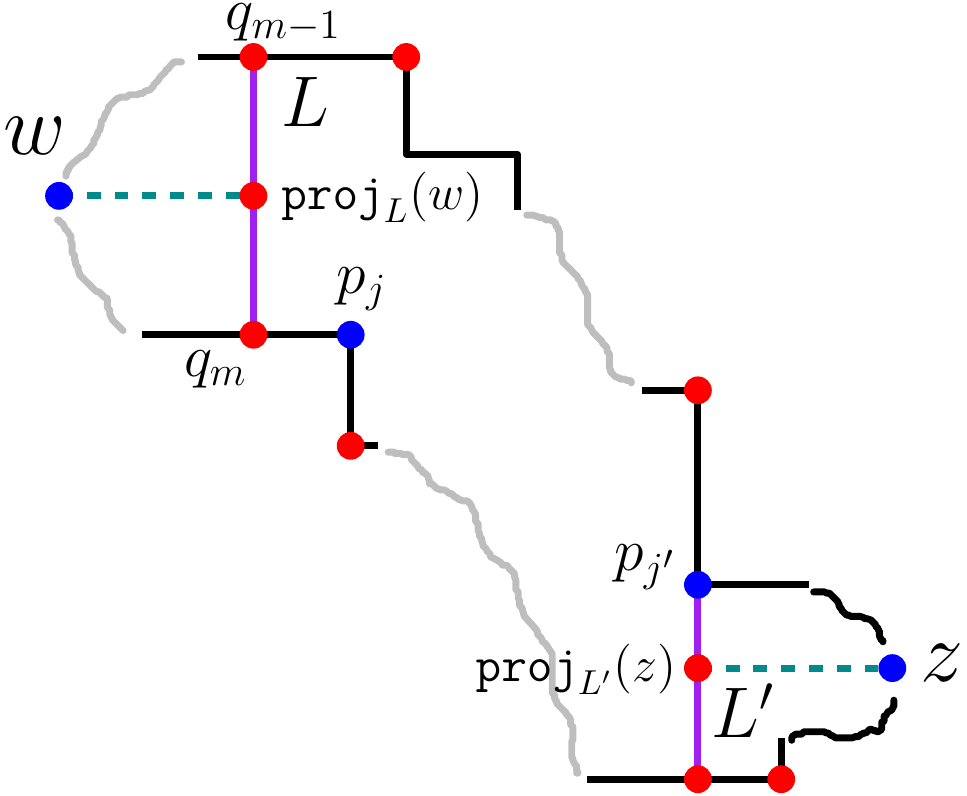}
						\label{fig-vv}
					}	
					\subfigure[]
					{
						\includegraphics[scale=0.550]{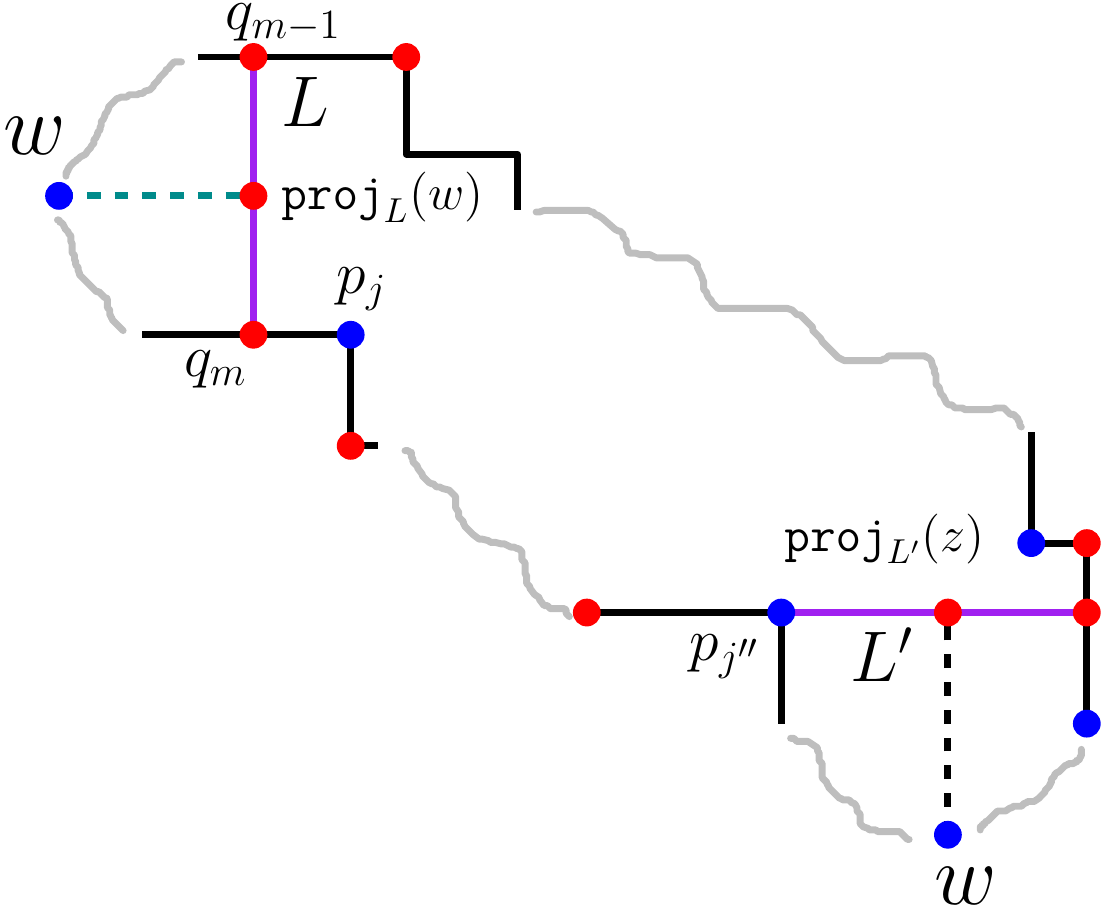}
						\label{fig-vh}
					}
					
					\caption{(a) Both $ L $ and $ L' $ are vertical. (b) $ L $ is vertical, $ L' $ is horizontal.} 
					\label{}
				\end{figure}

				\item[Case 1.2 $ \bm{L'}$ is horizontal:] By similar argument as $ L $, there exists a point $ p_{j''} $ such that $ x(p_{j''}) \geq x(w)$ and $ p_{j''} \in L' $. Rest of this case is similar as case 1.1. Here a shortest $ L_1 $ path between $ \mathtt{proj}_{L}(w)$ and $ \mathtt{proj}_{L'}(z) $ in $ G $ is $ \pi_G (\mathtt{proj}_{L}(w),q_{m}) \leadsto \pi_G (q_{m}, p_j)  \leadsto \pi_G ( p_j, p_{j''}) \leadsto \pi_G (p_{j''}, \mathtt{proj}_{L'}(z))  $. By repeatedly applying this argument we can find $ \pi_G ( p_j, p_{j''}) $.  For an illustration, see \cref{fig-vh}.
			\end{description}
			
			\noindent \underline{\textbf{Case 2. $ \bm{L}$ is horizontal $\bm{\colon}$}} 
			Proof for this case is similar as case 1.

	\end{description}
\end{proof}

\subsection{Planarity of $ G $}\label{sec-planar}
In this section, we show that the graph $ 	G=(V,E) $ is planar by providing a planar embedding. For an illustration, see \cref{fig-G'}.

\begin{figure}[ht!]
	
	\subfigure[]
	{
		\includegraphics[scale=0.315]{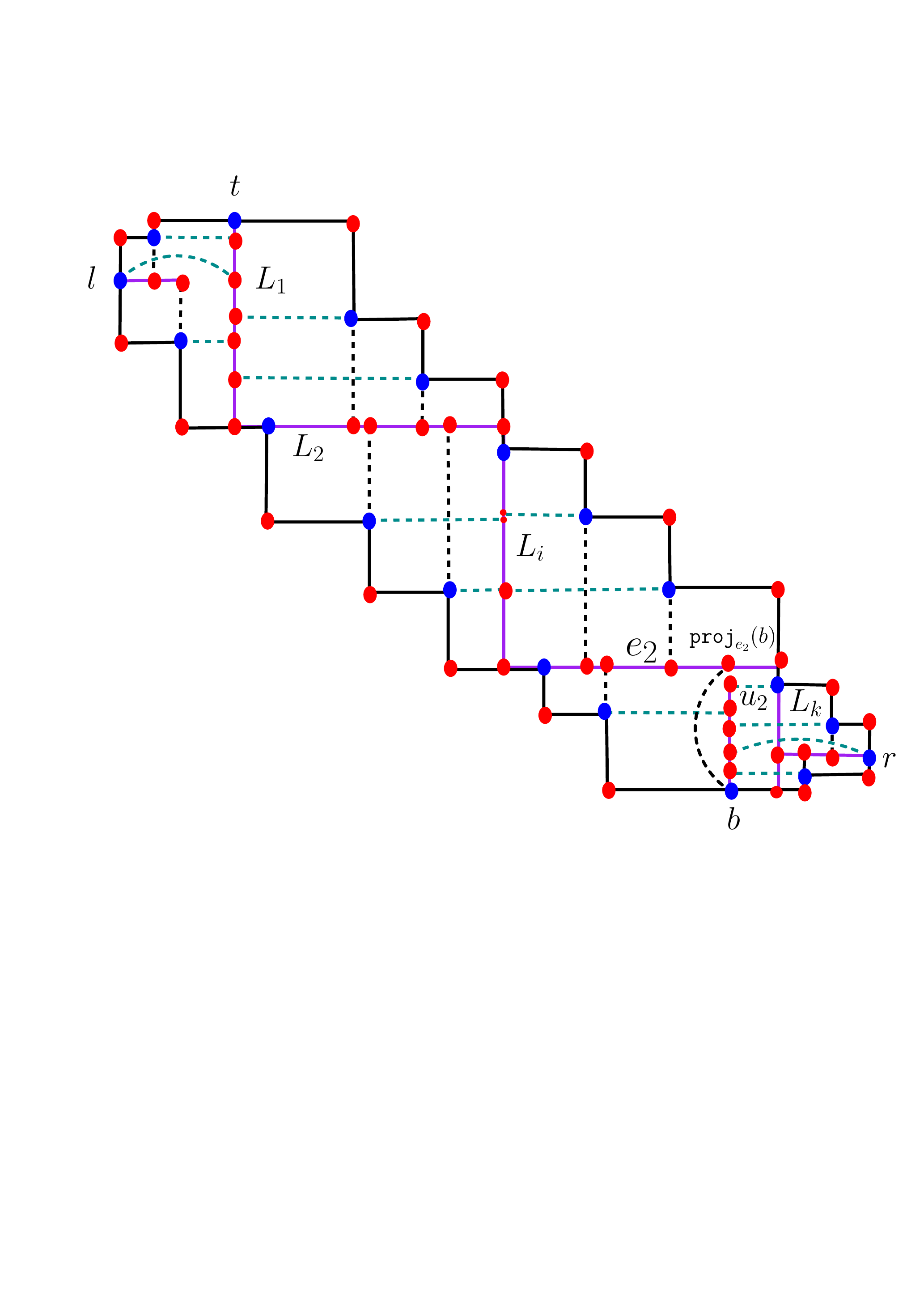}
		\label{fig-graph}
	}	
	\subfigure[]
	{
		\includegraphics[scale=0.315]{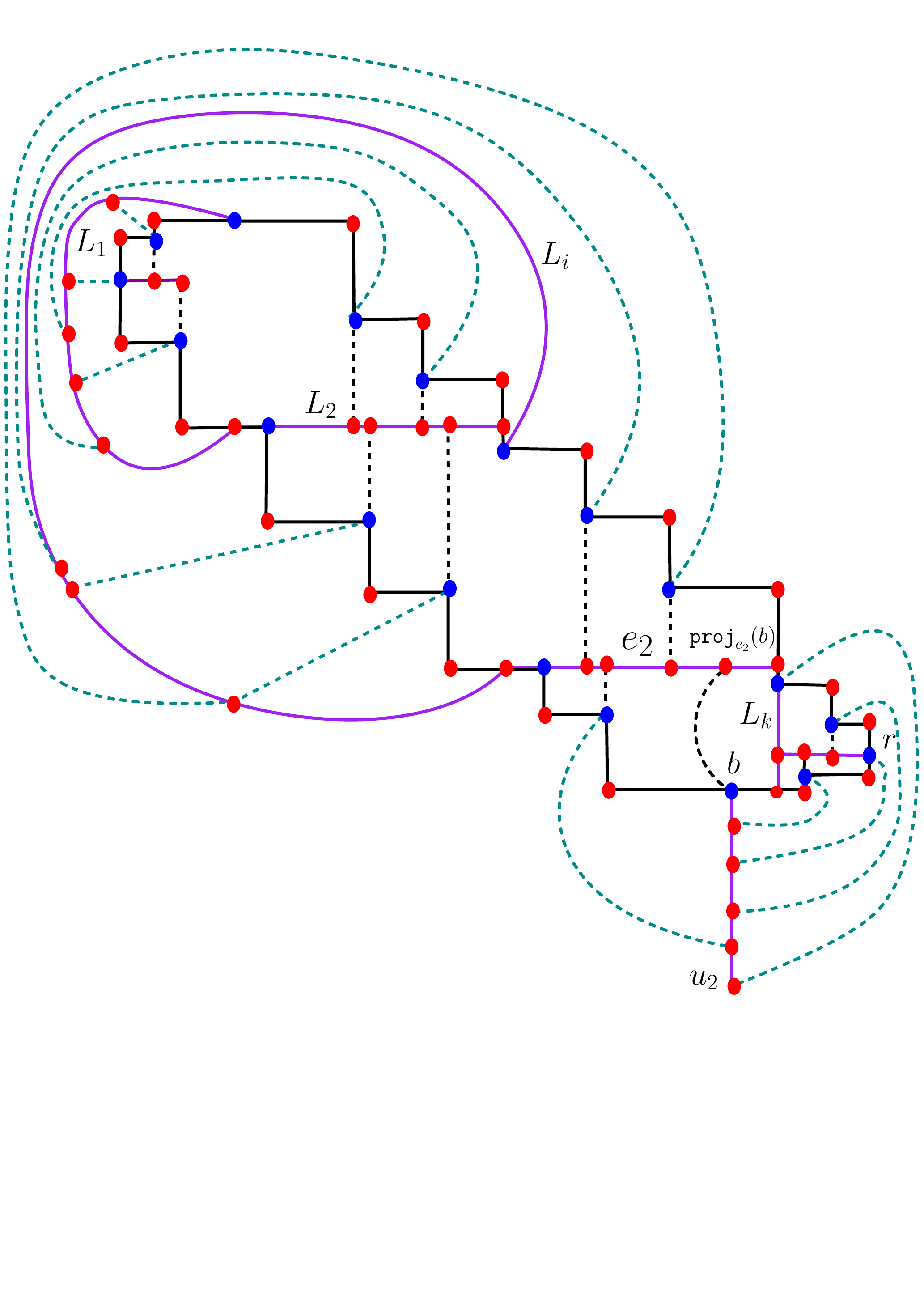}
		\label{fig-G'}
	}
	
	\caption{(a) Output $G$ of \cref{algo_2} for point set in blue color. (b) Planar embedding of $ G $.} 
	\label{}
\end{figure}

 \par
The {\em union} of two graphs $ G_{1}= (V_{1}, E_{1}) $ and $ G_{2}=(V_{2}, E_{2}) $  is defined as the graph $ (V_1 \cup V_2, E_1 \cup E_2)$ \cite{west1996introduction}. We will make use of the following theorem
regarding planar graphs.

\begin{theorem} \cite{gibbons1985algorithmic}\label{theo_planar}
	A planar embedding of a graph can be transformed into another planar embedding such that any specified face becomes the exterior face. 
	
\end{theorem}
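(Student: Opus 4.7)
The plan is to use the standard stereographic projection argument, whose key insight is that the distinction between ``interior'' and ``exterior'' faces is an artifact of drawing the graph in the plane rather than on the sphere. Once the embedding is lifted to $S^2$, all faces are topologically equivalent regions of the surface, so we can freely relabel which one is the ``outside.''

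First, I would take the given planar embedding in the plane $\mathbb{R}^2$ and lift it to an embedding on the $2$-sphere $S^2$ via inverse stereographic projection from the north pole. This is a homeomorphism between $\mathbb{R}^2$ and $S^2 \setminus \{N\}$, so it preserves the combinatorial structure of the embedding (the rotation system at each vertex and all vertex-edge-face incidences). The only effect on the face structure is that the original unbounded exterior face becomes the face of the spherical embedding that contains the north pole $N$.

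Next, let $F$ be the face that we wish to turn into the new exterior face, and pick any interior point $p$ of the image of $F$ on $S^2$. I would apply a rotation of $S^2$ that sends $p$ to the north pole $N$. Since rotations are homeomorphisms of $S^2$, this again preserves the embedding combinatorially; now the face $F$ is the one containing $N$. Finally, project stereographically from $N$ back down to $\mathbb{R}^2$. Because $N$ lies in the interior of $F$, its image under projection is the unbounded component of the complement of the drawn graph, i.e., $F$ is now the exterior face, while every other face maps to a bounded region.

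There is no real obstacle here beyond the routine verification that each step (inverse stereographic projection, rotation of $S^2$, and stereographic projection) is a homeomorphism of the relevant surface and hence preserves the cyclic order of edges around every vertex, so that the resulting planar drawing is indeed an embedding with the same combinatorial map as the original, differing only in the choice of outer face.
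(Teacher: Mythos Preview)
Your argument is correct and is precisely the classical stereographic projection proof of this fact. However, the paper itself does not prove this theorem at all: it is stated with a citation to Gibbons' \emph{Algorithmic Graph Theory} and then used as a black box in the planarity argument of \cref{theo-G}. So there is no ``paper's own proof'' to compare against; you have simply supplied the standard textbook proof that the paper chose to cite rather than reproduce.
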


\paragraph{\textbf{Relation to k-plane graphs} \cite{garcia2015geometric}} A geometric graph $G = (V, E)$ is said to be $k$-plane garph for some $k \in N$ if $ E $ can be partitioned into $k$ disjoint subsets,$E=E_1 \cupdot E_2 \cupdot \dots \cupdot E_k$, such that $G_1 = (V, E_1), G_2=(V, E_2), \dots, G_k = (V, E_k )$ are all plane graphs, where $\cupdot$ represents the disjoint union. For a finite general point set $ P $ in the plane, $\mathcal{G}_k(P)$ denotes the family of $k$-plane graphs with vertex set $P$. As per as our constrcution, the graph we construct to form a \pgt~for convex point set is basically a 2-plane graph.

\begin{theorem}\label{theo-G}
	Graph $ G $ computed in \cref{algo_2} is planar.
\end{theorem}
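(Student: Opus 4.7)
The plan is to exhibit an explicit planar embedding of $G$, following the flavour of the square example from \cref{sec-pre}. My first observation is that $G$ admits a natural $2$-plane decomposition $E = E_1 \cupdot E_2$, where $E_1$ is the set of horizontal edges of $G$ (horizontal portions of $\partial\mathcal{OCP}(S)$, horizontal elements of $\mathcal{L}'$, and horizontal projection edges) and $E_2$ is the set of vertical edges. Two straight horizontal segments at distinct $y$-coordinates are disjoint, while two at the same $y$-coordinate either share no point or meet only at a common endpoint, so $(V, E_1)$ is plane in its straight-line drawing, and an identical argument handles $(V, E_2)$. Every crossing in the straight-line drawing of $G$ is therefore between one horizontal and one vertical edge, matching the $2$-plane formalism recalled just before the theorem.

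Next, I start building a planar embedding from the plane straight-line drawing of $\partial\mathcal{OCP}(S)$ together with the cuts $L_1, \ldots, L_k$. By the construction in \cref{sec-ortho} this is already plane, and its bounded faces are exactly the histograms of $\mathcal{H}(\mathcal{OCP}(S))$. Inside any non-corner histogram the only edges that \cref{algo_2} draws are the projections from points of $S$ to the unique element of $\mathcal{L}'$ sitting in that histogram; these are parallel orthogonal chords and so slot into the histogram face without crossings.

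The difficulty is concentrated in the three corner histograms $l(H)$, $b(H)$, $r(H)$, which contain the extra segments $e'_1, e'_2, e'_3$ produced in Step 4, and in the terminal histogram incident to $L_k$, where Steps 12--19 introduce the extra cross-projections to $e_2$ or $e_3$. Inside such a histogram the projections to the base and the projections to $e'_j$ are mutually perpendicular, so in the straight-line drawing they cross, as is visible in \cref{fig-graph}. My plan in each such case is to fix a single face $f$ of the subgraph embedded so far whose boundary walks through the endpoints of all the offending projections, apply \cref{theo_planar} to push $f$ onto the outer face, and then route the offending projections as pairwise non-crossing simple arcs through this now-unbounded region. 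Because within one corner histogram the source points of these projections are linearly ordered along $\partial\mathcal{OCP}(S)$ and their targets are linearly ordered along $e'_j$, the required arcs can be nested without mutual crossings, just as the segments $\{(0,\tfrac{i}{n}),(1,\tfrac{i}{n})\}$ were nested outside $Q$ in \cref{fig-pp3}.

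The main obstacle I anticipate is the casework needed to verify that such a common face $f$ actually exists in every one of the four Types of termination in \cref{fig_hist} and for each of the three corner histograms. The three edge deletions in Steps 29--31 and the vertex deletion in Step 32 of \cref{algo_2} are exactly what make this possible: each of them severs an otherwise-present chord along $e'_1, e'_2, e'_3$ or $L_k$ whose presence would split the intended face $f$ into two smaller faces and destroy the nested routing. Once the casework is executed and the rotation system at every vertex is checked to be consistent, the resulting drawing is the planar embedding of $G$ depicted in \cref{fig-G'}.
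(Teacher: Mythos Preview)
Your high-level plan---route one family of perpendicular projection edges into the unbounded region using \cref{theo_planar}---is indeed the mechanism the paper uses. But your reduction to ``only the three corner histograms and the terminal histogram are problematic'' is wrong, and this is where the argument breaks.

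Consider any non-corner histogram, say the one bounded by the vertical cut $L_{2i+1}$ and the horizontal cuts $L_{2i}$, $L_{2i+2}$. Step~8 of \cref{algo_2} iterates over every $L\in\mathcal{L}'\cap q(H)$, and the paper explicitly treats such a histogram as containing \emph{two} elements of $\mathcal{L}$ (see the proof of \cref{lem-path}, Case~B.1.(4)). So a point $q$ on $C_{rt}$ inside, say, $H^r_{L_1}$ acquires both a horizontal projection to $L_1$ and a vertical projection to $L_2$. For two such points $q,q'$ with $y(q)>y(q')$ and hence $x(q)\le x(q')$, the vertical segment $\overline{q\,\mathtt{proj}_{L_2}(q)}$ and the horizontal segment $\overline{q'\,\mathtt{proj}_{L_1}(q')}$ meet at $(x(q),y(q'))$, a genuine crossing in the straight-line drawing. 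Thus every intermediate histogram, not just the corners, contributes a grid of crossings, and your ``parallel orthogonal chords'' picture is false there.

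This is exactly why the paper does not treat histograms independently. It first peels off the terminal piece $K$ below (or right of) $L_{k-1}$, and then proves planarity of the remaining part $H$ by an induction along the horizontal cuts $L_2,L_4,\ldots,L_{k-1}$: at each step it shows $G_i$ (the part on or above $L_{2i}$) has a planar embedding with $V_i$ on the outer face, absorbs the next slab $H_1$, splits the new edges into those touching $L_{2i+1}$ versus the rest, and uses \cref{theo_planar} to re-glue. Your plan can be repaired along these lines, but the inductive bookkeeping---maintaining that the current lower boundary $V_i$ stays on the outer face after each re-routing---is the actual content of the proof, and it is missing from your proposal.
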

\begin{proof}
	We decompose  $ G $ into two subgraphs $ H $ and $ K $ such that $ G= H \cup  K $.
	This decomposition depends on the line $ L_{k-1} $. In order to construct the histogram partition in $ \mathcal{OCP}(S) $, $ L_{k-1} $ may be horizontal or vertical. For Types 1 and 2, $ L_{k-1} $ is horizontal. For Types 3 and 4, $ L_{k-1} $ is vertical. We  analyse each of the following two cases.
	
	\begin{figure}[ht!]
		\hspace{5mm}
		\subfigure[]
		{
			\includegraphics[scale=0.31]{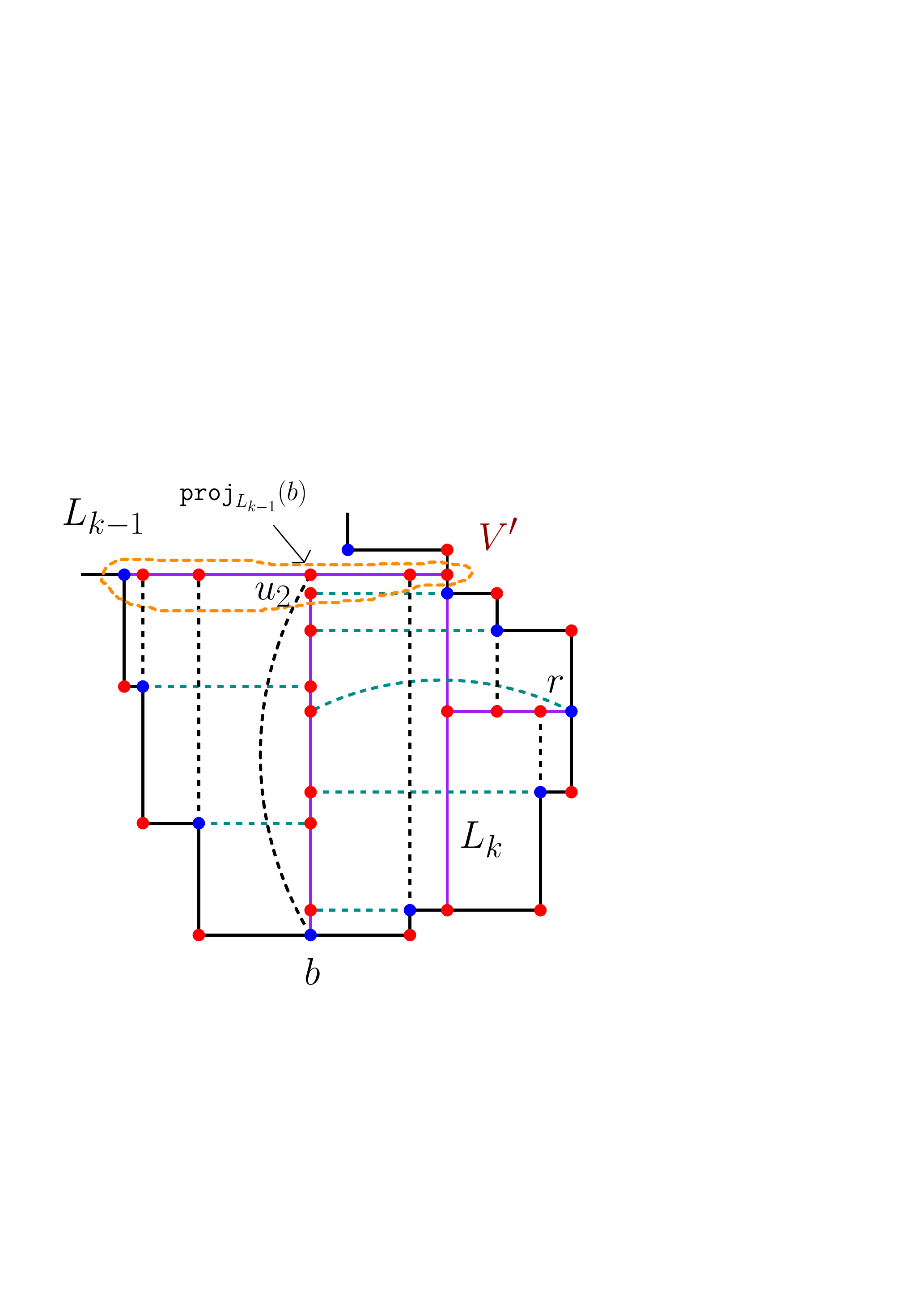}
			\label{}
		}
		\hspace{15mm}	
		\subfigure[]
		{
			\includegraphics[scale=0.31]{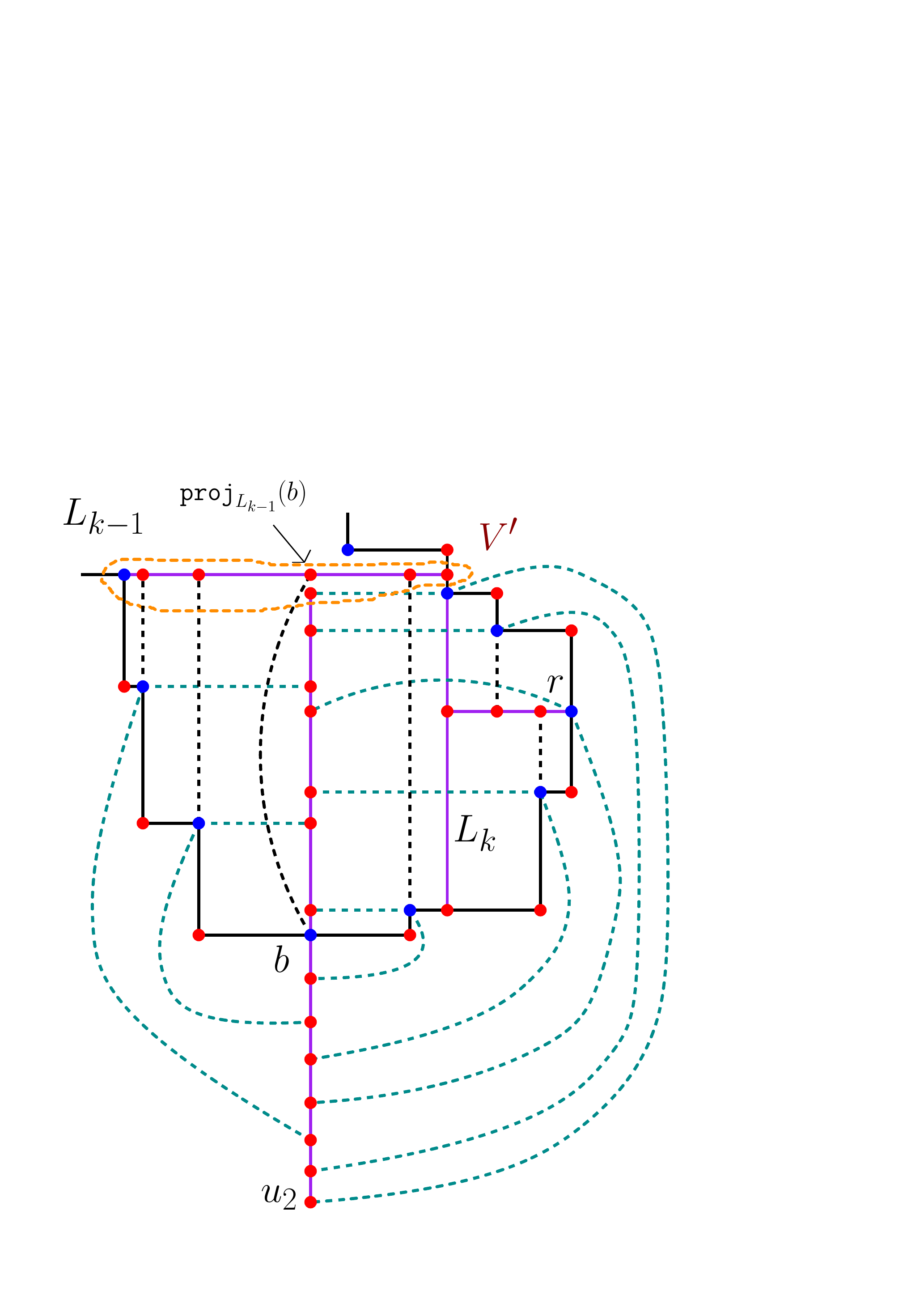}
			\label{}
		}
		
		\caption{ (a) The subgraph $ \bm{K} $ of $ G $ for \textbf{Type 1 } with the exterior face containing $ V'$. (b) planar embedding of $ \bm{K} $. The edges of $E_b$ are shown by  dashed cyan segment.}
		\label{fig-k1}
	\end{figure}	 
	
	\noindent \underline{\textbf{Case 1. $ \bm{L_{k-1} }$ is horizontal$\bm{\colon}$}} 		
	\noindent	$ H $ and $ K $ are the subgraphs of $ G $ induced by the vertices lying above and below, respectively of the line segment $ L_{k-1} $, i.e., $V(H) =\{v \colon v \in V,~y(v) \geq y(L_{k-1})\}$, where $y(L_{k-1})$ is the $ y $-coordinate of any point on the segment $ L_{k-1}$. Similarly, $V(K)= \{v \colon v \in V,~y(v) \leq y(L_{k-1})\}$. Let $ V'= V(H) \cap V(K)$. We want to show that $ G $ is planar, i.e., there exists a planar embedding $ G' $ of $ G $. If we are able to show that there exist two planar embeddings, $ H' $  for $ H $ and  $ K' $  for $ K $, such that  $ V' $ belongs to the exterior faces of both $ H' $ and $K'$, then we can obtain a  planar embedding $ G' $ of $ G $ by attaching the embeddings of $ H' $ and $ K' $ along the exterior face. Now our target is to show that $ H $ and $K $ have planar embeddings $ H'$ and $ K' ,$ respectively, such that $ V' $ is contained in the exterior face of   both $ H' $ and $ K' $. We define  $ V^f \subseteq V$ to denote the set of vertices of $ G $ along the boundary of $ \mathcal{OCP}(S) $. To Get a planar embedding of $ G $, we prove following two lemmas.
	
	\begin{figure}[ht!]
		\hspace{5mm}
		\subfigure[]
		{
			\includegraphics[scale=0.31]{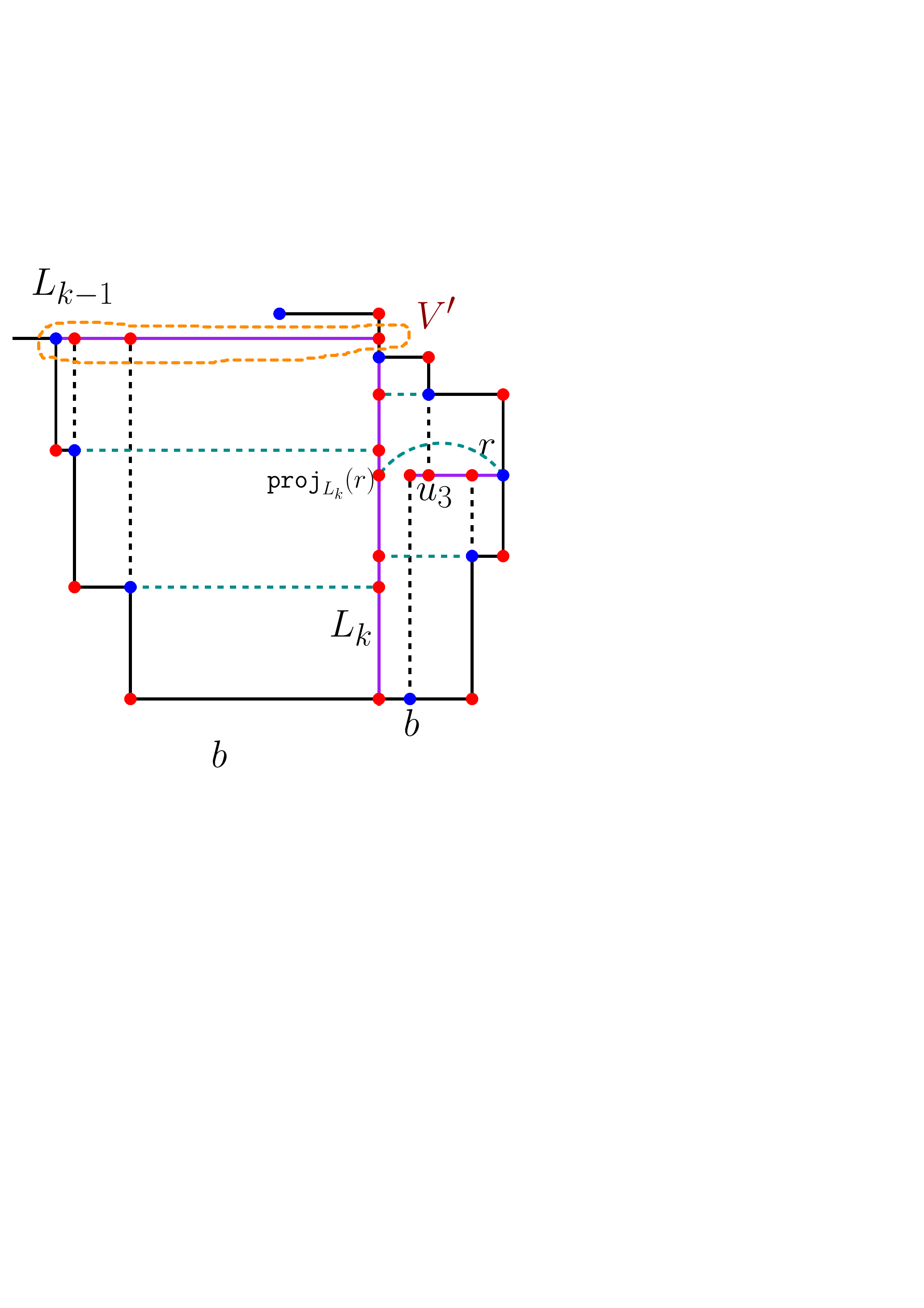}
			\label{}
		}
		\hspace{15mm}	
		\subfigure[]
		{
			\includegraphics[scale=0.31]{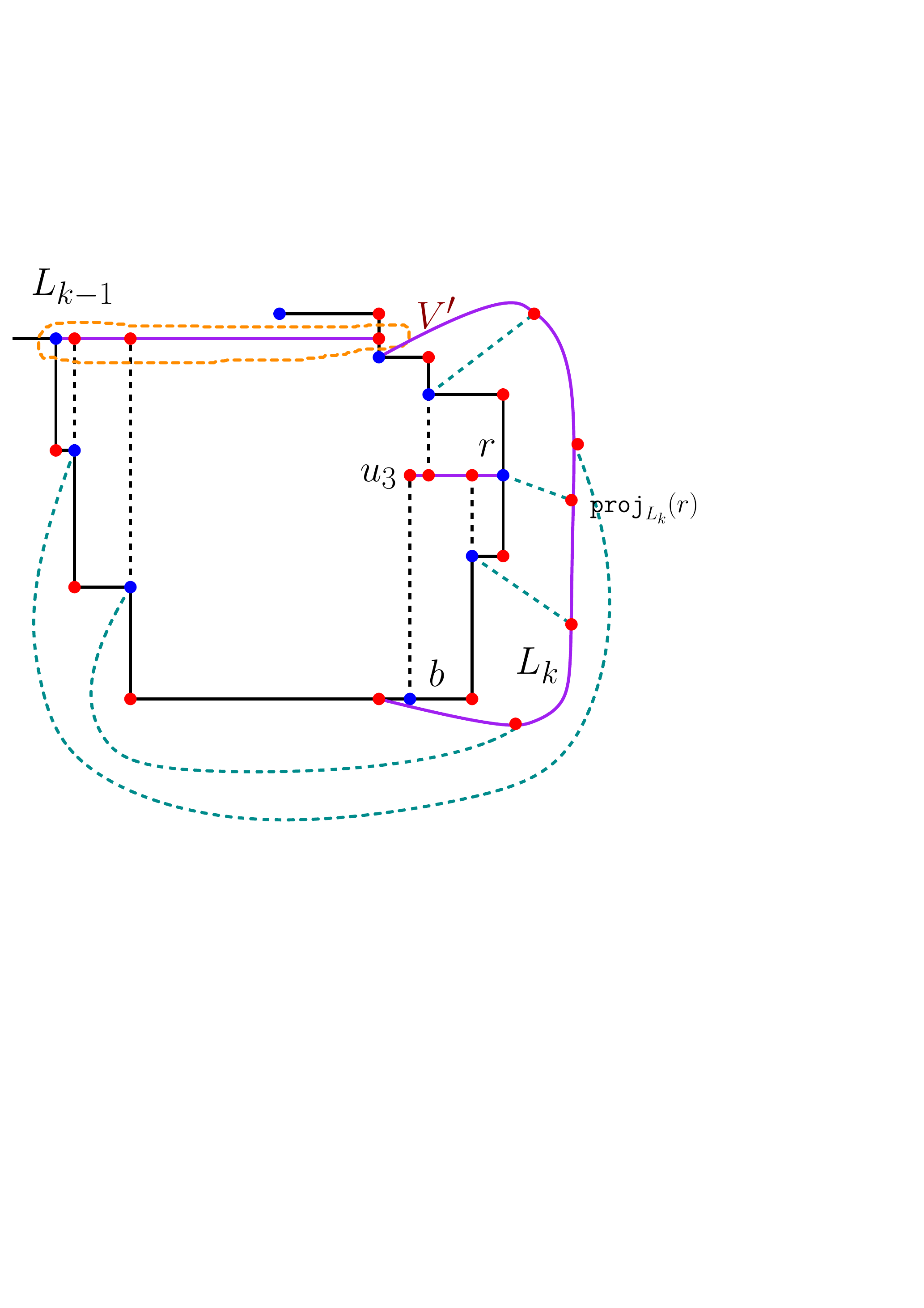}
			\label{}
		}
		
		\caption{(a) The subgraph $ \bm{K}  $ of $ G $ for \textbf{Type 2} with the exterior face containing $ V' $. (b) planar embedding of $ \bm{K} $. The edges of $E_k$ are shown by dashed cyan segment.}
		\label{fig-k2}
	\end{figure}

	\begin{lemma}
		$ K $ has a planar embedding $ K' $ such that $ V' $ is contained in the exterior face of   $ K'$.
	\end{lemma}
	
	\begin{proof} Let $ V^f_k \subseteq V$ be the set of vertices in $ G $ along the exterior face of $ K $. So $ V^f_k = (V^f \cap V(K)) \cup V'$. For Type 1, let $ E_b $ be the set of horizontal edges that have at least one adjacent vertex on the segment $ e'_2 =\overline{b~\mathtt{proj}_{L_{k-1}}(b)}$. In this case, we draw the edges $ E_b $ in the exterior face of $ K $ in such a way that we obtain a planar embedding of $ K $.  In the planar embedding, all Steiner points on the line segment $\overline{bu_2}$ will go to the exterior of the polygon along with its adjacent edges. For Type 2, let $ E_k $ be the set of horizontal edges that have at least one adjacent vertex on the line $ L_k $. In this case, we draw the edges $ E_k $ in the exterior faces of $ K $ in such a way that we obtain a planar embedding of $ K $. In the embedding, all Steiner points on the line segment $L_k$ will go to the exterior of the polygon along with its adjacent edges. In this planar embedding, $ V' $ still remains in the exterior face. Hence, we get a planar embedding $ K'$ of $ K $ such that $ V' $ is contained in the exterior face of $ K' $.  For an illustration see \cref{fig-k1} and  \cref{fig-k2}.		
	\end{proof}

	\begin{lemma}\label{theo_planarH}
		$ H $ has a planar embedding $ H'$ such that $ V' $ is contained in the exterior face of $ H'$.
	\end{lemma}
	
	\begin{proof} We prove this by weak induction. As $ L_{k-1} $ is horizontal, $ (k-1) $ must be even. Let $ (k-1)=2m $ for some $ m \in \mathbb{N} $. Let $ V_i $ consists of all the vertices in $ G $ on the line segment $ L_{2i} $ and $ G_i $ be the subgraph induced by the vertices lying on or above the line segment $ L_{2i}, $ where $ 2i \leq (k-1) $. So $ G_m= H $. By induction, we prove that $ G_m $ is planar and it has a planar embedding $ H'$ such that $ V' $ is contained in the exterior face of $ H'$. Let $ P(i)$ be the following statement:
		$G_i $ is planar and it has a planar embedding $ G'_{i} $ such that $ V_i $ is contained in the exterior face of  $ G'_{i} $.			
		Now we need to show $ P(m) $ is true. We first show that the base case is true. Next we show the inductive step.
		\begin{figure}[ht!]
			\centering
			\subfigure[]
			{
				\includegraphics[scale=.35]{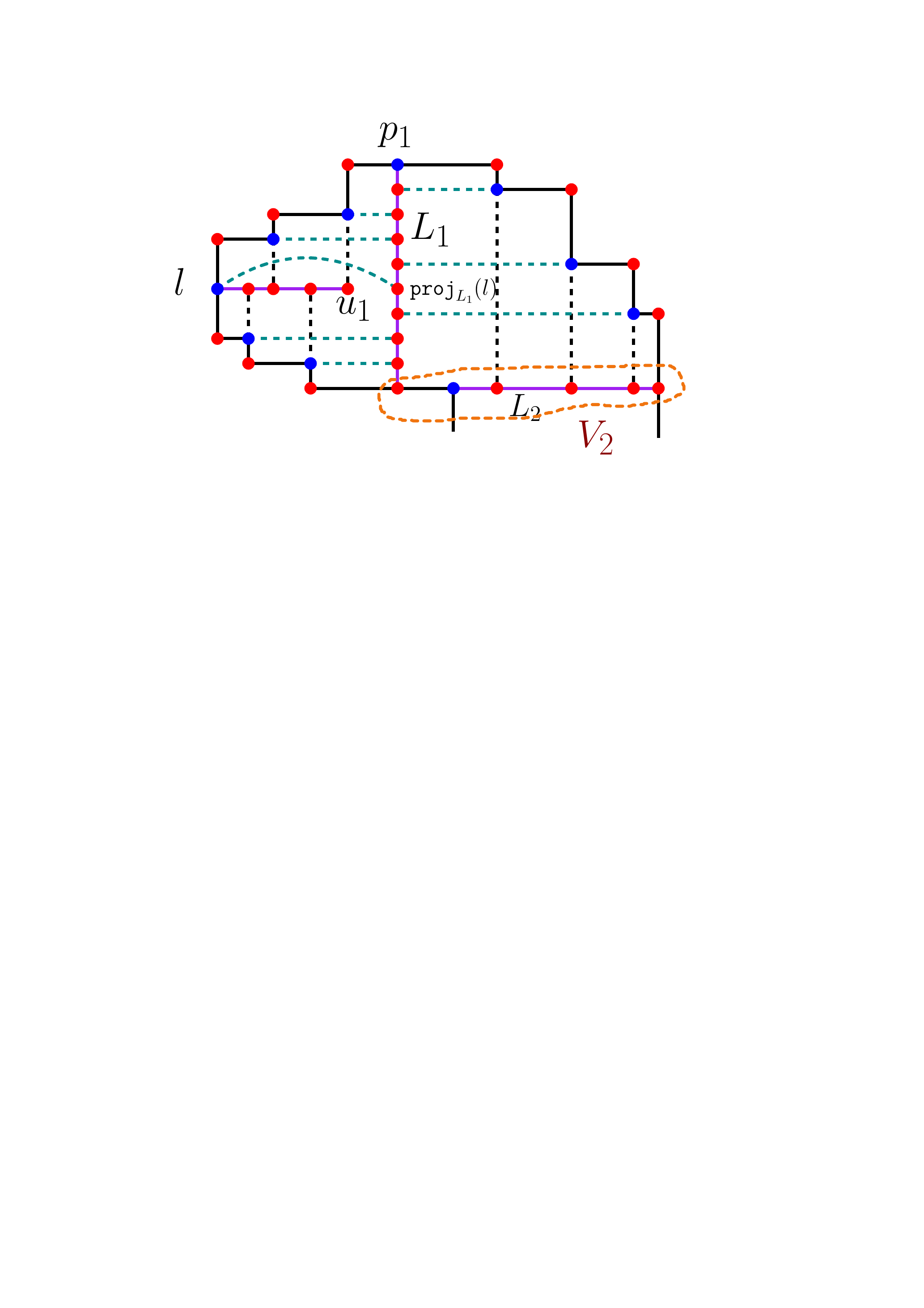}
				\label{fig_g1}
			}
			\hspace{2mm}
			\subfigure[]
			{
				\includegraphics[scale=.35]{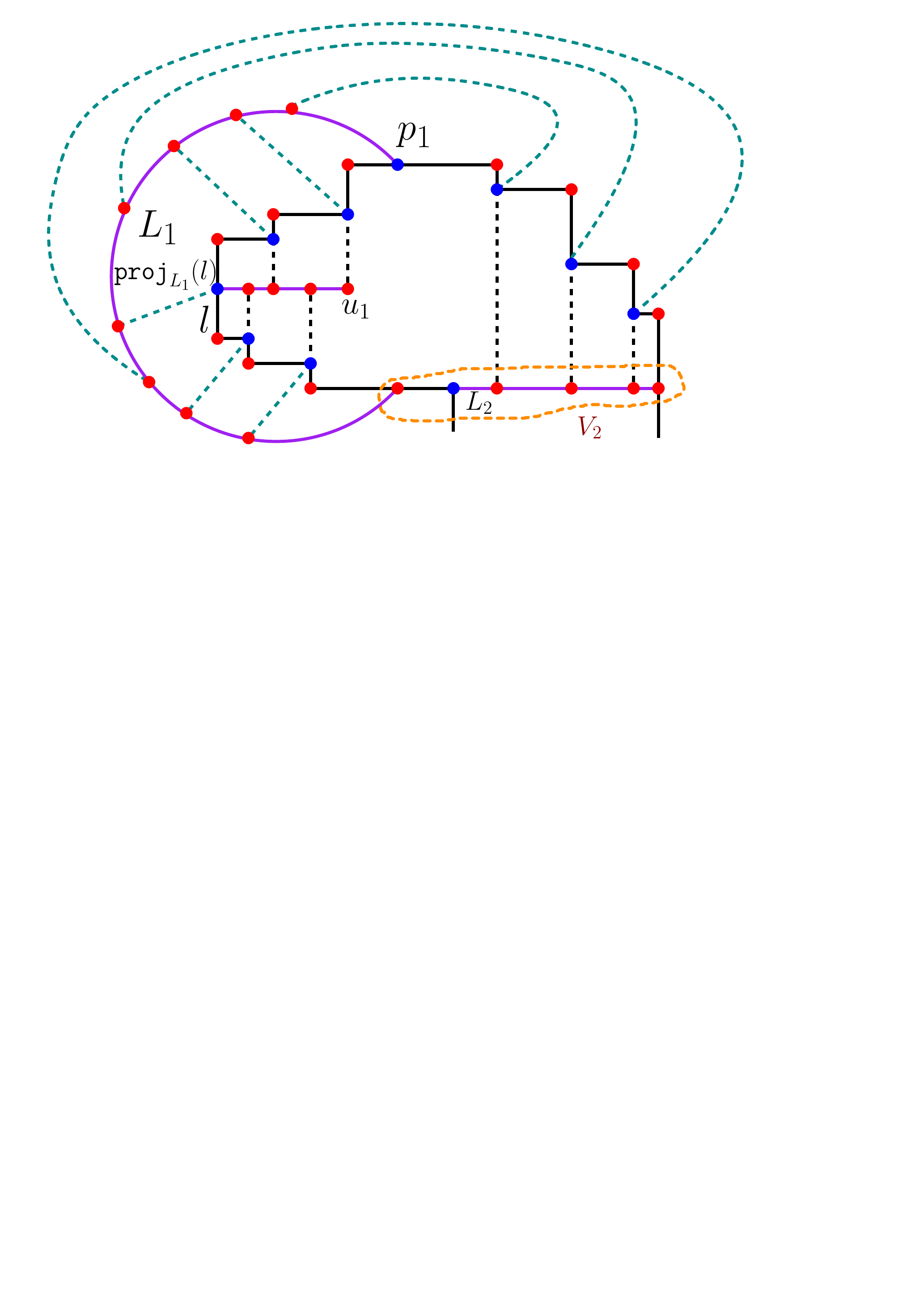}
				\label{fig_g11}
			}
			
			\caption[Optional caption for list of figures]{(a) The graph $ G_1 $. (b) A planar embedding of $ G_1$  with the exterior face containing $ V_2 $. }
			\label{fig_planeg1}
		\end{figure}

		\noindent \underline{\textbf{Base Case:}} $ P(1) $ is true:	  	 We divide the edges of $ G_1$ into three sets $ E_{11}, E_{12},$ and $ E_{13} $. $ E_{11} $ is the set of edges in $ G_1 $ that are along the boundary of the exterior face of $ G_1 $. $ E_{12} $ consists of all the edges in $ G_1 $ that have one endpoint on the segment $ L_1 $.  $ E_{13}= E(G_1) \setminus (E_{11} \cup E_{12}) $. Let $ G_{11} $ be the subgraph of $ G_1$ consisting of the edges $ E_{11} \cup E_{12}, $ and $ G_{12} $ be the subgraph of $ G_1 $ consisting of the edges $ E_{11} \cup E_{13} $. So $ G_1= G_{11} \cup  G_{12} $, where both $ G_{11}$ and $G_{12}$ are plane graphs. In $G_{11}  $ there exists an interior face containing $ V_1 $. Let $V^f_{12}$ be the set of vertices in the exterior face of $ G_{12} $. By \cref{theo_planar}, we can transform the planar embedding $ G_{12} $ into another planar embedding $G'_{12} $ such that there exists an interior face,  say $ f_1 $, that contains $V^f_{12}$. As  $V^f_{12}$ is the set of vertices in the exterior face of $ G_{11}, $ so we can attach $ G_{11} $ in $ f_{1} $ and obtain a planar embedding $ G''_1 $ of $ G_1 $. In $ G''_1 $ there exists an interior face containing $ V_1 $. Applying \cref{theo_planar}, we get a planar embedding $ G'_1$ of $ G_1 $ such that $ V_1 $ is contained in the exterior face of $ G'_1$.	 We illustrate this step in \cref{fig_planeg1}.
		
		\begin{figure}[ht!]
			\centering
			\subfigure[]
			{
				\includegraphics[scale=.3]{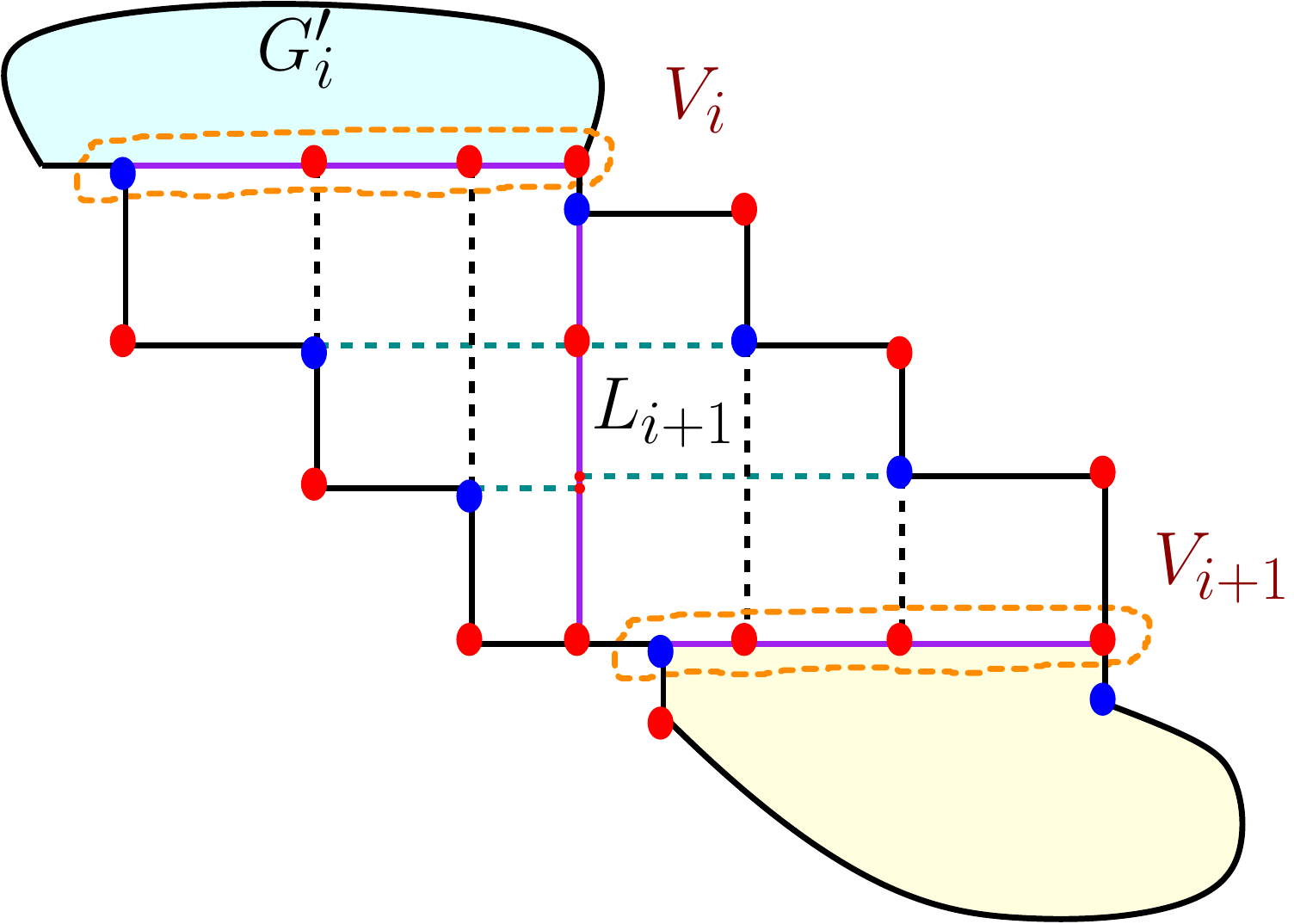}
				\label{fig-i1}
			}
			\hspace{2mm}
			\subfigure[ ]
			{
				\includegraphics[scale=.3]{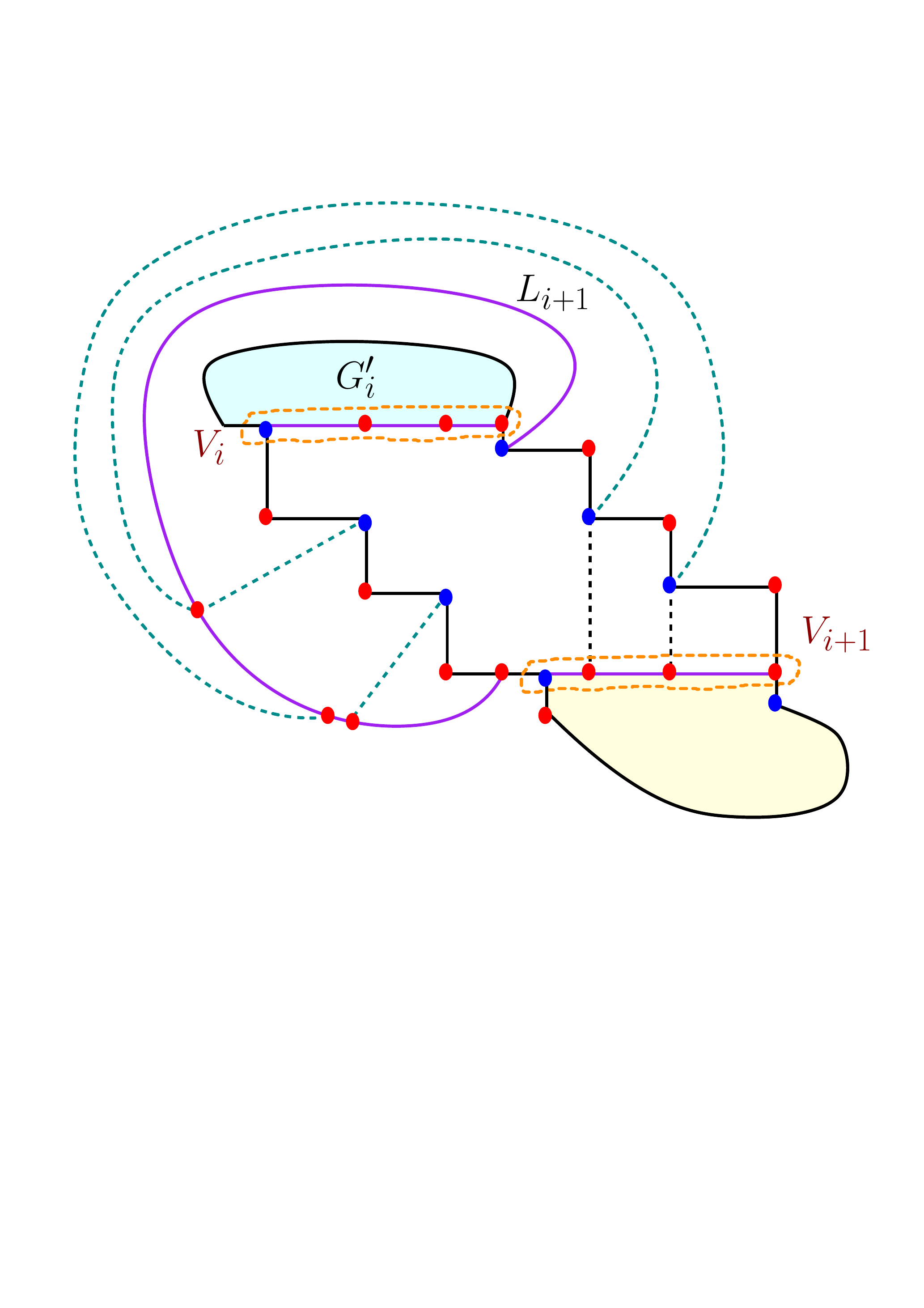}
				\label{fig-i2}
			}
			
			\caption[Optional caption for list of figures]{(a) The graph $ G $ with  planar embedding $G'_i$ having exterior face containing $ V_i $. (b) The graph $ G $ with  planar embedding $G'_{i+1}$ having exterior face containing $ V_{i+1} $.}
			\label{fig-in}
		\end{figure}
	
		\noindent \underline{\textbf{Inductive Case:}} $ P(i) $ is true $ \Rightarrow $ $ P(i+1) $ is true:	   Assume that $ P(i) $ is true, i.e., $ G_i $ has a planar embedding $ G'_{i} $ such that $ V_i $ is contained in the exterior face of  $ G'_{i} $ (see \cref{fig-in}).

		Let $ H_1 $ be the subgraph of $ G_{i+1} $ induced by the vertices lying on or below the line containing $ L_{2i}$. Now $ V_i= G'_i \cap H_1 $, also $ V_i $ is contained in the exterior face of  $ G'_{i} $. As $ G_{i+1}=G_i \cup H_1 $ so in $ G_{i+1} $, we can replace $ G_i $ by its planar embedding $ G'_i $. Now $ G_{i+1}=G'_i \cup H_1 $. Now we divide the edges of $ G_{i+1}$ into three sets $ E_{(i+1)1},E_{(i+1)2},E_{(i+1)3} $. $E_{(i+1)1}$ consists of edges in $ G_{i+1} $ that are along the boundary of the exterior face of $ H_1 $. $E_{(i+1)2}$ consists of edges in $ H_1 $ that have one endpoint on the line containing $ L_{2i+1} $.  $ E_{(i+1)3}= E(H_{1}) \setminus (E_{(i+1)1} \cup E_{(i+1)2}) $. Let $G_{(i+1)1}$ be the subgraph of $ G_{i+1} $ consisting of the edges $ E_{(i+1)1} \cup E_{(i+1)2} \cup E(G'_{i}), $ and $ G_{(i+1)2} $ be the subgraph of $ G_{i+1} $ consisting of the edges $ E_{(i+1)1} \cup E_{(i+1)3} \cup E(G'_{i})$. So $ G_{i+1}= G_{(i+1)1} \cup  G_{(i+1)2},$ where both $ G_{(i+1)1}$ and $G_{(i+1)2}$ are plane graphs. In $  G_{(i+1)1}$ there exists an interior face containing $ V_{i+1} $. Let $V^f_{(i+1)2}$ be the set of vertices in the exterior face of $  G_{(i+1)2} $. By \cref{theo_planar}, we can transform the planar embedding $  G_{(i+1)2} $ into another planar embedding $  G'_{(i+1)2} $ such that there exists an interior face,  say $ f $, that contains $V^f_{(i+1)2}$. As  $V^f_{(i+1)2}$ is also the set of vertices in the exterior face of $  G_{(i+1)1} $, so we can    attach $  G_{(i+1)1} $ in $ f $ and obtain a planar embedding $  G''_{i+1} $ of $  G_{i+1} $. In $  G''_{i+1} $  there exists an interior face containing $ V_{i+1} $. Applying \cref{theo_planar}, we get our desired planar embedding $ G'_{i+1}$ of $ G_{i+1} $ such that $ V_{i+1}$ is contained in the exterior face of $  G'_{i+1} $.

		Now by the induction hypothesis, $ P(m) $ is true, i.e., $G_m $ is planar and it has a planar embedding $ G'_{m} $ such that $ V_m $ is contained in the exterior face of  $ G'_{m} $. Now $ V_m $ consists of all the vertices on the line $ L_{2m} $. Now $ 2m=k $ implies that $ V_m=V' $. Also $ G_m=H $. So $ H $ has a planar embedding $ H'(=G'_m)$ such that $ V' $ is contained in the exterior face of $ H'$. 
	\end{proof}
	\noindent \underline{\textbf{Case 2. $ \bm{L_{k-1} }$ is vertical$\bm{\colon}$}}  
	Proof of the planarity of $G$ for this case is similar to Case 1.	When $ L_{k-1} $ is vertical, we partition $ G $ into $ H $ and $ K $ as follows:  $ H $ and $ K $ are the subgraphs of $ G $ induced by the vertices lying to the left and right, respectively of the line $ L_{k-1} $.    Both $ H $ and $ K $ must include the vertices on $ L_{k-1} $. Here, we only prove planarity for $ K $. The rest of proof  is similar to case 1.
	
	\begin{figure}[ht!]
		\hspace{5mm}
		\subfigure[]
		{
			\includegraphics[scale=0.35]{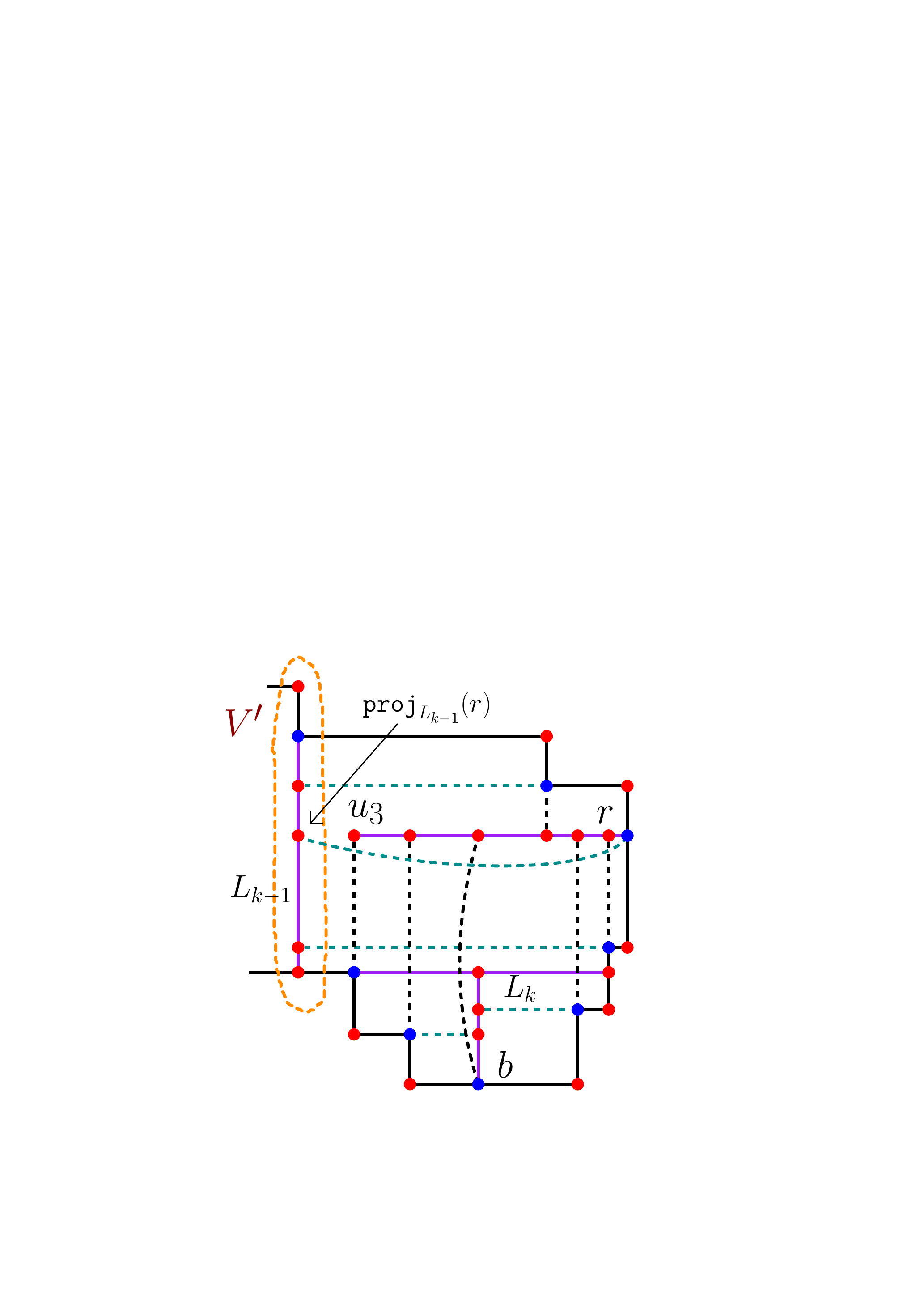}
			\label{}
		}
		\hspace{15mm}	
		\subfigure[]
		{
			\includegraphics[scale=0.35]{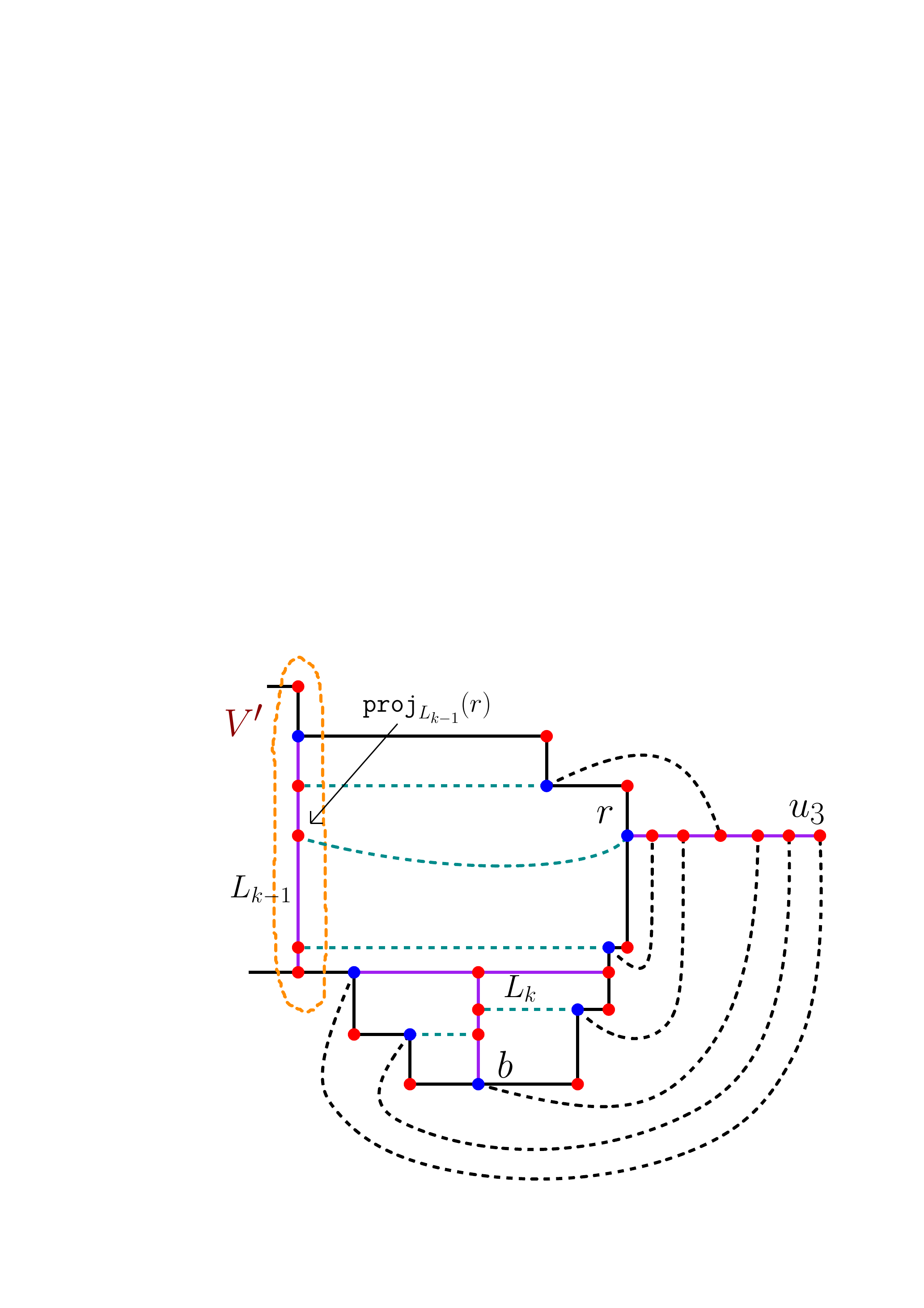}
			\label{}
		}
		
		\caption{ (a) The subgraph $ \bm{K} $ of $ G $ for \textbf{Type 3} with the exterior face containing $ V' $. (b) Planar embedding of $ \bm{K} $. The edges of $E_r$ are shown by dashed black segment.}
		\label{fig-k3}
	\end{figure}
	
	\begin{figure}[ht!]
		\hspace{5mm}
		\subfigure[]
		{
			\includegraphics[scale=0.35]{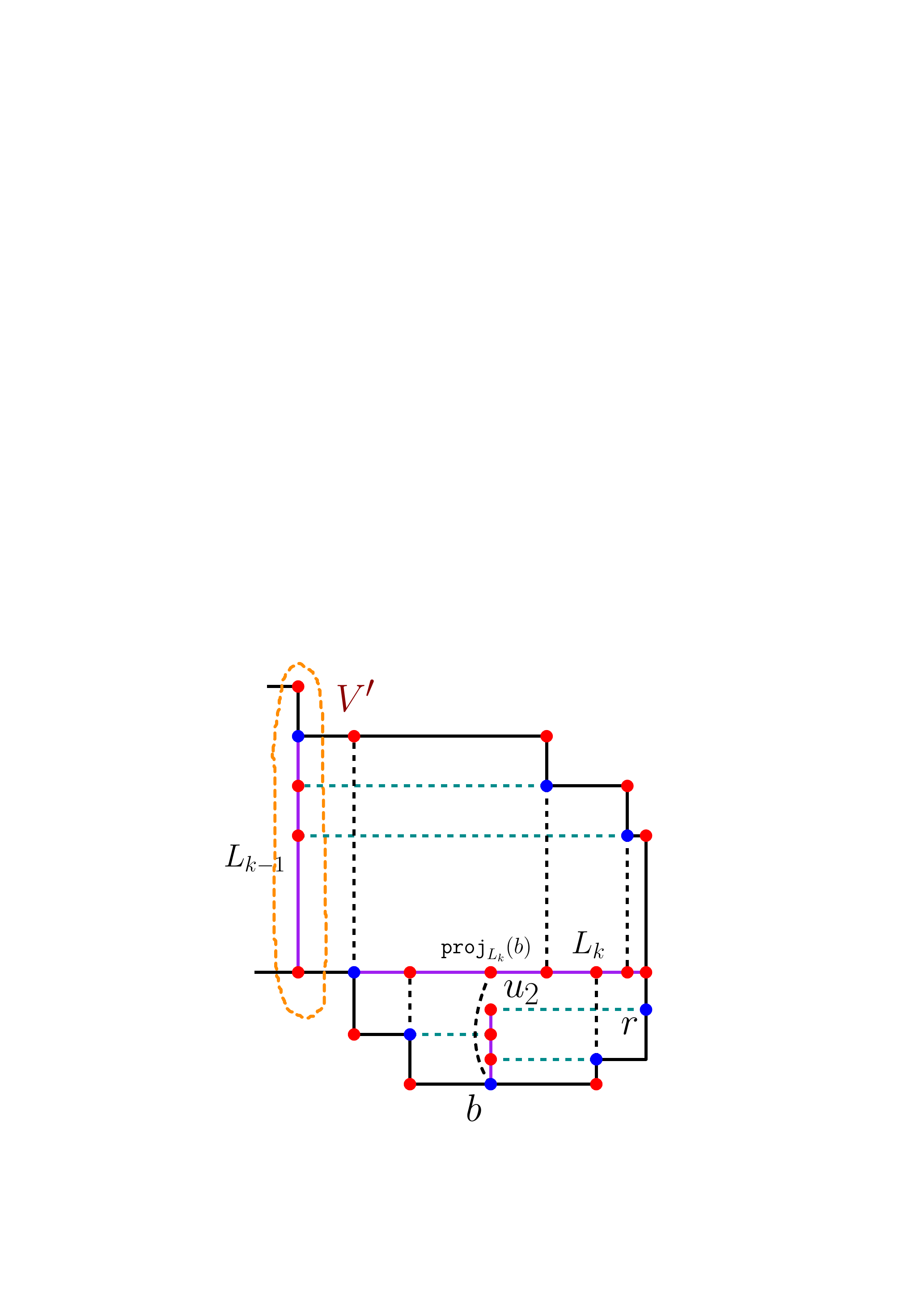}
			\label{}
		}
		\hspace{15mm}	
		\subfigure[]
		{
			\includegraphics[scale=0.35]{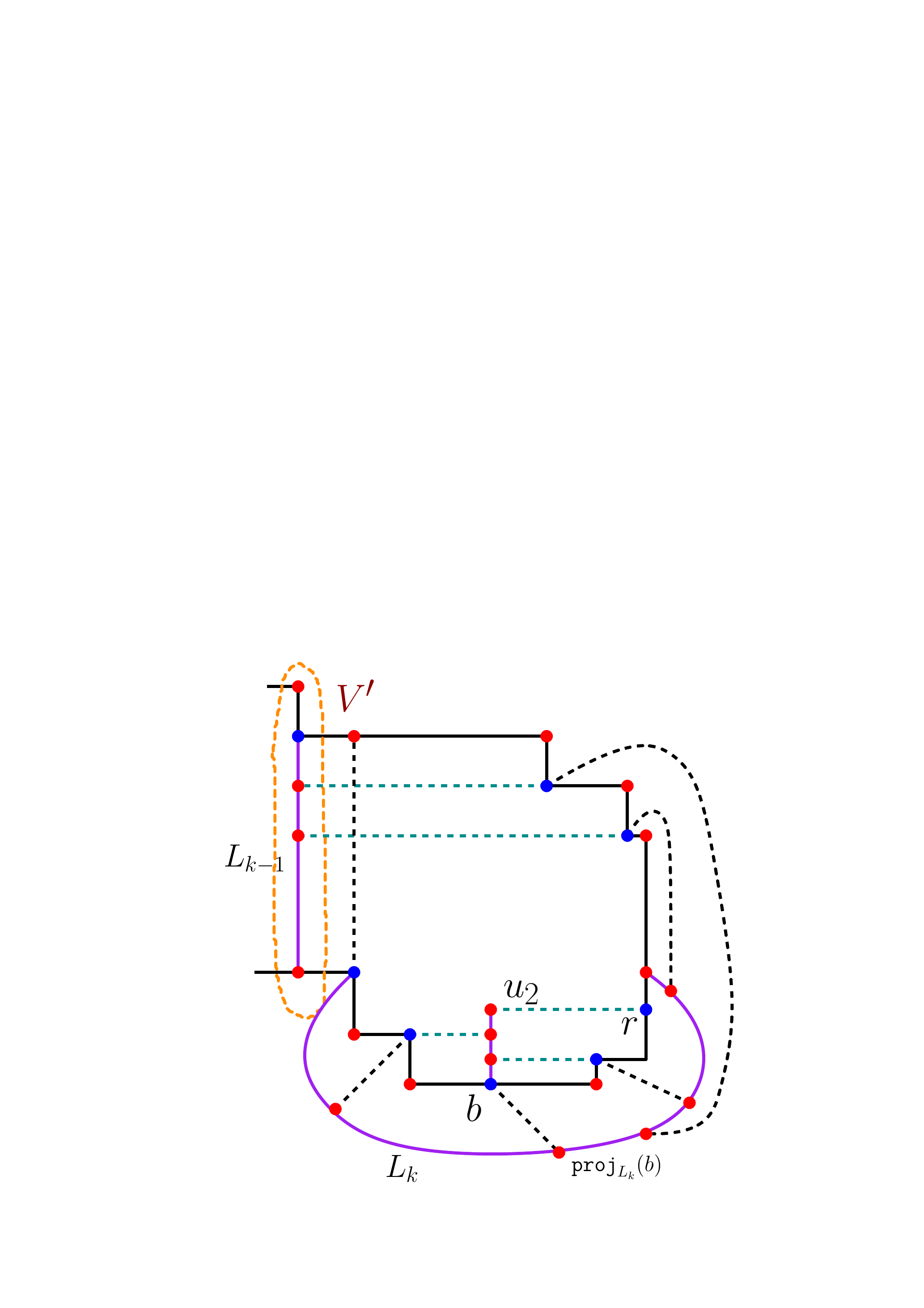}
			\label{}
		}
		
		\caption{(a) The subgraph $ \bm{K} $ of $ G $ for\textbf{ Type 4} with the exterior face containing $ V' $. (b) Planar embedding of  $ \bm{K} $. The edges of $E_k$ are shown by dashed black segment.}
		\label{fig-k4}
	\end{figure}
	
	\begin{lemma}
		$ K $ has a planar embedding $ K'$ such that $ V' $ is contained in the exterior face of   $ K'$.
	\end{lemma}
	
	\noindent{\emph{Proof.}} Let $ V^f_k \subseteq V$ be the set of vertices in $ G $ along the exterior face of $ K $. So $ V^f_k = (V^f \cap V(K)) \cup V'$. For Type 3, let $ E_r $ be the set of vertical edges that have at least one adjacent vertex on the line $ e'_3= \overline{r~\mathtt{proj}_{L_{k-1}}(r)} $. In this case, we draw the edges $ E_r $ in the exterior faces of $ K $ in such a way that we obtain a planar embedding of $ K $. In the embedding, all Steiner points on the line segment $\overline{ru_3}$ will go to the exterior of the polygon along with its adjacent edges (see \cref{fig-k3}). For Type 4, let $ E_k $ be the set of vertical edges that have at least one adjacent vertex on the line $ L_k $. In this case, we draw the edges $ E_k $ in the exterior faces of $ K $ in such a way that we obtain a planar embedding of $ K $. In the embedding, all Steiner points on the line segment $L_k$ will go to the exterior of the polygon along with its adjacent edges (see \cref{fig-k4}). In this planar embedding $ V' $ still remains in the exterior face. Hence, we get a planar embedding $ K'$ of $ K $ such that $ V' $ is contained in the exterior face of $ K' $.

\end{proof}

\section{Conclusion}\label{sec-con}
In this paper, we construct a \planar~$G$ for a  given convex point set $S$ of size $ n $ in linear time, where $ G $ contains  $ \mathcal{ O}(n) $  Steiner points. Our construction works for more general point set where it is possible to construct an ortho-convex polygon $ \mathcal{OCP}(S) $ such that $S$ lies on the boundary of $ \mathcal{OCP}(S) $. For example, any convex point set satisfies the aforesaid property. It is also clear that there exists convex point set $S$ for which \planar~$G$ needs $\Omega(n)$ Steiner points. Let $S=\{(1,1),(2,2),\ldots,(n,n)\}$ be a convex point set of size $ n $. Then $S$ would need $\Omega(n)$ Steiner points. In that sense, our construction is optimal for convex point sets. As a corollary of our construction, for a convex point set, we obtain a $\sqrt2 (\sim 1.41)$ planar spanner in $L_2$ norm using $\mathcal{O}(n)$ Steiner points. It remains an open question, if it is possible to construct a \planar~for general point sets using subquadratic number of Steiner points.

\section*{References}

\bibliography{cgta}

\end{document}